\DeclareMathAlphabet{\pazocal}{OMS}{zplm}{m}{n}
\newenvironment{sketch}{}{}
\newenvironment{fullproof}{}{}
\def\qed {{                
   \parfillskip=0pt        
   \widowpenalty=10000     
   \displaywidowpenalty=10000  
   \finalhyphendemerits=0  
                           %
   \leavevmode             
   \unskip                 
   \nobreak                
   \hfil                   
   \penalty50              
   \hskip.2em              
   \null                   
   \hfill                  
   $\square$
                           %
   \par}}                  
\newcommand{\mycomment}[1]{{\color{gray}\par\smallskip\fbox{\parbox{\linewidth}{#1}}\par\smallskip}}
\newcommand{\interval}[1]{{\ensuremath{\mathord{\text{\normalfont\fontfamily{lmtt}\selectfont{}#1}}}}}
\newcommand{\NONE}{\interval{0}\xspace}
\newcommand{\ONE}{\interval{1}\xspace}
\newcommand{\MAYBE}{\interval{?}\xspace}
\newcommand{\MANY}{\interval{*}\xspace}
\newcommand{\PLUS}{\interval{+}\xspace}
\newcommand{\Rbold}{\mathbf{R}}
\newcommand{\Rcal}{\mathcal{R}}
\newcommand{\fd}{\mathrm{fd}}
\newcommand{\Iri}{\mathsf{Iri}}
\newcommand{\NullIri}{\mathsf{NullIri}}
\newcommand{\Lit}{\mathsf{Lit}}
\newcommand{\NullLit}{\mathsf{NullLit}}
\newcommand{\Pred}{\mathsf{Pred}}
\newcommand{\nodes}{\mathit{nodes}}
\newcommand{\Sbold}{\mathbf{S}}
\newcommand{\Tcal}{\mathcal{T}}
\newcommand{\Literal}{\mathit{Literal}}
\newcommand{\typing}{\mathit{typing}}
\newcommand{\dbl}{\mathbin{::}}
\newcommand{\Triple}{\mathit{Triple}}
\newcommand{\TP}{\ensuremath{\textsf{\textbf{TP}}}\xspace}
\newcommand{\PF}{\ensuremath{\textsf{\textbf{PF}}}\xspace}
\newcommand{\PE}{\ensuremath{\textsf{\textbf{PE}}}\xspace}
\newcommand{\Fbold}{\mathbf{F}}
\newcommand{\Fcal}{\mathcal{F}}
\newcommand{\E}{\mathcal{E}}
\newcommand{\st}{\mathrm{st}}
\newcommand{\sol}{\mathit{sol}}
\newcommand{\tp}{\TP}
\newcommand{\pf}{\PF}
\newcommand{\sem}[1]{\llbracket #1 \rrbracket}
\newcommand{\simulates}{\twoheadrightarrow}
\newcommand{\bisimulates}{{\twoheadleftarrow\!\!\!\!\!\twoheadrightarrow}}
\newcommand{\U}{\mathcal{U}}
\newcommand{\Req}{\mathit{Req}}
\newcommand{\types}{\mathit{types}}
\newcommand{\Fbb}{\mathbb{F}}
\newcommand{\Q}{\mathcal{Q}}
\newcommand{\Acc}{\operatorname{\mathit{Acc}}}
\newcommand{\Rev}{\operatorname{\mathit{Rev}}}
\newcommand{\Eq}{\operatorname{\mathit{Eq}}}
\newcommand{\Cont}{\operatorname{\mathit{Cont}}}
\newcommand{\rr}{r\xspace}
\newcommand{\p}{p\xspace}
\newcommand{\q}{q\xspace}
\title{Consistency and Certain Answers in Relational to RDF Data Exchange with Shape Constraints}
\author{Iovka Boneva, S\l{}awek Staworko, and Jose Lozano}
\begin{document}

\maketitle
\begin{abstract}
  We investigate the data exchange from relational databases to RDF graphs
  inspired by R2RML with the addition of target shape schemas. We study the
  problems of \emph{consistency} i.e., checking that every source instance
  admits a solution, and \emph{certain query answering} i.e., finding answers
  present in every solution. We identify the class of \emph{constructive
    relational to RDF data exchange} that uses IRI constructors and full tgds (with no existential variables) in its source to
  target dependencies. We
  show that the consistency problem is coNP-complete. We introduce the notion of
  \emph{universal simulation solution} that allows to compute certain query answers to any class of
  queries that is \emph{robust under simulation}. One such class are nested
  regular expressions (NREs) that are \emph{forward} i.e., do not use the
  inverse operation. Using universal simulation solution renders tractable
  the computation of certain answers to forward NREs (data-complexity). Finally, we
  present a number of results that show that relaxing the restrictions of the
  proposed framework leads to an increase in complexity.
\end{abstract}

\section{Introduction}
\label{sec:introduction}
The recent decade has seen RDF raise to the task of interchanging data between
Web applications~\cite{michel:2013}.  In many applications the data is stored in
a relational database and only exported as RDF, as evidenced by the
proliferation of languages for mapping relational databases to RDF, such as
R2RML \cite{das:2011}, Direct Mapping \cite{dmlDB2RDF} or YARRRML~\cite{yarrrml}. As an example, consider the
following R2RML mapping, itself an RDF presented in turtle syntax
\begin{small}%
\begin{align*}
&\textsf{<\#EmpMap>}\\[-4pt]
&\quad\textsf{rr:logicalTable  [\ rr:sqlQuery\ "SELECT id, name, email FROM Emp NATURAL JOIN Email" ];}\\[-4pt]
&\quad\textsf{rr:subjectMap [ 
     rr:template "emp:\{id\}"; 
     rdf:type :TEmp
  ];}\\[-4pt]
&\quad\textsf{rr:predicateObjectMap [
     rr:predicate :name;
     rr:objectMap [ rr:column "name"]
  ];}\\[-4pt]
&\quad\textsf{rr:predicateObjectMap [
     rr:predicate :email;
     rr:objectMap [ rr:column "email"]
  ].}
\end{align*}%
\end{small}%
It exports the join of two relations
$\mathit{Emp}(\underline{\mathit{id}},\mathit{name})$ and
$\mathit{Email}(\underline{\mathit{id}},\mathit{name})$ into a set of
triples. For every employee it creates a dedicated Internationalized Resource
Identifier (IRI) consisting of the prefix \texttt{emp:} and the employee
identifier. More importantly, the class (\texttt{rdf:type}) of each employee IRI is declared
as \textsf{:TEmp}.

RDF has been originally proposed schema-less to promote its adoption but the
need for schema languages for RDF has been since identified~\cite{w3c:2013,
  gayo:17}. One of the benefits of working with data conforming to a schema is
an increased execution safety: applications need not to worry about handling
malformed or invalid data that could otherwise cause undesirable and difficult
to predict side-effects. One family of proposed schema formalisms for RDF is
based on \emph{shape constraints} and this class includes \emph{shape
  expressions schemas} (ShEx) \cite{shexspec,staworko:2015a,boneva:hal-01590350}
and \emph{shape constraint language} (SHACL) \cite{shaclspec,shacl-rec}. The two
languages allow to define a set of types that impose structural constraints on
nodes and their immediate neighborhood in an RDF graph. For instance, the type
$\textsf{:TEmp}$ has the following ShEx definition
\begin{small}%
\begin{align*}
  &\textsf{:TEmp\quad\{\ 
    :name\ xsd:string;\ :email\ xsd:string\MAYBE;\ :works\ @:TDept\PLUS\ \}}
\end{align*}%
\end{small}%
Essentially, every employee IRI must have a single $\textsf{:name}$ property, an
optional $\textsf{:email}$ property, and at least one $\textsf{:works}$
property each leading to an IRI satisfying type $\textsf{:TDept}$.

In the present paper we formalize the process of exporting a relational database
to RDF as \emph{data exchange}, and study two of its fundamental problems:
\emph{consistency} and \emph{certain query answering}. In data exchange the
mappings from the source database to the target database are modeled with
\emph{source-to-target tuple-generating dependencies} (st-tgds). For mappings
defined with R2RML we propose a class of \emph{constructive} st-tgds, which use
\emph{IRI constructors} to map entities from the relational database to IRIs in
the RDF. For instance, the R2RML mapping presented before can be expressed with
the following st-tgd
\begin{align*}
\mathit{Emp}(\mathit{id},\mathit{name})\land
\mathit{Email}(\mathit{id},\mathit{email})\Rightarrow{}
&\mathit{Triple}(\mathit{emp2iri}(id),\textsf{:name},\mathit{name})\land{}\\
&\mathit{Triple}(\mathit{emp2iri}(id),\textsf{:email},\mathit{email})\land{}\\
&\mathsf{TEmp}(\mathit{emp2iri}(id)),
\end{align*}%
where $\mathit{emp2iri}$ is an IRI constructor that generates an IRI for each
employee. The above tgd is \emph{full} i.e., it does not use existential
quantifiers. To isolate the concerns, in our analysis of the st-tgds we refrain
form inspecting the definitions of IRI constructors and require only that they
are \emph{non-overlapping}, i.e. no two IRI constructors are allowed to output
the same IRI. We focus on full constructive st-tgds used with a set of
non-overlapping IRI constructors and call this setting \emph{constructive
  relational to RDF data exchange}. We report that in this setting all 4 use
cases of R2RML~\cite{usecaseR2RML} can be expressed. Furthermore, we can cover
38 out of 54 test cases for R2RML implementations~\cite{testR2RML}: 9 test cases
use pattern-based function to transform data values and 7 test cases use SQL
statements with aggregation functions. In fact, our assessment is that the
proposed framework allows to fully address all but one out of the 11 core
functional requirements for R2RML~\cite{usecaseR2RML}, namely the \textsl{Apply
  a Function before Mapping}. Finally, in our investigations we restrict our
attention to class of \emph{deterministic shape schemas} that are at the
intersection of ShEx and SHACL, are known to have desirable computational
properties while remaining practical, and posses a sought-after feature of
having an equivalent graphical representation (in the form of shape
graphs)~\cite{Staworko:2019}.

For a given consistent source relational instance, a \emph{solution} to data
exchange is a target database (an RDF graph in our case) that satisfies the
given set of st-tgds and the target schema (a shape schema in our case). The
number of solutions may vary from none to infinitely many. The problem of
\emph{consistency} is motivated by the need for static verification tools that
aim to identify potentially erroneous data exchange settings: a data exchange
setting, consisting from the source schema, the set of st-tgds, and the target
schema is \emph{consistent} iff every consistent source database instance admits
a solution. Because many solutions may be possible, the standard \emph{possible
  word semantics}~\cite{Imielinski:1984,Abiteboul89onthe} is applied when
evaluating queries: a \emph{certain answer} to a query over the target schema is
an answer returned in every solution. Consequently, one is inclined to construct
a solution that allows to easily compute certain answers. In the case of
relational data exchange, \emph{universal solutions} have been identified and
allow to easily compute certain answers to conjunctive queries, or any class of
queries preserved under homomorphism for that
matter~\cite{fagin:2005a}. Unfortunately, for relational to RDF data exchange
with target shape schema, a finite universal solution might not exists even if
the setting is consistent and admits solutions. Also, the class of conjunctive
queries, while adequate for expressing queries for relational databases, is less
so for RDF. Query languages, like SPARQL, allowing regular path expressions with
nesting have been proposed to better suit the needs of querying
RDF~\cite{PEREZ2010255}.

The list of contributions of the present paper follows. 
\begin{itemize}
\item We formalize the framework of relational to RDF data exchange with target
  shape schema and IRI constructors, and we identify the class of
  \emph{constructive relational to RDF data exchange} that uses deterministic
  shape schemas and full constructive source-to-target dependencies.
\item We provide an effective characterization of consistency of constructive
  relational to RDF data exchange settings and show that the problem is
  coNP-complete.
\item We show that allowing nondeterministic target schemas makes the
  consistency problem $\Pi_2^p$-hard. We also present a generalization of our
  consistency characterization to include st-tgds with existential quantifiers
  but the extension is no longer in coNP and the lower bound remains an open
  question.
\item We propose a novel notion of \emph{universal simulation solution} that can
  be constructed for any consistent constructive relational to RDF data exchange
  setting. It allows to easily compute certain answers to any query class that
  is robust under graph simulation. We also apply existing results on relational
  to relational data exchange setting to show tractability of computing certain
  answers to conjunctive queries.
\item We use the universal simulation solution to show tractability of computing
  certain answers to \emph{forward nested regular expressions}. For the full
  class of \emph{nested relational expressions} (NREs), considered to be the
  navigational core of SPARQL~\cite{PEREZ2010255}, we show an increase of complexity
  when computing certain answers. 
\end{itemize}
In \cite{boneva:18} we have studied the consistency problem for a more
restrictive fully-typed data exchange setting, where all constructed IRIs must
be typed. This restriction allowed to reduce the consistency problem to a simple
test of functional dependencies propagation over relational views. This
technique can no longer be employed for constructive data exchange setting,
where the constructed RDF nodes need not be typed, and to address it we propose
a novel and non-trivial technique. In \cite{boneva:18}, we have also not
considered certain query answering.

\paragraph{Organization} The paper is organized as follows. In
Section~\ref{sec:example} we introduce the constructive data exchange framework
with an illustrative example. In Section~\ref{sec:preliminaries} we recall basic
notions of relational and graph databases. In Section~\ref{sec:data-exchange} we
formalize the relational to RDF data exchange with IRI constructors and target
shape schema. In Section~\ref{sec:consistency-des} we study the problem of
consistency. In Section~\ref{sec:query-answering} we study certain query
answering. Section~\ref{sec:related} contains a discussion of related work. And
in Section~\ref{sec:future} we present conclusions and outline future work.

\section{Introductory Example}
\label{sec:example}
\newcommand{\Bug}{\ensuremath{\mathsf{TBug}}\xspace}
\newcommand{\User}{\ensuremath{\mathsf{TUser}}\xspace}
\newcommand{\Emp}{\ensuremath{\mathsf{TEmp}}\xspace}
\newcommand{\Test}{\ensuremath{\mathsf{TTest}}\xspace}
\newcommand{\Sys}{\ensuremath{\mathsf{TSys}}\xspace}
\newcommand{\col}{\mathord{\kern-1pt\interval{:}\kern-1.25pt}}
\newcommand{\mel}{\ensuremath{\col\mathsf{email}}\xspace}
\newcommand{\name}{\ensuremath{\col\mathsf{name}}\xspace}
\newcommand{\descr}{\ensuremath{\col\mathsf{descr}}\xspace}
\newcommand{\rep}{\ensuremath{\col\mathsf{rep}}\xspace}
\newcommand{\repr}{\ensuremath{\col\mathsf{repro}}\xspace}
\newcommand{\rel}{\ensuremath{\col\mathsf{related}}\xspace}
\newcommand{\phone}{\ensuremath{\col\mathsf{phone}}\xspace}
\newcommand{\follows}{\ensuremath{\col\mathsf{follows}}\xspace}
\newcommand{\tracks}{\ensuremath{\col\mathsf{tracks}}\xspace}
\newcommand{\grp}{\ensuremath{\col\mathsf{covers}}\xspace}
\newcommand{\prep}{\ensuremath{\col\mathsf{prepare}}\xspace}
\newcommand{\access}{\ensuremath{\col\mathsf{access}}\xspace}
\newcommand{\adm}{\ensuremath{\col\mathsf{admin}}\xspace}
\newcommand{\of}{\ensuremath{\col\mathsf{of}}\xspace}
\newcommand{\RBug}{\textit{Bug}\xspace}
\newcommand{\RUser}{\textit{User}\xspace}
\newcommand{\REmail}{\textit{Email}\xspace}
\newcommand{\RRel}{\textit{Rel}\xspace}
\newcommand{\RAcc}{\textit{Access}\xspace}
\newcommand{\uToI}{\textit{usr2iri}\xspace}
\newcommand{\pToI}{\textit{pers2iri}\xspace}
\newcommand{\bToI}{\textit{bug2iri}\xspace}
\newcommand{\sToI}{\textit{sys2iri}\xspace}
\newcommand{\tToI}{\textit{test2iri}\xspace}
\newcommand{\RSys}{\textit{Sys}\xspace}
\newcommand{\RSubs}{\textit{Subscr}\xspace}
\newcommand{\RTest}{\textit{Test}\xspace}
\newcommand{\RTrack}{\textit{Track}\xspace}

We illustrate the relational to RDF framework with the following example. We work
with a \emph{relational database} of software bug reports, presented in
Figure~\ref{fig:db}. Each bug is reported by a user and a bug may have a number
of related bugs. Each user may track a number of bugs.
\begin{figure}[htb]
  \centering
  \begin{tabular}[t]{cc}
    \multicolumn{2}{c}{\RUser}\\
    \uline{\textit{uid}}& \textit{name}  \\[2pt]
    \hline
    1 & Jose \\
    2 & Edith \\
  \end{tabular}
  \hfill
  \begin{tabular}[t]{cc}
    \multicolumn{2}{c}{\REmail}\\
    \uline{\textit{uid}}& \textit{email} \\[2pt]
    \hline
    1 & j@ex.com
  \end{tabular}
  \hfill
  \begin{tabular}[t]{ccc}
    \multicolumn{2}{c}{\RTrack}\\
    \uline{\textit{uid}}& \uline{\textit{bid}}\\[2pt]
    \hline
    1 &  1\\ 
    1 &  2
  \end{tabular}
  \hfill
	\begin{tabular}[t]{ccc}
     \multicolumn{3}{c}{\RBug}\\
     \uline{\textit{bid}}& \textit{descr} & \textit{uid}\\[2pt]
     \hline
     1 & Boom! & 1 \\
     2 & Kabang! & 1 \\
     3 & Bang! & 2 
	\end{tabular}
	\hfill
	\begin{tabular}[t]{cc}
     \multicolumn{2}{c}{\RRel}\\
     \uline{\textit{bid}}& \uline{\textit{rid}} \\[2pt]
     \hline
     2 & 1 \\
     1 & 3
	\end{tabular}
   \caption{Relational source database\label{fig:db}}
\end{figure}

We wish to export the contents of the above relational database to RDF for use
by an existing application. 
The application expects the RDF document to adhere to the following ShEx schema
(with \texttt{:} being the default prefix).
\begin{align*}
  &\textsf{:\Bug\hspace{10pt}\{\ \descr\ xsd:string;\ \rep\ @:\User;\ :rel\ @\Bug\ \MANY\ \}}\\
  &\textsf{:\User\hspace{8.5pt}\{\ \name\ xsd:string;\ \mel\ xsd:string;\ \tracks\ @\Bug\ \PLUS\ \}}
\end{align*}
This schema defines two types of nodes: $\Bug$ for bug reports and $\User$ for
user info. This ShEx schema happens to closely mimic the structure of the
relational database with two exceptions: the type $\User$ requires that every
user must track at least one bug and must have a single email while the
relational database is free of such constraints.

To assign an IRI to every user and every bug, we define two IRI constructors
using the intuitive syntax of subject patterns of R2RML (where \texttt{bug:} and
\texttt{usr:} are two IRI prefixes):
\begin{align*}
&\bToI(\mathit{bid}) = \texttt{"bug:\{}\mathit{bid}\texttt{\}"}&
&\uToI(\mathit{uid}) = \texttt{"usr:\{}\mathit{bid}\texttt{\}"}
\end{align*}
Now, the R2RML mapping is formalized using the following set full constructive
dependencies.
\begin{align*}
  \RBug(b,d,u) \Rightarrow{}& \Triple(\bToI(b),\descr,d) \land \Bug(\bToI(b)) \land{}  \\
                            &\Triple(\bToI(b),\rep,\uToI(u))\\
  \RRel(b_1,b_2) \Rightarrow{}& \Triple(\bToI(b_1),\rel,\bToI(b_2))\\
  \RUser(u,n) \Rightarrow{}& \Triple(\pToI(u),\name,n)\\
  \RUser(u,n) \land \RTrack(u,b)  \Rightarrow{}&\Triple(\uToI(u),\tracks,\bToI(b))\\
  \RUser(u,n) \land \REmail (u,e) \Rightarrow{}& \Triple(\uToI(u), \mel, e)	
\end{align*}
One possible solution to the task at hand is presented in Figure~\ref{fig:rdf}.
\begin{figure}[htb]
  \centering 
  \begin{tikzpicture}[>=latex,scale=1.125,semithick]  
    
    \node (d3) at (-4,2.3) {\it ``Kabang!''};
    \node (d1) at (-1.5,2.5) {\it  ``Boom!''};
    \node (d4) at (1.15,2.5) {\it ``Bang!''};
    
    \node (bug3) at (-4,1) {$\mathsf{bug\col 2}$};
    \node (bug1) at (-1.5,1.25) {$\mathsf{bug\col 1}$};
    \node (bug4) at (1.15,1) {$\mathsf{bug\col 3}$};
    
    \node (user1) at (-2.5,0) {$\mathsf{usr}\col\mathsf{1}$};
    
    \node (emp1) at (0.9,-0.15) {$\mathsf{usr\col 2}$};  
    
    \node (n1) at (-3.5,-1.5) {\it ``Jose''};
    \node (e1) at (-1.75,-1.5) {\it ``j@ex.com''};
    \node (n3) at (0.1,-1.55) {\it ``Edith89''};
    
    \node (e3) at (1.5,-1.55) {$\bot_1$};
    \node (bug2) at (4,1.25) {$\bot_2$};
    \node (d2) at (4,2.5) {$\bot_3$};
    \node (user2) at (3.75,0) {$\bot_4$};    
    \node (e2) at (3.1,-1.5) {$\bot_5$};
    \node (n2) at (4.7,-1.5) {$\bot_6$};

    \node[right] (TBug) at (-6.25,1.125) {\texttt{:}\Bug};
    \node[right] (TUser) at (-6.25,0) {\texttt{:}\User};

    \begin{scope}[thin,green!40!black!75!white,>=stealth']
      \draw[bend angle=10] (user1) edge[->,bend left] (TUser);
      \draw[bend angle=12] (emp1) edge[->,bend left] (TUser);
      \draw[bend angle=17.5] (user2) edge[->,bend left] (TUser);
      \draw[bend angle=15] (bug3) edge[->] (TBug);
      \draw[bend angle=12] (bug1) edge[->,bend right] (TBug);
      \draw[bend angle=16] (bug4) edge[->,bend right] (TBug);
      \draw[bend angle=18] (bug2) edge[->,bend right] (TBug);
    \end{scope}
    
    \draw (user1) edge[->] node[above,sloped] {\name} (n1);
    \draw (user1) edge[->] node[above,sloped] {\mel} (e1);
    \draw (user1) edge[->,bend left=30] node[below,sloped] {\tracks} (bug3);
    \draw (user1) edge[->,bend right=50] node[below,sloped] {\tracks} (bug1);
    \draw (user2) edge[->] node[above,sloped] {\name} (n2);
    \draw (user2) edge[->] node[above,sloped] {\mel} (e2);
    \draw (user2) edge[->,bend right=70] node[below,sloped] {\tracks} (bug2);
    
    \draw (emp1) edge[->] node[above,sloped] {\name} (n3);
    \draw (emp1) edge[->] node[above,sloped] {\mel} (e3);
    \draw (emp1) edge[->,bend left=5] node[above,sloped] {\tracks} (bug2);
    \draw (bug1) edge[->] node[above,sloped] {\rel} (bug4);
    \draw (bug3) edge[->] node[above,sloped] {\rel} (bug1);
    \draw (bug1) edge[->] node[above,sloped] {\rep} (user1);
    \draw (bug1) edge[->] node[above,sloped] {\descr} (d1);
    
    \draw (bug2) edge[->] node[above,sloped] {\rep} (user2);
    \draw (bug2) edge[->] node[above,sloped] {\descr} (d2);
    
    \draw (bug3) edge[->] node[above,sloped] {\rep} (user1);
    \draw (bug3) edge[->] node[above,sloped] {\descr} (d3);
    
    \draw (bug4) edge[->] node[below,sloped] {\rep} (emp1);
    \draw (bug4) edge[->] node[above,sloped] {\descr} (d4);
  \end{tikzpicture}
  \caption{Target RDF graph (solution). Green thin arrows indicate types of
    non-literal nodes. \label{fig:rdf}}
\end{figure}
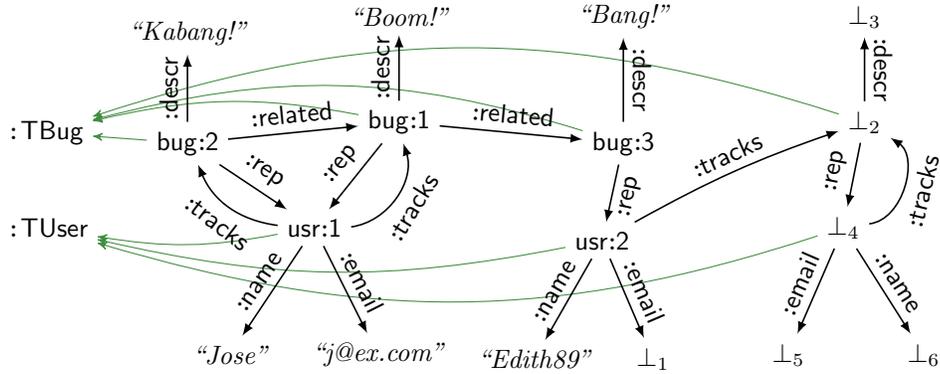
We point out that a number of null values, for both IRI and literal nodes, has
been introduced in the solution to make sure it satisfies the shape schema.

\section{Preliminaries}
\label{sec:preliminaries}
In this section we recall basic notions of relational and graph databases. More
formal definitions can be found in appendix.

\paragraph{Relational databases} 
A \emph{relational schema} is a pair $\Rbold = (\Rcal, \Sigma_\fd)$ where
$\Rcal$ is a set of relation names and $\Sigma_\fd$ is a set of functional
dependencies. Each relation name has a fixed arity and a set of attribute
names. A \emph{functional dependency} is written as usual $R:X\rightarrow Y$
where $R$ is a relation name and $X$ and $Y$ are two sets of attributes of
$R$. An \emph{instance} $I$ of $\Rbold$ is a function that maps every relation
name of $\Rbold$ to a set of tuples over a set $\Lit$ of constants (also
called literal values). $I$ is \emph{consistent} if it satisfies
all functional dependencies $\Sigma_\fd$.

\paragraph{Graphs} 
An RDF graph $G$ is an labeled graph whose nodes are divided into two
\emph{kinds}: \emph{literal} nodes and \emph{non-literal} nodes with only
non-literal nodes allowed to have outgoing edges. Every node is labeled but the
label might be a \emph{named null}. The type of value used depends on the kind
of a node: literal nodes are labeled with \emph{literal values} $\Lit$ and
\emph{literal null values} $\NullLit$ while non-literal node can be labeled with
resource names $\Iri$ and null resource names $\NullIri$. More importantly, we
adopt the \emph{unique name assumption} (UNA) i.e., no two node have the same
label, and consequently, we equate nodes with their labels and by $\nodes(G)$ we
denote the set of labels of nodes of $G$. Also, each edge is labeled with a
predicate name, which is a non-null resource name $\Pred\subset\Iri$. We often
view a graph as a set of \emph{subject-predicate-object triples}.

\paragraph{Shape Schemas}
A \emph{shapes schema} is a pair $\Sbold=(\Tcal,\delta)$, where $\Tcal$ is a
finite set of \emph{type names} and
$\delta\subseteq\Tcal\times\Pred\times(\Tcal\cup\{\Literal\})\times
\{\ONE,\MAYBE,\MANY,\PLUS\}$ is a set of shape constraints. A \emph{shape
  constraint} $(T,p,S,\mu)$ reads as follows: if a node has type $T$, then every
neighbor reached with an outgoing $p$-edge must have type $S$ and the number of
such neighbors must be within the bounds of $\mu$: precisely one if $\mu=\ONE$,
at most one if $\mu=\MAYBE$, at least one if $\mu=\PLUS$, and arbitrarily many
if $\mu=\MANY$. Naturally, the validity of a graph $G$ w.r.t.\ $\Sbold$ is
defined relative to a typing, a function
$\typing:\nodes(G)\rightarrow\Tcal\cup\{\Literal\}$ that assigns to every
non-literal node a set of types in $\Tcal$ and to every literal node the special
type label $\Literal$. A \emph{typed graph} $(G,\typing)$ is \emph{valid}
w.r.t.\ $\Sbold$ if every shape constraint of $\Sbold$ is satisfied relative to
$\typing$.

We work only with \emph{deterministic} shape schemas such that for every type
$T\in\Tcal$ and every predicate $p\in\Pred$ there is at most one shape
constraints with $T$ and $p$. Consequently, we view $\delta$ as a partial
function
$\delta:\Tcal\times\Pred\rightarrow(\Tcal\cup\{\Literal\})
\times\{\ONE,\MAYBE,\MANY,\PLUS\}$ and set $\delta(T,p)=S^\mu$ whenever
$(T,p,S,\mu)\in\delta$. We point out that deterministic shape schemas are
expressible in both ShEx and SHACL.

\paragraph{Dependencies}
We employ the standard syntax of first-order logic and given a relational schema
$\Rbold$ and a shape schema $\Sbold$, the vocabulary used to construct formulas
comprises of the relation names of $\Rbold$, a ternary predicate $\Triple$ for
defining graph topology, and the types of $\Sbold$ used as monadic
predicates. We also the edge labels $\Pred$ as constant symbols with their
straightforward interpretation. Naturally, we use of the equality relation $=$
and but by \emph{clause} we understand a conjunction of (positive) atomic
formulas that does not use $=$. Later on, we additionally introduce functions
that allow to map the values in relational databases to resource names used in
RDF graphs, and we shall allow the use of their names in formulas but without
nesting.

Now, a \emph{dependency} is a formula of the form
$\forall\bar{x}.\varphi\Rightarrow\exists\bar{y}.\psi$, where $\varphi$ is
called the \emph{body} and $\psi$ the \emph{head} of the dependency, and we
typically omit the universally quantified variables and write simply
$\varphi\Rightarrow\bar{y}.\psi$. A dependency is \emph{equality-generating}
(egd) if its body is a clause and its head consists of an equality condition
$x=y$ on pairs of variables. A \emph{tuple-generating dependency} (tgd) uses
clauses in both its head and its body. A tgd is \emph{full} if it has no
existentially quantified variables.

A number of previously introduced concepts can be expressed with
dependencies. Any functional dependency is in fact an equality-generating
dependency. For instance, the key dependency
$\RUser:\mathit{uid}\rightarrow\mathit{name}$ in the example in
Section~\ref{sec:example} can be expressed as
$\RUser(x,y_1) \land \RUser(x,y_2) \Rightarrow y_1=y_2$. Interestingly, any
deterministic shape schema $\Sbold$ can be expressed with a set $\Sigma_\Sbold$
of equality- and tuple-generating dependencies. More precisely, whenever
$\delta(T,p)=S^\mu$ the set $\Sigma_\Sbold$ contains:
\begin{description}
\itemsep0pt
\item[(\TP)] the \emph{type propagation} rule: 
  $T(x)\land\Triple(x,p,y)\Rightarrow S(y)$,
\item[(\PF)] the \emph{predicate functionality} rule:
  $T(x)\land\Triple(x,p,y_1)\land\Triple(x,p,y_2)\Rightarrow y_1=y_2$\\
  if $\mu=\ONE$ or $\mu=\MAYBE$, 
\item[(\PE)] the \emph{predicate existence} rule:
  $T(x)\Rightarrow\exists y.\ \Triple(x,p,y)$ if $\mu=\ONE$ or $\mu=\PLUS$.
\end{description} 
\paragraph{Chase}
We use the standard notion of \emph{homomorphism} and its extensions to formulas
and sets of facts (relational structures). The \emph{chase} is a procedure used
to construct a solution for data exchange, and it begins with the source
instance and iteratively executes any dependencies that are triggered. More
precisely, a dependency $\sigma = \varphi \Rightarrow \exists \bar{y}. \psi$ is
\emph{triggered} in instance $I$ by a homomorphism $h$ if
$h(\varphi)\subseteq I$ and there is no extension $h'$ of $h$ with
$h'(\psi)\subseteq I$. The \emph{execution} of $\sigma$ triggered in $I$ by $h$
may result in 1) adding new facts to $I$ when $\sigma$ is a tgd, 2) in renaming
named null in $I$ when $\sigma$ is an egd, or 3) in a \emph{failure} if $\sigma$
is an egd and $\psi$ contains a value equality $x = y$ but $h(x)$ and $h(y)$ are
two different constants.

\section{Constructive Relational to RDF Data Exchange}
\label{sec:data-exchange}
An $n$-ary \emph{IRI constructor} is a function $f:\Lit^n\to\Iri$ that maps an
$n$-tuple of database constants to an RDF resource name. A \emph{IRI
  constructor} library is a pair $\Fbold=(\Fcal,F)$, where $\Fcal$ is a set of
IRI constructor names and $F$ is their interpretation. $\Fbold$ is
\emph{non-overlapping} if all its IRI constructors have pairwise disjoint
ranges.

\begin{definition}
  A \emph{relational to RDF data exchange setting with fixed IRI constructors}
  is a tuple $\E = (\Rbold, \Sbold, \Sigma_\st,\Fbold)$, where
  $\Rbold= (\Rcal,\Sigma_\fd)$ is a source relational schema,
  $\Sbold = (\Tcal, \delta)$ is a target shape constraint schema,
  $\Fbold=(\Fcal,F)$ is an IRI constructor library, and $\Sigma_\st$ is a set of
  \emph{source-to-target tuple generating dependencies} \emph{(}st-tgds\emph{)}
  whose bodies are formulas over $\Rcal$ and heads are formulas over
  $\Fcal\cup\Tcal\cup\{\Literal\}$. $\E$ is \emph{constructive} if the library
  of IRI constructors is non-overlapping and the st-tgds $\Sigma_\st$ are full
  tgds.

  A typed graph $J$ is a \emph{solution} to $\E$ for a source instance $I$ of
  $\Rbold$, iff $J$ satisfies $\Sbold$ and $I\cup J\cup F \models\Sigma_\st$. By
  $\sol_\E(I)$ we denote the set of all solutions for $I$ to $\E$.  \qed
\end{definition}
In the reminder we fix a constructive data exchange setting $\E$, and in
particular, we assume a fixed library of IRI constructors $\Fbold$. Since we
work only with constructive data exchange settings, w.l.o.g. we can assume that
the heads of all st-tgds consist of one atom only. We point out that while a
constructive data exchange setting does not use egds, our constructions need to
accommodate egds and tgds coming from the shapes schema.

The \emph{core pre-solution} for $I$ to $\E$ is the result $J_0$ of chase on $I$
with the st-tgds $\Sigma_\st$ and all \TP rules of $\Sbold$. In essence $J_0$
isobtained by exporting the relational data to RDF triples with $\Sigma_\st$ and
then propagating any missing types according to $\Sbold$ but without creating
any new nodes with \PE rules. This process does not introduce
any null values and always terminates yielding a unique result. Naturally, $J_0$
is included in any solution $J\in\sol_\E(I)$.


\section{Consistency}
\label{sec:consistency-des}
{
  \newcommand{\Instpisigma}{I_{\pi,\sigma,\sigma'}}
  \newcommand{\Bpisigma}{B_{\pi,\sigma,\sigma'}}
  \newcommand{\hpisigma}{h_{\pi,\sigma,\sigma'}}
  \newcommand{\Elong}{\E = (\Rbold, \Sbold, \Sigma_\st, \Fbold)}
  \newcommand{\vars}{\mathit{vars}}
  \newcommand{\Cotypes}{\operatorname{\mathit{CoTypes}}}

In this section we study the problem of consistency of data exchange settings.
The following notion of consistency was called absolute consistency in \cite{murlak-xml-absolute-consistency}.

\begin{definition}[Consistency]
  A data exchange setting $\E$ is \emph{consistent} if every consistent source instance $I$ of $\Rbold$ admits a solution to $\E$.
\end{definition}

First we show that a constructive data exchange setting $\E$ is consistent if and only if it is value consistent (see Section~\ref{sec:cone}) and node kind consistent (see Section~\ref{sec:ctwo}) and the decision procedure is co-NP complete.
Then in Section~\ref{sec:consistency-non-constructive} we show that consistency checking is more complex for two more general data exchange settings.

\subsection{Value Consistency}
\label{sec:cone}
Value inconsistency captures situations in which all chase sequences would fail due to triggering a predicate functionality egd $T(x) \land \Triple(x, p, y) \land \Triple(x, p, y') \Rightarrow y = y'$ with a homomorphism that associates different constants with $y$ and $y'$.
Let $\Sigma_\Sbold^\tp$ be the set of type propagation rules and $\Sigma_\Sbold^\pf$ be the set of predicate functionality rules from $\Sigma_\Sbold$ as defined in Section~\ref{sec:data-exchange}.


\begin{definition}[Value consistent]
  \label{def:cone-instance}
  Let $J$ be the core pre-solution for some source instance $I$ to $\E$.
  $J$ is \emph{value consistent} if $J \models \Sigma_\Sbold^\pf$.
  The data exchange setting $\E$ is \emph{value consistent} if for every $I$ instance of $\Rbold$, the core pre-solution for $I$ to $\E$ is value consistent.
\end{definition}

We now concentrate on identifying whether core pre-solutions to $\E$ satisfy $\Sigma_\Sbold^\pf$.
A triple of facts $W = \{T(f(\bar{a})), \Triple(f(\bar{a}), p, b), \Triple(f(\bar{a}), p, b')\}$ is called a \emph{violation} if the definition of type $T$ contains a triple constraint of the form $p\dbl{}S^\ONE$ or $p\dbl{}S^\MAYBE$, and $b \neq b'$ are constants.
The triple $(T,f,p)$ is called the \emph{sort} of the violation.

We fix a violation $W = \{T(f(\bar{a})), \Triple(f(\bar{a}), p, b), \Triple(f(\bar{a}), p, b')\}$ for the sequel, and we explain how to check whether the dependencies in $\E$ allow to generate this violation.
%
%
The proof goes by constructing a finite set $V$ of source instances s.t. $\E$ is value inconsistent iff there is an instance $I$ in $V$ s.t. chasing $I$ with $\Sigma_\fd$ fails. 
We start by an example illustrating some elements of the decision procedure.
\begin{example}
  \label{ex:illustration}
  Let $\Elong$ where $\Rbold = (\Rcal, \Sigma_\fd)$, $\Fbold=(\Fcal,F)$, $\Rcal = \{R,S\}$ both of arity two, and $\Fcal = \{g_0, g,f\}$ all of arity one.
  The shapes schema is given by $\delta(U_0, \rr) = U^\MANY$, $\delta(U, \q) = T^\MANY$, $\delta(T,\p) =\Literal^\ONE$, and the st-tgds are as follows:
  $$
  \begin{array}{lllrl}
    \text{\scriptsize (1)}\;& R(x_0,x_1) \Rightarrow U_0(g_0(x_1)) &
                             \text{\scriptsize (4)}\;& S(x,y) &\Rightarrow \Triple(f(x), \p, y)\\
    \text{\scriptsize (2)}\;& R(x_1,x_2) \Rightarrow \Triple(g_0(x_1), \rr, g(x_2))~~~~~&
                             \text{\scriptsize (5)}\;& R(x,z) \land S(x,y') &\Rightarrow \Triple(f(x'), \p, y')\\
    \text{\scriptsize (3)}\;& R(x_2,x) \Rightarrow \Triple(g(x_2), \q, f(x))
  \end{array}
  $$
  We want to construct a source instance s.t. when chased with $\E$ would produce a violation of sort $(T,p,f)$.
  First we need to produce a fact $T(f(x))$ for some $x$.
  This can be done by applying rules {\scriptsize(1)--(3)}, then the type propagation rules for $\delta(U_0,\rr) = U^\MANY$ and $\delta(U, \q) = T^\MANY$.
  More precisely, let $I_{123}$ be the instance obtained as the union of the bodies of rules {\scriptsize(1)--(3)} (where variables are used as elements of the domain).
  Note that the variables repeated between rules were chosen in such a way on purpose.
  The result of chasing $I_{123}$ by the above mentioned rules is $I' = I_{123} \cup \{U_0(g_0(x_1)), \Triple(g_0(x_1), \rr, g(x_2)), U(g(x_2)), \Triple(g(x_2), \q, f(x)), T(f(x)) \}$.
  Now we want to use rules {\scriptsize (4),(5)} to obtain the two missing facts for the violation.
  For that, let $I_{123,4,5}$ be the union of $I_{123}$ and the bodies of rules {\scriptsize (4),(5)}.
  Chasing $I_{123,4,5}$ with $\Sigma_\st \cup \Sigma_\Sbold^\tp$ we get its core pre-solution to $\E$: $J = I' \cup \{\Triple(f(x), \p, y),  \Triple(f(x'), \p, y'), \Literal(y), \Literal(y')\}$ that contains a violation of sort $(T,p,f)$.

  So far we didn't give the source dependencies on purpose.
  Suppose that the first attribute of $S$ is a primary key.
  In this case, $I_{123,4,5}$ is not a consistent source instance, and we can actually show that $\E$ is consistent.
  Without source dependencies, $\E$ is inconsistent, as witnessed by the source instance $I_{123,4,5}$.
  \qed
\end{example}

Now we identify a necessary and sufficient condition for whether fact $T(f(\bar{a}))$ can appear in the pre-solutions to $\E$. 

\begin{definition}
\label{def:accessible}
The pair $(T,f) \in \Tcal \times \Fcal$ is called \emph{accessible in $\E$ with sequence $\sigma_0,\sigma_1, \ldots, \sigma_n$} of st-tgds in $\Sigma_\st$ if:
\begin{itemize}
\item the head of $\sigma_0$ is of the form $T_0(f_0(\bar{y}_0))$, and
\item the head of $\sigma_i$ is of the form $Triple(f_{i-1}(\bar{x}_i), p_i, f_i(\bar{y}_i))$ for every $1 \le i \le n$, and
\item $\delta(T_{i-1},p_i) = T_i^{\mu_i}$ for every $0 \le i < n$, and
\item $T = T_n$ and $f = f_n$.
\end{itemize}
for some type symbols $T_i$, function symbols $f_i$, predicates $p_i$ and sequences of variables $\bar{x}_i$ and $\bar{y}_i$.
\end{definition}
Note that if $(T,f)$ is accessible in $\E$, then it is accessible with an elementary sequence $\sigma_0, \ldots, \sigma_n$ which elements are pairwise distinct.

In Example~\ref{ex:illustration}, $(T,f)$ is accessible in $\E$ with sequence {\scriptsize (1)(2)(3)}.

The pairs $(T',f')$ accessible in $\E$ characterize the type facts that appear in the core pre-solutions to $\E$, as follows.
\begin{lemma}
  \label{lem:existence-type-fact}
  For any $(T,f) \in \Tcal \times \Fcal$ it holds:
  $(T,f)$ is accessible in $\E$
  if and only if
  there exists an instance $I$ of $\Rcal$ and a tuple of constants $\bar{a}$ in the domain of $I$ s.t. the core pre-solution for $I$ to $\E$ contains the fact $T(f(\bar{a}))$.
\end{lemma}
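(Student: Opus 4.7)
The plan is to prove both implications by linking accessibility sequences with chase derivations producing facts of the form $T(f(\bar{a}))$ in the core pre-solution.

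For the forward implication, suppose $(T,f)$ is accessible via a sequence $\sigma_0,\sigma_1,\ldots,\sigma_n$. I would build a witness source instance $I$ by taking a fresh copy of the body of each $\sigma_i$, with a single synchronisation: for $1 \le i \le n$, the variable tuple $\bar{y}_{i-1}$ appearing in the head of $\sigma_{i-1}$ is identified, in the renamed copies, with the variable tuple $\bar{x}_i$ appearing in the head of $\sigma_i$. Treating the resulting variables as pairwise distinct constants, $I$ is the union of all these body atoms. Chasing $I$ with $\Sigma_\st$ first fires $\sigma_0$ to produce $T_0(f_0(\bar{a}_0))$, and then fires each $\sigma_i$ to produce $\Triple(f_{i-1}(\bar{a}_{i-1}), p_i, f_i(\bar{a}_i))$, where $\bar{a}_i$ is the constant tuple corresponding to $\bar{y}_i$; the synchronisation is exactly what makes the subject of each triple coincide with the IRI produced at the previous step. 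Repeatedly firing the $\Sigma_\Sbold^\tp$ rule for $\delta(T_{i-1}, p_i) = T_i^{\mu_i}$ then propagates the types, yielding $T_1(f_1(\bar{a}_1)), \ldots, T_n(f_n(\bar{a}_n)) = T(f(\bar{a}_n))$ in the core pre-solution for $I$.

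For the backward implication, suppose the core pre-solution $J$ for some source instance $I$ contains $T(f(\bar{a}))$. I would proceed by induction on the chase step at which this fact is first introduced. In the base case, $T(f(\bar{a}))$ is introduced directly by an st-tgd $\sigma_0$ whose (single) head atom has the form $T(f(\bar{y}_0))$, and then $(T,f)$ is accessible via the length-one sequence $(\sigma_0)$, corresponding to $n=0$. In the inductive step, the fact is introduced by a $\Sigma_\Sbold^\tp$ rule coming from some constraint $\delta(T',p) = T^\mu$, triggered by already present facts $T'(u)$ and $\Triple(u,p,f(\bar{a}))$. The non-overlapping hypothesis on $\Fbold$ ensures that every IRI occurring in $J$ is of the form $g(\bar{b})$ for a uniquely determined $g \in \Fcal$, so $u = f'(\bar{b})$ for some $f' \in \Fcal$ and constants $\bar{b}$. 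Applying the induction hypothesis to the earlier-introduced $T'(f'(\bar{b}))$ yields an accessibility sequence $\sigma_0,\ldots,\sigma_{n-1}$ for $(T',f')$; the triple $\Triple(f'(\bar{b}),p,f(\bar{a}))$ can only have been produced by an st-tgd $\sigma_n$ whose head has shape $\Triple(f'(\bar{x}_n),p,f(\bar{y}_n))$, since $\Sigma_\Sbold^\tp$ rules never introduce triples. Appending $\sigma_n$ yields the required accessibility sequence for $(T,f)$: the hypothesis $\delta(T',p) = T^\mu$ supplies precisely the last chain condition $\delta(T_{n-1},p_n) = T_n^{\mu_n}$ with $T_{n-1}=T'$, $f_{n-1}=f'$, $T_n = T$, and $f_n = f$.

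The main technical delicacy lies in the forward direction, specifically in verifying that the variable identification is coherent and that the chase produces exactly the intended facts without interference. Because the core pre-solution is computed using only $\Sigma_\st$ and $\Sigma_\Sbold^\tp$, and neither introduces null values nor identifies constants, the construction proceeds without any egd complications. Note also that the lemma does not require $I$ to satisfy $\Sigma_\fd$; consistency of witnessing source instances with respect to the source dependencies is addressed separately in the surrounding analysis of value-consistency.
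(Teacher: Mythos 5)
Your proof is correct, and its backward direction is essentially the paper's own argument: the paper fixes a terminating chase sequence with $\Sigma_\st \cup \Sigma_\Sbold^\tp$ and, via three claims about how type facts and triple facts can be created, performs exactly the induction on the step of first introduction that you carry out; you additionally make explicit that the non-overlapping constructor library is what lets you recover a unique constructor $f'$ for the subject IRI and chain it to the accessibility sequence, a point the paper uses only implicitly. The forward direction takes a different witness instance: the paper chases the instance consisting of one fact $R(b,\ldots,b)$ per relation name over a single constant $b$, so every IRI produced along the accessibility sequence is automatically $f_i(b,\ldots,b)$ and no synchronisation of variables is needed, whereas you freeze the bodies of $\sigma_0,\ldots,\sigma_n$ and identify $\bar{y}_{i-1}$ with $\bar{x}_i$ positionally --- which is precisely the canonical instance $I_{\pi,\sigma,\sigma'}$ (without the two contentious st-tgds) that the paper only introduces afterwards for Proposition~\ref{prop:existence-violation}. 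The paper's single-constant instance buys simplicity, since the coherence of the identification (repeated variables, equal arities) that you rightly flag never arises; your frozen-body instance buys a witness tailored to the given sequence and anticipates the machinery needed for the violation analysis anyway. Both constructions go through because the core pre-solution chase uses only full tgds, so extra facts cannot interfere and the intended facts survive in the unique terminal result.
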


Now we assume that the fact $T(f(\bar{a}))$ appears in the core pre-solutions for $I$ to $\E$ and want to verify whether the facts $\Triple(f(\bar{a}), p, b), \Triple(f(\bar{a}), p, b')$ co-occur with it.
Recall that $b,b'$ are constants, so such facts are necessarily generated by st-tgds.
Two st-tgds $\sigma,\sigma'$ are called \emph{contentious} with sort $(T,p,f)$ if the head of $\sigma$ is $\Triple(f(\bar{z}), p, t)$, the head of $\sigma'$ is $\Triple(f(\bar{z}'), p, t')$ and $(T,f)$ is accessible in $\E$, and predicate p is functional for type $T$, i.e. $\delta(T,p) = S^\mu$ with $\mu$ equal to $\ONE$ or $\MAYBE$.
Note that $\sigma,\sigma'$ may be the same st-tgd, in which case we consider that they are two copies of it obtained by alpha renaming.

Suppose now that $\sigma,\sigma'$ are the contentious st-tgds here above, and that $\pi = \sigma_0, \ldots, \sigma_n$ is a sequence of st-tgds s.t. $(T,f)$ is accessible in $\E$ with $\pi$.
We define a source instance $\Instpisigma$ such that a chase sequence with rules $\sigma_0,\ldots,\sigma_n,\sigma,\sigma'$ can be executed on $\Instpisigma$ yielding an instance that includes the violation $W$.
%
Let $\sigma_{n+1} = \sigma$ and $\sigma_{n+2} = \sigma'$.
Suppose w.l.o.g. that $\sigma_i$ and $\sigma_j$ use mutually disjoint sets of variables whenever $i\neq j$.
Define $\Bpisigma = \bigcup^{n+2}_{i=0} \mathit{body}(\sigma_i)$ where $\mathit{body}(\sigma_i)$ is the body of $\sigma_i$.
Let $\sigma_0,\ldots,\sigma_n$ be as in Definition~\ref{def:accessible}, thus $(T_n,f_n,p_n) = (T,f,p)$.
%
%
Define the sequence of mappings $h_0, \ldots, h_{n+2}$ inductively as follows:
\begin{itemize}
\item for any $0 \le i \le n+2$, $h_i: \bigcup^i_{j=0} \vars(\sigma_j) \to \NullLit$ is a mapping that is injective when restricted on $\vars(\sigma_i)$, where $\vars(\sigma)$ denotes the set of variables that appear in $\sigma$.
\item for any $1 \le i \le n+2$, $h_i$ coincides with $h_{i-1}$ on the domain of $h_{i-1}$;
\item for any $1 \le i \le n$, $h_i(\bar{x}_i) = h_{i-1}(\bar{y}_{i-1})$ and $h_i(z)$ is fresh w.r.t. the image of $h_{n-1}$ for any $z \not\in \bar{x}_i$. That is, $z \not\in \bar{x}_i$ implies $h(z)$ is not in the image of $h_{i-1}$;
\item $h_{n+1}(\bar{z}) = h_n(\bar{y}_n)$ and $h_{n+1}(z)$ is fresh w.r.t. the image of $h_n$ for any $z \not\in \bar{z}$;
\item $h_{n+2}(\bar{z}') = h_n(\bar{y}_n)$ and $h_{n+2}(z)$ is fresh w.r.t. the image of $h_{n+1}$ for any $z \not\in \bar{z}'$,
\end{itemize}
Then we let $\hpisigma = h_{n+2}$ and $\Instpisigma = \hpisigma(\Bpisigma)$.
It immediately follows from the definition that $\hpisigma: \Bpisigma \to \Instpisigma$ is a homomorphism.
Moreover, it is easy to see that $\Instpisigma$ is unique up to isomorphism, so from now on by $\Instpisigma$ we mean an arbitrary instance isomorphic to the one defined above.
In Example~\ref{ex:illustration}, the instance $I_{123,4,5}$ was obtained as described above.

The following proposition establishes an equivalence between the presence of the violation $W$ in the core pre-solution of an instance $I$ of $\Rcal$, and the existence of a homomorphism from some $\Instpisigma$ to $I$.

\begin{proposition}
  \label{prop:existence-violation}
  Let $I$ be an instance of $\Rcal$.
  \begin{enumerate}
  \item \label{item:prop-existence-violation:1} There exist $\pi, \sigma, \sigma, h$ s.t. $(T,f)$ is accessible in $\E$ with path $\pi$, $\sigma, \sigma'$ are contentious st-tgds of sort $(T,f,p)$, and $h: \Instpisigma \to I$ is a homomorphism
    if and only if
    there exist a tuple of constants $\bar{a}$ from the domain of $I$ and constants $b, b'$ s.t. the core pre-solution for $I$ to $\E$ includes $\{T(f(\bar{a})), \Triple(f(\bar{a}),p,b), \Triple(f(\bar{a}),p,b')\}$.
  \item \label{item:prop-existence-violation:2} Moreover, if the head of $\sigma$ is $\Triple(f(\bar{z}), p, t)$ and the head of  $\sigma'$ is $\Triple(f(\bar{z}'), p, t')$, then $\bar{a} = h\circ\hpisigma(\bar{z}) = h\circ\hpisigma(\bar{z}')$, $b = h\circ\hpisigma(t)$ and $b' = h\circ\hpisigma(t')$.
  \end{enumerate}
\end{proposition}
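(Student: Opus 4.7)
My plan is to establish the biconditional by leveraging the fact that $\Instpisigma$ is a canonical witness instance, designed so that a carefully scheduled chase of $\Instpisigma$ with $\Sigma_\st \cup \Sigma_\Sbold^\tp$ produces exactly the three facts that make up the potential violation. The two directions are then essentially mirror images of each other, connected by the chase-to-homomorphism duality available for rule sets without existentials.

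For the forward direction, I would start from the given $\pi = \sigma_0,\ldots,\sigma_n$, $\sigma$, $\sigma'$, and $h:\Instpisigma\to I$. The first step is to run a prescribed chase directly on $\Instpisigma$: apply the rules in the order $\sigma_0, \sigma_1, \ldots, \sigma_n, \sigma, \sigma'$, interleaving them with the \TP rules $T_{i-1}(x)\land\Triple(x,p_i,y)\Rightarrow T_i(y)$ dictated by the accessibility sequence. A short induction, relying on the deliberate identifications $h_i(\bar{x}_i)=h_{i-1}(\bar{y}_{i-1})$ and $h_{n+1}(\bar{z})=h_{n+2}(\bar{z}')=h_n(\bar{y}_n)$ built into $\hpisigma$, shows that this schedule produces $T(f(\hpisigma(\bar{y}_n)))$, $\Triple(f(\hpisigma(\bar{z})),p,\hpisigma(t))$, and $\Triple(f(\hpisigma(\bar{z}')),p,\hpisigma(t'))$. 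Because every rule in play is either a full st-tgd or a \TP rule, and neither creates fresh nulls, chase steps commute with homomorphisms in the standard way: $h$ extends to a homomorphism from the chased $\Instpisigma$ into the core pre-solution of $I$. The $h$-images of the three derived facts are then exactly the tuple $\bar{a} = h\circ\hpisigma(\bar{z}) = h\circ\hpisigma(\bar{z}')$ together with $b = h\circ\hpisigma(t)$ and $b' = h\circ\hpisigma(t')$ required by item 2.

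For the backward direction, I would start from the three facts in the core pre-solution of $I$ and reverse-engineer the witnessing objects. Since $b,b'$ are constants and the \PE rules are excluded from the construction of the core pre-solution, each of $\Triple(f(\bar{a}),p,b)$ and $\Triple(f(\bar{a}),p,b')$ is produced by a single st-tgd application; label the two st-tgds $\sigma,\sigma'$, treating them as alpha-renamed copies if they happen to be the same rule. Non-overlap of $\Fbold$ forces the subject position of both heads to be $f(\cdot)$, so the heads have the form $\Triple(f(\bar{z}),p,t)$ and $\Triple(f(\bar{z}'),p,t')$. The derivation of $T(f(\bar{a}))$, together with Lemma~\ref{lem:existence-type-fact}, supplies an accessible sequence $\pi = \sigma_0,\ldots,\sigma_n$ and body homomorphisms $g_0,\ldots,g_n$ into $I$ satisfying the chain conditions $g_i(\bar{x}_i) = g_{i-1}(\bar{y}_{i-1})$ and $g_n(\bar{y}_n) = \bar{a}$; the two triples analogously provide body homomorphisms $g,g'$ with $g(\bar{z}) = g'(\bar{z}') = \bar{a}$, $g(t)=b$, and $g'(t')=b'$. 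Setting $\sigma_{n+1}=\sigma$ and $\sigma_{n+2}=\sigma'$, I would then define $h$ on $\Instpisigma$ by $h(\hpisigma(v)) = g_i(v)$ whenever $v \in \vars(\sigma_i)$.

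The main obstacle, and the only step that is not routine, is proving that $h$ is well-defined. Two variables $v_1 \in \vars(\sigma_i)$ and $v_2 \in \vars(\sigma_j)$ with $i \neq j$ can satisfy $\hpisigma(v_1) = \hpisigma(v_2)$ only through the identifications explicitly built into $\hpisigma$, namely the componentwise equalities $h_i(\bar{x}_i) = h_{i-1}(\bar{y}_{i-1})$ for $1 \le i \le n$ and $h_{n+1}(\bar{z}) = h_{n+2}(\bar{z}') = h_n(\bar{y}_n)$. Each of these identifications is matched exactly by a chain condition on the $g_i$'s established above, so $g_i(v_1) = g_j(v_2)$ in every case and the definition of $h$ is consistent. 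Once well-definedness is in hand, the homomorphism property $h:\Instpisigma\to I$ follows because each $g_i$ is already a homomorphism of $\mathit{body}(\sigma_i)$ into $I$, and item 2 drops out by unwinding $h\circ\hpisigma$ on the distinguished tuples $\bar{z}, \bar{z}', t, t'$.
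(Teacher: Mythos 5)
Your proof is correct and takes essentially the same route as the paper's: the forward direction runs the same scheduled chase $\sigma_0,\sigma_1,\ldots,\sigma_n,\sigma,\sigma'$ interleaved with the corresponding \TP rules --- the paper applies it to $I$ directly through the composed homomorphism $h\circ h_{\pi,\sigma,\sigma'}$ rather than chasing $I_{\pi,\sigma,\sigma'}$ and transporting along $h$, an interchangeable choice since all dependencies involved are full --- and the backward direction reverse-engineers $\pi$, $\sigma$, $\sigma'$ and the homomorphism from a terminating chase of $I$, exactly as in the paper's argument built on Lemma~\ref{lem:existence-type-fact}. Your explicit well-definedness check for $h$ in the converse direction is a detail the paper's (sketched) proof leaves implicit.
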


We point out that Proposition~\ref{prop:existence-violation} identifies a necessary condition for the presence of some violation in the core pre-solution for a source instance $I$.
The condition is not sufficient for two reasons. First, $I$ is an instance of $\Rcal$ that does not necessarily satisfy the source functional dependencies.
Second, $b$ might be equal to $b'$.
Theorem~\ref{thm:pre-des-cone} adds sufficient conditions for handling these two missing cases.


\begin{theorem}
  \label{thm:pre-des-cone}
  These two statements are equivalent:
  \begin{itemize}
  \item For every instance $I$ of $\Rbold$, the core pre-solution for $I$ to $\E$ is value consistent.
  \item For every violation sort $(T,f,p)$, every $\pi$ s.t. $(T,f)$ is accessible in $\E$ with $\pi$, any two contentious st-tgds $\sigma,\sigma'$ of sort $(T,f,p)$, every $J$ solution for $\Instpisigma$ to $\Sigma_\fd$ it holds that $(\hpisigma \circ h)(t) = (\hpisigma \circ h)(t')$, where $t,t'$ are such that the head of $\sigma$ is $\Triple(f(\bar{z}), p, t)$ and the head of $\sigma'$ is $\Triple(f(\bar{z'}), p, t')$, and $h$ is the unique homomorphism from $\Instpisigma$ to $J$.
  \end{itemize}
\end{theorem}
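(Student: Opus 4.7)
The plan is to establish both directions of the equivalence by contraposition, using Proposition~\ref{prop:existence-violation} to translate between violations of $\Sigma_\Sbold^\pf$ in a core pre-solution and homomorphisms out of the canonical instance $\Instpisigma$.

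For the ``value consistent implies the equality'' direction, I assume the equality fails: for some violation sort $(T,f,p)$, sequence $\pi$, contentious pair $\sigma,\sigma'$, solution $J$ of $\Instpisigma$ to $\Sigma_\fd$ and the unique chase homomorphism $h:\Instpisigma\to J$, we have $h(\hpisigma(t))\neq h(\hpisigma(t'))$ in $J$. By construction the domain of $\Instpisigma$, and hence of $J$, consists entirely of named nulls. I then pass to a genuine source instance $I^*$ by replacing each null in $J$ with a distinct fresh literal constant through an injective relabelling $\iota:J\to I^*$. Since $J\models\Sigma_\fd$ and $\Sigma_\fd$ is a set of egds, the bijective relabelling $\iota$ is an isomorphism of relational structures and hence preserves $\Sigma_\fd$, so $I^*$ is a consistent source instance. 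The composite $\iota\circ h:\Instpisigma\to I^*$ is a homomorphism, so Proposition~\ref{prop:existence-violation}(2) places $\{T(f(\bar{a})),\Triple(f(\bar{a}),p,b),\Triple(f(\bar{a}),p,b')\}$ in the core pre-solution of $I^*$ with $b=\iota(h(\hpisigma(t)))$ and $b'=\iota(h(\hpisigma(t')))$. Injectivity of $\iota$ preserves the inequality, yielding $b\neq b'$, which contradicts value consistency of $\E$.

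For the converse direction, I assume $\E$ is not value consistent and pick a consistent $I$ whose core pre-solution contains a violation $\{T(f(\bar{a})),\Triple(f(\bar{a}),p,b),\Triple(f(\bar{a}),p,b')\}$ with $b\neq b'$. Proposition~\ref{prop:existence-violation}(1) supplies $\pi,\sigma,\sigma'$ of the appropriate sort and a homomorphism $g:\Instpisigma\to I$, and part~(2) pins down $b=g(\hpisigma(t))$ and $b'=g(\hpisigma(t'))$. Let $J$ be the $\Sigma_\fd$-chase of $\Instpisigma$ with chase homomorphism $h:\Instpisigma\to J$. By the universal property of the egd chase, since $I\models\Sigma_\fd$ the homomorphism $g$ factors as $g=g'\circ h$ for some $g':J\to I$. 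If the theorem's equality $h(\hpisigma(t))=h(\hpisigma(t'))$ were to hold in $J$, applying $g'$ would force $b=b'$, contradicting the violation. Hence the stated equality must fail on these data, as required.

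The main obstacle I anticipate is the clean handling of the interplay between the egd chase and homomorphisms: in the converse direction I need the standard universal property of the egd chase, and in the forward direction I need to argue that relabelling nulls by fresh constants preserves $\Sigma_\fd$. Both become routine once one exploits that $\Instpisigma$ is assembled out of fresh named nulls, so the chase only ever identifies nulls and never forces a collapse of two distinct constants, and any injective renaming of the remaining nulls in $J$ is an isomorphism of relational structures that preserves all dependencies in $\Sigma_\fd$.
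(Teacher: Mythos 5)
Your proof is correct and follows essentially the same route as the paper's: both directions by contraposition, with Proposition~\ref{prop:existence-violation} doing the translation between violations in core pre-solutions and homomorphisms out of $\Instpisigma$. You are in fact more careful than the paper on two points it glosses over — the injective relabelling of the nulls of $J$ into fresh constants to obtain a genuine consistent source instance in the forward direction, and the explicit use of the universal property of the egd chase to factor $g$ through $h$ in the converse direction.
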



\subsection{Node Kind Consistency}
\label{sec:ctwo}
Node kind inconsistency characterizes situations in which all chase sequences would fail due to the necessity of equating a literal and a non literal value by triggering a predicate functionality egd $T(x) \land \Triple(x, p, y) \land \Triple(x, p, y') \Rightarrow y = y'$ with homomorphism $h$ s.t. \emph{exactly one} among $h(y),h(y')$ is a literal.
In this case the corresponding chase sequence fails even if one of $h(y),h(y')$ is null.
This is a particularity of relational to RDF data exchange (in contrast to relational data exchange).

In the sequel we give a definition of node kind consistency and announce the propositions needed for proving the consistency theorem. 
The detailed definitions are rather technical and are presented in Appendix~\ref{app:proofs-consistency}.

For a typed graph $J$ we define the set $\Cotypes(J)$ of sets of \emph{types co-occurring} in all solutions $G$ of $J$ to $\E$ that include $J$.
That is, $X \in \Cotypes(J)$ if for any $G$ s.t. $J \subseteq G$ and $G \in \sol_\E(J)$, there exists a node $n$ in $G$ s.t. $X = \{T \in \Tcal \cup \{\Literal\} \mid T(n) \in G\}$.

\begin{definition}[Node kind consistent]
  \label{def:ctwo-instance}
  Let $I$ be a source instance and $J$ its core pre-solution to $\E$.
  $J$ is \emph{node kind consistent} if $\Cotypes(J)$ does not contain a set $X$ s.t. $\{\Literal, T\} \subseteq X$ for some type $T$ in $\Tcal$.
  The data exchange setting $\E$ is node kind consistent if for every $I$ instance of $\Rbold$, the core pre-solution for $I$ to $\E$ is node kind consistent.
\end{definition}

Node kind inconsistency is a sufficient condition for inconsistency.
\begin{lemma}
  \label{lem:not-ctwo-implies-inconsistent}
  For any $I$ instance of $\Rbold$, if the core pre-solution for $I$ to $\E$ is value inconsistent, then $I$ does not admit a solution to $\E$.
\end{lemma}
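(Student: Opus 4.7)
The plan is to argue by contradiction: assume $J_0$ is value inconsistent yet some $J \in \sol_\E(I)$ exists, and derive a conflict between two distinct constants. By Definition~\ref{def:cone-instance}, value inconsistency of $J_0$ means $J_0 \not\models \Sigma_\Sbold^\pf$, which yields facts $T(n)$, $\Triple(n,p,b)$, $\Triple(n,p,b')$ in $J_0$ together with a shape constraint $\delta(T,p) = S^\mu$ for some $\mu \in \{\ONE,\MAYBE\}$ and two distinct constants $b \neq b'$. That $b$ and $b'$ are genuine constants, not named nulls, is important and follows from the construction of $J_0$: the chase that defines it uses only the full st-tgds $\Sigma_\st$ (which have no existential variables) and the $\Sigma_\Sbold^\tp$ rules (which only add type atoms), and as already noted in Section~\ref{sec:data-exchange} this process does not introduce fresh null values.

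Next, I invoke the observation recorded at the end of Section~\ref{sec:data-exchange} that $J_0 \subseteq J$ for every solution $J \in \sol_\E(I)$. This inclusion holds because any solution $J$ satisfies $\Sigma_\st$, so it must contain every triple fact produced by firing the full st-tgds on $I$, and because $J$ being valid with respect to $\Sbold$ forces its typing to satisfy all type-propagation rules $T(x) \land \Triple(x,p,y) \Rightarrow S(y)$ from $\Sigma_\Sbold^\tp$, hence to contain every type atom propagated during the chase. In particular, the three facts $T(n)$, $\Triple(n,p,b)$, $\Triple(n,p,b')$ witnessing the violation all lie in $J$.

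Finally, validity of $J$ with respect to $\Sbold$ entails $J \models \Sigma_\Sbold^\pf$; applying the predicate functionality rule for $(T,p)$ to the witness in $J$ forces $b = b'$, contradicting $b \neq b'$. Therefore no such $J$ exists and $\sol_\E(I) = \emptyset$. The only genuine subtlety is the step that $J_0 \subseteq J$: one must observe that any solution contains not just the triples produced by $\Sigma_\st$ but also all type atoms propagated by $\Sigma_\Sbold^\tp$, so that the violation present in $J_0$ does not silently vanish when passing to an arbitrary $J \in \sol_\E(I)$. Once that is in hand, the crucial reason the contradiction cannot be avoided is that the offending values $b$ and $b'$ are distinct database constants rather than unifiable nulls, so no chase step or relabelling can reconcile them.
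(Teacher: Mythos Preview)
Your proof is correct for the statement exactly as printed, but the printed statement contains a typo: the lemma's label, the sentence immediately preceding it (``Node kind inconsistency is a sufficient condition for inconsistency''), and the paper's own proof in the appendix all make clear that the hypothesis is meant to read \emph{node kind inconsistent}, not \emph{value inconsistent}. The paper derives the intended statement as an immediate consequence of Lemma~\ref{lem:NJ-is-cotypes} (identifying $N_J$ with the set of co-occurring type sets) and Lemma~\ref{lem:lit-blank-inconsistency} (that if some $X \in N_J$ contains both $\Literal$ and a type $T \in \Tcal$, then no solution including $J$ exists). Your argument does not engage with any of this: you never touch the co-occurring types construction, the sets $N_J$, or the literal/non-literal node-kind clash that Definition~\ref{def:ctwo-instance} is about.

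For the statement as literally written---that value inconsistency of $J_0$ rules out any solution---your argument is sound and is the direct one: $J_0 \subseteq J$ for every solution $J$, the two distinct constants $b \neq b'$ survive into $J$ because $J_0$ contains no nulls, and validity of $J$ with respect to $\Sbold$ then forces $b = b'$. The paper does assert this fact, but attributes it to Theorem~\ref{thm:pre-des-cone} rather than to the present lemma; see the sentence just after this lemma in Section~\ref{sec:ctwo}.
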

In Theorem~\ref{thm:pre-des-cone} we have shown that value inconsistency is another such sufficient condition.
The next lemma establishes that being value consistent and node kind consistent is a sufficient condition for $\E$ to be consistent.
\begin{lemma}
  \label{lem:cone-and-ctwo-implies-solution-exists}
  For any $I$ instance of $\Rbold$, if the core pre-solution for $I$ to $\E$ is value consistent and node kind consistent, then $I$ admits a solution to $\E$.
\end{lemma}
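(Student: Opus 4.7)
The plan is to construct a (possibly infinite) solution $J$ for $I$ by starting from the core pre-solution $J_0$ and running a fair chase with the \PF and \PE rules of $\Sigma_\Sbold$, freely introducing null witnesses for the predicates required by \PE. Since $J_0$ already satisfies $\Sigma_\st$ and the \TP rules by construction, the only two ways such a chase could fail to produce a valid solution are (i) a \PF egd forcing two distinct constants to be equated, and (ii) a \PF egd forcing a literal node and a non-literal node to be identified. The two hypotheses on $J_0$ are tailored exactly to rule out these two failure modes.

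Concretely, I would build an ascending chain $J_0\subseteq J_1\subseteq\cdots$ as follows. At each stage, first saturate under all \PF rules by renaming nulls via a union-find; then for every node $n$ bearing a type $T$ and every predicate $p$ with $\delta(T,p)=S^\mu$, $\mu\in\{\ONE,\PLUS\}$, and no outgoing $p$-edge, introduce a fresh null $m$ (from $\NullLit$ if $S=\Literal$, from $\NullIri$ otherwise), add the triple $\Triple(n,p,m)$ and the type fact $S(m)$, and close under \TP. Determinism of $\Sbold$ ensures that the kind of every fresh witness is unambiguous, and each such $m$ is reached by a single predicate edge, so subsequent \PF merges involving $m$ can only identify it with another witness created at some stage for the very same pair $(n,p)$. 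Let $J$ be the limit of this chain, quotiented by the accumulated null identifications.

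Failure mode (i) is ruled out by value consistency of $J_0$. All constants in the chase come from applying $\Sigma_\st$ on $I$, so they already occur in $J_0$; \PE introduces only fresh nulls and \TP adds no edges, so any \PF egd whose triples pin two distinct constants on the right-hand side must already fire in $J_0$, contradicting the hypothesis. Failure mode (ii) is ruled out by node kind consistency of $J_0$. If some \PF rule in the chase were forced to equate a literal-kind node with a non-literal-kind node, then every completion of $J_0$ to a solution would be obliged to perform (at least) this identification; closing the resulting collapsed node under \TP produces, in every such solution, a node whose type set contains both $\Literal$ and some $T\in\Tcal$. That witnesses a set $X\in\Cotypes(J_0)$ with $\{\Literal,T\}\subseteq X$, contradicting node kind consistency of $J_0$.

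The main obstacle is the last step: formally tying a chase-time kind clash to the existence of an element of $\Cotypes(J_0)$ that contains both $\Literal$ and some $T\in\Tcal$. This requires a careful inductive invariant across the stages, tracking how \TP-propagated types aggregate on each null as merges accumulate, together with an intermediate characterisation of $\Cotypes(J_0)$ as exactly the sets of types that any solution extending $J_0$ is forced to realise on some node. Once this bridge is established, the staged construction never fails, and the limit graph $J$ is a typed graph satisfying both $\Sigma_\st$ and all rules of $\Sigma_\Sbold$, i.e.\ $J\in\sol_\E(I)$, which proves the lemma.
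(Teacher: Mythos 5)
Your two failure-mode analyses capture the right intuitions, and the ``main obstacle'' you defer is precisely the paper's Lemma~\ref{lem:NJ-is-cotypes} (the characterisation $N_J=\mathit{CoTypes}(J)$). But there is a more basic problem with the overall strategy: the limit of your chain $J_0\subseteq J_1\subseteq\cdots$ is in general an \emph{infinite} object, whereas the paper defines an RDF graph as a \emph{finite} set of triples, so a solution must be finite. Your own observation shows why the chain does not stabilise: each fresh witness $m$ for a pair $(n,p)$ is only ever merged with other witnesses for that same pair, so mutually recursive required types (e.g.\ the $\mathsf{TBug}/\mathsf{TUser}$ cycle of Section~\ref{sec:example}, which the paper explicitly uses to show the chase ``loops ad infinitum'') generate a strictly increasing chain of fresh nulls. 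The paper sidesteps this by never chasing: it builds a single \emph{finite} graph $G_\Sbold$ whose new nodes are indexed by the finitely many type sets $X\in N_{J_0}\subseteq 2^{\Tcal\cup\{\Literal\}}$ (one fresh blank node $n_X$ per $X\subseteq\Tcal$, one fresh null literal per required literal edge), wires $n_X$ by a $p$-edge to $n_{\Delta(X,p)}$ for each $p\in\Req(X)$, and then verifies directly that $J_0\cup G_\Sbold$ satisfies $\Sigma_\st$, the \TP and \PE rules by construction, and the \PF rules because $J_0$ does (value consistency) and each new node has at most one outgoing $p$-edge per predicate. To repair your proof you would either have to fold your witnesses back onto representatives indexed by their type sets---which is exactly the paper's construction---or relax the notion of solution to admit infinite graphs, which the paper does not do.

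A second, smaller inaccuracy: you write that determinism of $\Sbold$ makes the kind of each fresh witness unambiguous. Determinism only bounds the constraints per \emph{pair} $(T,p)$; a single node $n$ may carry two types $T_1,T_2$ with $\delta(T_1,p)=\Literal^{\mu}$ and $\delta(T_2,p)=S^{\mu'}$ for some $S\in\Tcal$, and then the single $p$-witness you must create has no consistent kind. That conflict is excluded by node kind consistency via $\Delta(\types_{J_0}(n),p)\in N_0$, not by determinism; your argument should invoke the former, and showing that this (together with its iterated closure under $\Delta$) is the \emph{only} way a kind clash can arise is the content of the bridge you deferred.
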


We are now ready to establish our main results regarding consistency of constructive data exchange settings.
The next theorem follows from Theorem~\ref{thm:pre-des-cone}, Lemma~\ref{lem:not-ctwo-implies-inconsistent}, Lemma~\ref{lem:cone-and-ctwo-implies-solution-exists}, and the fact that value consistency and node kind consistency are decidable.
\begin{theorem}[Consistency]
  \label{thm:consistency}
  $\E$ is consistent iff $\E$ is value consistent and node kind consistent.
\end{theorem}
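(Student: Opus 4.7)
The theorem has been set up so that all the substantive work is done in Theorem~\ref{thm:pre-des-cone}, Lemma~\ref{lem:not-ctwo-implies-inconsistent}, and Lemma~\ref{lem:cone-and-ctwo-implies-solution-exists}. My plan is therefore to splice these together cleanly, using the remark from Section~\ref{sec:data-exchange} that every solution to $\E$ for a source instance $I$ contains the core pre-solution $J_0$ for $I$.

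For the $(\Leftarrow)$ direction I would assume $\E$ is value consistent and node kind consistent, take an arbitrary consistent source instance $I$, and observe that by Definitions~\ref{def:cone-instance} and~\ref{def:ctwo-instance} the core pre-solution $J_0$ for $I$ is itself value consistent and node kind consistent. Lemma~\ref{lem:cone-and-ctwo-implies-solution-exists} then supplies a solution, so $\E$ is consistent. Nothing further is needed here.

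For the $(\Rightarrow)$ direction I would argue the contrapositive and split into two cases according to which form of consistency fails. If $\E$ is node kind inconsistent, Lemma~\ref{lem:not-ctwo-implies-inconsistent} (whose stated hypothesis ``value inconsistent'' I read as a typographical slip for ``node kind inconsistent'' — its role in the present theorem makes the intended meaning unambiguous) directly provides a source instance with no solution. If $\E$ is value inconsistent, Definition~\ref{def:cone-instance} together with Theorem~\ref{thm:pre-des-cone} furnishes a source instance $I$ whose core pre-solution $J_0$ contains a violation $\{T(f(\bar a)),\ \Triple(f(\bar a), p, b),\ \Triple(f(\bar a), p, b')\}$ with distinct constants $b \neq b'$ and with $\delta(T,p) = S^\mu$ for $\mu \in \{\ONE, \MAYBE\}$. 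Since every $J \in \sol_\E(I)$ must contain $J_0$ and must satisfy the predicate functionality rule in $\Sigma_\Sbold^\pf$, such a $J$ would be forced to equate $b$ and $b'$, contradicting the unique name assumption. Hence $I$ admits no solution, and $\E$ is inconsistent.

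The only mild obstacle is the bookkeeping in the value-inconsistency case: one has to confirm that the witness $I$ produced by Theorem~\ref{thm:pre-des-cone} is indeed a consistent source instance (which is part of that theorem's guarantee, as the quantifier in the second bullet ranges over solutions to $\Sigma_\fd$), and that the two offending $\Triple$ facts really appear in every solution, which reduces to the containment $J_0 \subseteq J$ and the distinctness of the constants $b, b'$ under UNA. Once these are in hand, the two directions assemble into the biconditional with no additional effort.
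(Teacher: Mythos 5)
Your proposal is correct and follows essentially the same route as the paper: the paper also obtains the theorem by combining Lemma~\ref{lem:cone-and-ctwo-implies-solution-exists} for the ``if'' direction with (a corollary of) Theorem~\ref{thm:pre-des-cone} and Lemma~\ref{lem:not-ctwo-implies-inconsistent} for the contrapositive of the ``only if'' direction, using that the core pre-solution is contained in every solution. Your reading of the hypothesis of Lemma~\ref{lem:not-ctwo-implies-inconsistent} as a typo for ``node kind inconsistent'' matches the intended meaning, and your extra care that the value-inconsistency witness satisfies $\Sigma_\fd$ is consistent with how Theorem~\ref{thm:pre-des-cone} is used in the paper.
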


Finally, we show that checking consistency of a constructive data exchange setting is co-NP complete.
The lower bound is shown using a reduction to the complement of SAT.
\begin{theorem}[Complexity of consistency]
  \label{thm:consistency-intractable}
  Checking consistency of a constructive relational to RDF data exchange setting
  is coNP-complete.
\end{theorem}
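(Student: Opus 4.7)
By Theorem~\ref{thm:consistency}, $\E$ is inconsistent iff it fails value consistency or node kind consistency, so a nondeterministic procedure for inconsistency need only guess and verify a polynomial-size witness of one of these failures. Theorem~\ref{thm:pre-des-cone} already supplies such a witness for value inconsistency: a violation sort $(T,f,p)$, an elementary accessibility sequence $\pi = \sigma_0, \ldots, \sigma_n$ (of length at most $|\Sigma_\st|$), a pair of contentious st-tgds $\sigma, \sigma'$, and the result of the $\Sigma_\fd$-chase of the polynomial-size instance $\Instpisigma$. Since this egd-chase terminates in polynomial time and the inequality $(\hpisigma \circ h)(t) \neq (\hpisigma \circ h)(t')$ (or chase failure) is a direct check, verification runs in polynomial time. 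For node kind inconsistency I appeal to the analogous polynomial-size structural witness developed in Section~\ref{sec:ctwo} and Appendix~\ref{app:proofs-consistency}. Together these checks place inconsistency in NP and consistency in coNP.

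\paragraph{Lower bound.}
I plan to reduce from SAT to inconsistency. Given a 3-CNF formula $\varphi = C_1 \wedge \cdots \wedge C_m$ over variables $x_1, \ldots, x_n$, I construct in polynomial time a constructive setting $\E_\varphi$ as follows. The source schema has a single binary relation $V(\mathit{var}, \mathit{val})$ with FD $\mathit{var} \to \mathit{val}$, so consistent sources encode partial truth assignments. The target has types $U_0, \ldots, U_m$ (with $T := U_m$), shape constraints $\delta(U_{j-1}, q_j) = U_j^{\MANY}$ for $j = 1, \ldots, m$ and $\delta(T, p) = \Literal^{\ONE}$, and the IRI library contains unary constructors $g_0, \ldots, g_m$ with $f := g_m$. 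The st-tgds are: one initializer $V(i, v) \Rightarrow U_0(g_0(0))$; for each clause $C_j$ and each of its literals $\ell$, one rule $V(i_\ell, \mathit{pol}_\ell) \Rightarrow \Triple(g_{j-1}(0), q_j, g_j(0))$; and two contentious rules $V(i, v) \Rightarrow \Triple(g_m(0), p, \mathit{yes})$ and $V(i, v) \Rightarrow \Triple(g_m(0), p, \mathit{no})$ with distinct literal constants $\mathit{yes} \neq \mathit{no}$. By construction $(T, g_m)$ is accessible through any sequence $\pi$ that picks one literal per clause, and $\Instpisigma$ consists precisely of the $V$-tuples demanded by this choice (together with fresh $V$-tuples from the bodies of $\sigma, \sigma'$). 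By Theorem~\ref{thm:pre-des-cone}, $\E_\varphi$ is value-inconsistent iff some such $\Instpisigma$ admits a $\Sigma_\fd$-solution, i.e., the chosen per-clause literals are simultaneously consistent under the FD; this is equivalent to satisfiability of $\varphi$, since a satisfying assignment supplies a consistent per-clause choice of true literals and, conversely, any consistent per-clause choice extends to a satisfying total assignment.

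\paragraph{Expected main obstacle.}
The main subtlety, and where I will spend most care, is avoiding unintended accessibility paths or spurious egd failures. Using fresh predicates $q_1, \ldots, q_m$ all distinct from $p$ ensures $(T, g_m)$ is accessible only through the intended clause chain; the shared constant argument $0$ to each $g_j$ pins accessibility to the single IRI $g_m(0)$; and literal object constants $\mathit{yes} \neq \mathit{no}$ on the heads of the contentious rules force the violation independently of the fresh nulls introduced when renaming the bodies of $\sigma, \sigma'$. Node kind consistency is preserved because literal and non-literal roles in the construction are kept disjoint, so value inconsistency is the only possible failure mode. Combining this reduction with the coNP upper bound then yields coNP-completeness.
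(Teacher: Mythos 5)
Your proposal is correct and follows essentially the same route as the paper: the upper bound uses the same polynomial-size certificate (violation sort, elementary accessibility sequence, contentious pair, and the $\Sigma_\fd$-chase of the induced source instance, plus the separate node-kind check), and the lower bound is the same SAT-to-inconsistency reduction built from a chain of types and constructors with one link per clause, gated by literal truth and terminated by a $\Literal$-valued $\ONE$-constraint that two always-firing rules violate, with the source functional dependency enforcing a coherent assignment. The only substantive difference is cosmetic but worth flagging: you encode variable identity and polarity via ground constants in st-tgd bodies and pin the chain to a single IRI via the constant argument $g_j(0)$, whereas the paper uses per-variable relations $R_i$ joined with $V_{\mathbf{t}}/V_{\mathbf{f}}$ on a shared key variable $x$ and builds the chain $f_1(x),\ldots,f_{m+1}(x)$ parametrized by that variable. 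Since Definition~\ref{def:accessible} and the construction of $I_{\pi,\sigma,\sigma'}$ underlying Theorem~\ref{thm:pre-des-cone} assume constructor arguments are tuples of (body) variables, your appeal to that theorem for the correctness of the reduction is not directly licensed in the presence of ground constructor arguments; either replace the constant $0$ by a shared key variable as the paper does, or argue the chase behavior of $\E_\varphi$ directly (as the paper's proof in fact does, via a short induction on the clause chain) rather than through the general characterization.
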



} 

\subsection{Non-Constructive st-tgds, Non-Deterministic Shape Schemas}
\label{sec:consistency-non-constructive}
The consistency checking algorithm can be extended to non-constructive data exchange settings but the lower co-NP complexity bound is not preserved by the extension.
Consider a data exchange setting $\E = (\Rbold, \Sbold, \Sigma_\st,\Fbold)$ with $\Sbold = (\Tcal, \delta)$ in which
the st-tgds in $\Sigma_\st$ can contain existential rules of the form $\varphi \Rightarrow \exists \bar{y}. \psi$ where function terms use only universally quantified variables.
We illustrate consistency checking on an example.

\begin{example}
  \label{ex:illustartion-acc-eq-rev}
  Consider shapes schema with types $T,U$ and rules $\delta(T,\p) = U^\ONE$ and $\delta(U,\q) = \Literal^\MAYBE$, and the st-tgds
  \begin{align*}
    &R(x,y,w) \Rightarrow \Triple(f(x),\p,g(y))\\
    &S(x',y') \Rightarrow \exists z'.T(f(x')) \land \Triple(f(x'),\p,z') \land \Triple(z',\q,y')\\
    &R(x'',y'',w'') \Rightarrow \Triple(g(y''),\q,w'')
  \end{align*}
  Even in presence of existential variables, we can statically infer that the st-tgd head atoms $\Triple(z',\q,y')$ and $\Triple(g(y''),\q,w'')$ are \emph{contentious}, then construct the source instance $I = \{R(x,y,z), S(x, y'), R(x'', y, z'')\}$ witness of value inconsistency of the data exchange setting at hand.
  Indeed, the core pre-solution to $I$ contains the facts $\{\Triple(f(x),\p,g(y)),$ $T(f(x)),$ $\Triple(f(x), \p, \bot_1), \Triple(\bot_1,\q,y'), U(\bot_1), \Triple(g(y),\q,w'')\}$.
  Triggering the predicate functionality rule for $\delta(T,p)$ we equate $\bot_1$ with $g(y)$.
  Then the last three atoms in $I$ constitute a violation of the predicate functionality rule for $\delta(U,q)$.

  The instance $I$ is discovered by exploring the possible interactions between the rules coming from the shape schema and the st-tgds' heads.
  We first remark that the terms $f(x)$ and $f(x')$ are \emph{equatable} (i.e. the target values produced by them might be equal as they are produced by the same IRI constructor), then type $T$ is \emph{accessible} for $f(x)$ due to the second st-tgd.
  The terms $g(y)$ and $z'$ are also equatable due to predicate functionality of $p$ for type $T$, and so are $g(y)$ and $g(y'')$ (same IRI constructor).
  Also, type $U$ is accessible for $g(y)$, so also for $z'$ and $g(y'')$ (type propagation of $\delta(T,p) = U^\ONE$).
  Thus the target atoms (generated during chase from) $\Triple(z',q,y')$ and $\Triple(g(y''),q,w'')$ can both have as subject the same value $g(y)$, and trigger a violation due to the predicate functionality $\delta(U,q)$.
  \qed
\end{example}

Similarly to the case of constructive data exchange settings, the consistency checking algorithm is based on the fact that $\E$ is value inconsistent iff there exists a value inconsistent instance $I$ among a finite set $V$ of source instances.
The latter is characterized by the presence of contentious atoms in st-tgds' heads, which in turn are discovered by a Datalog program.
Formal definitions and description of the algorithm are given in Appendix~\ref{app:consistency-non-constructive}, and allow us to establish 


\begin{theorem}
  \label{thm:consistency-non-constructive}
  Consistency is decidable for data exchange settings with existential st-tgds.
\end{theorem}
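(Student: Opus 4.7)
The plan is to extend the characterization behind Theorem~\ref{thm:pre-des-cone} from constructive to existential st-tgds by replacing the syntactic notion of accessibility and contentiousness with a semantic saturation that tracks how chase can unify terms coming from existential heads with IRI-constructed terms. The overall architecture of the proof stays the same: reduce consistency to the absence of a predicate-functionality violation (value consistency) plus the absence of a literal/non-literal co-typing (node kind consistency); for each of the two checks, exhibit a finite family $V$ of candidate source instances such that $\E$ is inconsistent iff some instance in $V$ witnesses the corresponding failure; and verify each candidate by chasing with $\Sigma_\fd$ alone, which is decidable.

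First, I would define, over the finite set of head terms occurring in $\Sigma_\st$ enriched with the existential nulls, three inference predicates computed by a terminating saturation: (i) an \emph{equatable} relation $\sim$ on terms, seeded by $f(\bar u)\sim f(\bar v)$ for matching IRI constructors and closed under the unifications forced by predicate functionality; (ii) an \emph{accessible} relation $\Acc(T,t)$ asserting that term $t$ may carry type $T$ in some core pre-solution, seeded by head atoms $T(f(\bar y))$ and propagated via $\delta(T,p)=S^\mu$ through edges $\Triple(s,p,t)$ with $s\sim s'$ and $\Acc(T,s')$; and (iii) a derived \emph{contentious} relation on pairs of head atoms $\Triple(s_1,p,t_1)$, $\Triple(s_2,p,t_2)$ satisfied when $s_1\sim s_2$ with both accessible for some type $T$ having $\delta(T,p)\in\{S^{\ONE},S^{\MAYBE}\}$. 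Termination is immediate as the underlying domain of terms is finite; the rules are Datalog-style and compute the least fixpoint.

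Second, mirroring the construction of $\Instpisigma$ in the constructive case, I would, for each contentious pair together with a derivation witnessing accessibility, glue the bodies of the st-tgds occurring in the derivation into a single source instance using the substitution dictated by the saturation. The key new ingredient is that existential variables in heads become fresh nulls that chase subsequently unifies with IRI-constructed terms via \PF rules of $\Sbold$; the equatability relation has been designed precisely to record exactly those unifications. I would then establish the analogue of Proposition~\ref{prop:existence-violation}: a core pre-solution of some source instance $I$ contains a \PF violation iff a contentious pair with a supporting derivation exists and the witness instance survives chasing with $\Sigma_\fd$ without forcing the two sides of the violation to equal constants. Node kind consistency is handled by the same machinery, tracking co-typing with $\Literal$ along the saturation instead of contention. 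Decidability follows because $V$ is finite and each chase with $\Sigma_\fd$ on a finite instance terminates.

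The main obstacle will be proving completeness of the saturation: one must show that every actual predicate-functionality violation arising from a possibly long chase with existentials and the full $\Sigma_\Sbold$ is already captured by a derivation of bounded length in the Datalog program. This requires a normalization of chase sequences in which every unification between an existential null and an IRI-constructed term is justified by a \PF trigger whose subject's type has an accessibility derivation, and a pumping-style argument showing that redundant repetitions in such derivations can be shortened without losing the final violation. Once this is in place, decidability of consistency follows from the finite representability of $V$ and the decidability of chasing with $\Sigma_\fd$; I would not expect the extension to remain in coNP, which is consistent with the paper's remark that the complexity bound is not preserved.
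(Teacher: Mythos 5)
Your overall architecture matches the paper's: a terminating Datalog-style saturation over the finite set of head terms computing an accessibility relation and an equatability relation (the paper's $\Acc$ and $\Eq$, with $\Eq$ seeded by shared IRI constructors and closed under the unifications forced by \PF rules), a derived contentiousness relation, a backchase from each derivation (the paper's ``proof trees'') to a finite family of candidate source instances, and a final check of each candidate against $\Sigma_\fd$; node kind consistency is handled exactly as in the constructive case. There is, however, one concrete ingredient missing from your saturation: the paper additionally computes a \emph{revealed} predicate $\Rev$ identifying the terms that are guaranteed to denote \emph{constants} in the chase --- universally quantified variables and IRI-constructed terms by the base rules, plus any existential null that a \PF trigger forces equal to an already-revealed term --- and it requires \emph{both objects} of a contentious pair to be revealed. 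Your contentious relation only constrains the subjects ($s_1\sim s_2$, both accessible for a type $T$ with $\delta(T,p)\in\{S^{\ONE},S^{\MAYBE}\}$) and says nothing about $t_1,t_2$. This matters because a \PF egd that equates a renamable existential null with anything does not fail; a genuine violation needs two \emph{distinct constants} as objects. Without the $\Rev$ filter your procedure would flag spurious inconsistencies whenever an existential head produces a null object alongside a constant one, so the ``only if'' direction of your analogue of Proposition~\ref{prop:existence-violation} would be false as stated. The fix is exactly the paper's $\Rev$ rules; with that added, your plan coincides with the paper's proof, including your (correct) identification that the remaining work is a normalization/completeness argument for the saturation, which the paper itself only sketches by analogy with Theorem~\ref{thm:pre-des-cone}.
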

The exact complexity of the decision procedure is left for future work.

Finally we point out that if we consider non-deterministic shape schemas, then the complexity of checknig consistency increases.
The proof is given in Appendix~\ref{app:complexity-non-det}.

\begin{theorem}
  \label{thm:consistency-nondeterministic-schema}
  Checking consistency of a constructive relational to RDF data exchange setting with nondeterministic shape schema is $\Pi_2^p$-hard.
\end{theorem}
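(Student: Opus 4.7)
The plan is to reduce from $\forall\exists$-3SAT, the canonical $\Pi_2^p$-complete problem: given $\Psi = \forall x_1 \ldots x_n\, \exists y_1 \ldots y_m.\, \phi(\bar x,\bar y)$ with $\phi$ a 3-CNF $C_1 \wedge \cdots \wedge C_k$, decide whether $\Psi$ is true. In polynomial time I will build a constructive relational to RDF data exchange setting $\E_\Psi$ whose target shape schema $\Sbold_\Psi$ is nondeterministic (i.e.\ has more than one shape constraint per $(T,p)$), and prove that $\E_\Psi$ is consistent iff $\Psi$ is true. The idea is that the universal quantifier $\forall I$ over consistent source instances implements $\forall \bar x$, while the existential quantifier over solutions---more precisely, over valid typings of the target graph---implements $\exists \bar y$. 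What we gain over the deterministic case of Theorem~\ref{thm:consistency-intractable} is exactly the ability of nondeterministic schemas to accept a graph via a choice of typing, and this is what will encode the $m$ independent binary choices of $\bar y$.

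The source schema has a single relation $\mathrm{Assn}(\underline{i}, v)$ with functional dependency $i \to v$, so consistent source instances correspond to partial truth assignments to $x_1,\ldots,x_n$. Non-overlapping IRI constructors produce target IRIs $x_i^{\mathrm{iri}}$, $y_j^{\mathrm{iri}}$, $C_\ell^{\mathrm{iri}}$, together with a few auxiliary IRIs used by gadgets. The full constructive st-tgds assemble from $I$ an RDF skeleton whose shape is entirely determined by $\Psi$ except that each $x_i$-node is typed either $X_\top$ or $X_\bot$ according to the value $\mathrm{Assn}(i,\cdot)$ provides. A small padding gadget using defaulting st-tgds handles indices $i$ that $I$ does not mention, and extra tuples whose first component lies outside $\{1,\ldots,n\}$ are routed to a trivially valid subgraph, so that partial or ``junk'' source instances cannot accidentally witness inconsistency.

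The nondeterministic shape schema $\Sbold_\Psi$ is designed so that a valid typing of the generated target graph exists iff some $\bar y \in \{0,1\}^m$ satisfies $\phi$ under the $\bar x$ encoded by $I$. For each $y_j$-node, $\Sbold_\Psi$ provides shape constraints that make $Y_\top$ and $Y_\bot$ the only two admissible typings and leave the choice free; this is the essential use of nondeterministic constraints attached to a common $(T,p)$ pair that $y_j^{\mathrm{iri}}$ inherits. Each clause node $C_\ell^{\mathrm{iri}}$ carries constraints on a predicate $\mathrm{lit}$ that are collectively satisfiable exactly when at least one of its three literal-neighbors carries the ``true'' type ($X_\top$, $X_\bot$, $Y_\top$, or $Y_\bot$, according to the polarity of the corresponding literal). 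Hence a valid typing of the target graph exists iff there is an assignment to $\bar y$ which, together with the $\bar x$ read off from $I$, satisfies every clause of $\phi$.

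The main obstacle, and where the careful engineering lies, is the third paragraph: designing (i) a ``switch'' gadget on each $y_j$-node using nondeterministic constraints so that exactly the two typings $Y_\top$ and $Y_\bot$ are admissible and the choice is free; (ii) a ``clause'' gadget whose nondeterministic constraints exactly forbid the typing in which all three literal-neighbors are ``false''; and (iii) keeping the whole construction within full constructive st-tgds and non-overlapping IRI constructors, so that the only feature escaping Theorem~\ref{thm:consistency-intractable} is the dropping of the determinism restriction on $\Sbold$. Once these gadgets are in place, both directions of $\E_\Psi$ consistent $\iff$ $\Psi$ true follow by direct inspection: a valid typing yields a satisfying $\bar y$, and conversely a satisfying $\bar y$ yields a valid typing. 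The reduction is clearly polynomial since $|\Sbold_\Psi|$ and $|\Sigma_\st|$ are linear in $|\phi|$.
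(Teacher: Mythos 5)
Your high-level strategy is exactly the paper's: reduce from $\forall\exists$\textsf{QBF}, let the universal quantification over consistent source instances play $\forall\bar{x}$, let the existential quantification over solutions (equivalently, over admissible typings) play $\exists\bar{y}$, and use nondeterministic shape constraints both to give each $y_j$-node a free binary choice and to express the clause disjunction. The problem is that your third and fourth paragraphs openly defer the construction of the switch gadget, the clause gadget, and the proof that ``exactly the two typings $Y_\top$ and $Y_\bot$ are admissible''---and those gadgets \emph{are} the theorem. In this framework types are sets, not mutually exclusive labels, and a typing must extend whatever the st-tgds and type propagation force; so ``$Y_\top$ or $Y_\bot$ but not both, chosen freely'' has to be manufactured. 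The paper does it with a genuinely intricate widget: each variable $v$ gets three IRIs $f_v(x), f_v^{\mathbf{t}}(x), f_v^{\mathbf{f}}(x)$, the root is forced to carry two types $T_v$ and $T_v^{\mathit{val}}$ whose nondeterministic constraints on the same predicates can only be reconciled by committing each of $f_v^{\mathbf{t}}(x), f_v^{\mathbf{f}}(x)$ to exactly one of two subtypes (enforced via $\mathbf{1}\dbl T_\varnothing^{\ONE}$ versus $\mathbf{0}\dbl T_\varnothing^{\NONE}$ and an $o$-edge tying the two apart), and the clause type $C \to l\dbl T_{\mathbf{t}}^{\PLUS},\ l\dbl T_{\mathbf{f}}^{\MANY}$ demands at least one literal-neighbor in the ``true'' subtype. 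Asserting that such gadgets exist ``by careful engineering'' and that correctness then ``follows by direct inspection'' leaves the entire technical content unproved.

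There is also a step that fails as stated: your ``defaulting st-tgds'' for indices $i$ that $I$ does not mention cannot exist, because st-tgd bodies are conjunctions of positive atoms and therefore cannot fire on the \emph{absence} of a tuple. The correct way around this (and the one the paper takes) is to make unmentioned variables harmless rather than defaulted: if no gadget is generated for a variable occurrence, the corresponding literal node is unconstrained and the solution may freely type it so as to satisfy the clause, so partial instances never witness inconsistency, while the completeness direction only ever needs the fully-assigned instances $I_V$. Relatedly, your single relation $\mathrm{Assn}(\underline{i},v)$ does not explain how an st-tgd body selects between $X_\top$ and $X_\bot$ from the value $v$ (which a consistent instance may populate with arbitrary constants); the paper sidesteps this by splitting the source into $V_{\mathbf{t}}, V_{\mathbf{f}}$ and per-variable keyed relations $R_{x_i}$, so that the truth value is determined by \emph{which join succeeds} and the key on $R_{x_i}$ guarantees at most one does. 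Until the two gadgets are written down and these source-encoding issues are repaired, the reduction is a plan rather than a proof.
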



\section{Certain Query Answering}
\label{sec:query-answering}
In this section we investigate computing certain query answers. We focus mainly
on Boolean queries as it allows us to more easily present our constructions and
compare various classes of queries; later on we extend our results to
non-Boolean queries. Throughout this section we fix a constructive data exchange
setting $\E=(\Rbold,\Sbold,\Sigma_\st,\Fbold)$ and assume $\E$ is consistent. We
recall that for a Boolean graph query $Q$, $\mathit{true}$ is the \emph{certain
  answer} to a query $Q$ in $I$ w.r.t.\ $\E$ iff $\emph{true}$ is the answer to
$Q$ in every solution to $\E$ for $I$.

The standard approach to computing certain answers is to construct a universal
solution with the chase and evaluate the query against it (and if the query is
non-Boolean, we drop any answers that use null values). However, in the case of
consistent constructive relational to RDF data exchange, a finite universal
solution may not exists as it is the case in the example in Section
\ref{sec:example}. Indeed, the mutually recursive types \Bug and \User cause the
chase to loop ad infinitum: the user \textit{Edith} results in the node
\texttt{usr:2} of type $\User$ which required to track at least one
problem. Since in the relational database instance \textit{Edith} does not track
any bug, the chase needs to ``invent'' a fresh null IRI of type $\Bug$. This
node is required to have a user that has reported it and again the chase
``invents'' another fresh null IRI of type $\User$, and so on.

Instead, we construct a solution, where we avoid inventing nodes with the same
set of types, thus creating loops as illustrated in Figure~\ref{fig:rdf}. While
this solution is not universal, it seems quite natural, and interestingly, we
show that it has a different flavor of universality, one that can be captured
with the standard notion of graph simulation: any solution can be simulated in
it. We also show that this notion of universality is good enough for classes of
queries that are robust under simulation, and we identify a practical class of
forward nested regular expressions with this property. This yields a practical
class of queries with tractable consistent answers under data complexity. We
show that extending this fragment to full nested regular expression leads to
significant complexity increase. Finally, we also show that existing result on
chase with guarded tdgs and egds can be used to compute certain answers to
conjunctive queries.

\paragraph{Nested regular expressions} 
In this paper we focus mainly on the class of \emph{nested regular expressions}
(NREs) that have been proposed as the navigational core of
SPARQL~\cite{PEREZ2010255}. In essence, NREs are regular expressions that use
concatenation $\cdot$, union $+$, Kleene's closure $*$, inverse $-$, and permit
nesting and testing node and edge labels. Formally, NREs are defined with the
following grammar:
\[
E \coloncolonequals 
\epsilon \mid 
p \mid 
\Box \mid 
\langle \ell \rangle \mid 
[E] \mid 
E^* \mid 
E^- \mid 
E\cdot E \mid 
E+E 
\]
where $p\in\Pred$, $\ell\in\Iri\cup\Lit$, and $\Box$ is a distinguished wildcard
predicate symbol. An NRE is \emph{forward} (NRE$^\rightarrow$) if it does not
use the inverse operator. An NRE $E$ defines a binary relation $\sem{E}_G$ on
nodes of a graph $G$ as follows.
\begin{align*}
  &\sem{\epsilon}_G =\{(n,n)\mid n \in \nodes(G) \},
  &
  &\sem{[E]}_G=\{(n,n)\mid \exists m.\, (n,m)\in \sem{E}_G \},
  \\
  &\sem{p}_G=\{(n,m)\mid (n,p,m) \in G \},
  &
  &\sem{E_1+E_2}_G =\sem{E_1}_G \cup \sem{E_2}_G,
  \\
  &\sem{\Box}_G=\{(n,m)\mid \exists p\in \Iri. \, (n,p,m) \in G \},
  &
  &\sem{E_1\cdot E_2}_G=\sem{E_1}_G \circ \sem{E_2}_G,
  \\
  &\sem{\langle \ell \rangle}_G=\{(n,n)\mid n \in \nodes(G) \land n=\ell \},
  &
  &\sem{E^*}_G=\sem{E}_G^*,\qquad\sem{E^-}_G=\sem{E}_G^{-1}.
\end{align*}
An NRE $E$ is \emph{satisfied} in a graph $G$ iff $\sem{E}_G\neq\emptyset$.  We
point out that NREs are incompatible with conjunctive queries but even forward
NREs capture the subclass of acyclic conjunctive queries. Also, NREs (forward
NREs) properly captures 2-way regular path queries (regular path queries, resp.)

\subsection{Universal simulation solution}

\paragraph{Graph simulation and robust query classes}
We adapt the classic notion of graph simulation to account for null
values. Formally, a \emph{simulation} of a graph $G$ by a graph $H$ is a
relation $R\subseteq\nodes(G)\times \nodes(H)$ such that for any $(n,m) \in R$,
we have 1) $n$ is a literal node if and only if $m$ is a literal node, 2) if $n$
is not null, then $m$ is not null and $n=m$; and 3) for any outgoing edge from
$n$ with label $p$ that leads to $n'$ there is a corresponding outgoing edge
from $m$ with label $p$ that leads to $m'$ such that $(n',m')\in R$. The set of
simulations is closed under union, and consequently, there is always one maximal
simulation, and if $(n,m)$ is contained in it, we say that $n$ is simulated by
$m$.
Also, we say that $G$ is \emph{simulated} by $H$ if every node of $G$ is
simulated by a node of $H$. We are interested in simulation because it captures
the essence of exploring a graph by means of following outgoing edges only.
\begin{definition}
  A class $\mathcal{Q}$ of Boolean queries on graphs is \emph{robust under
    simulation} iff for any query $Q\in\mathcal{Q}$ and any two graph $G$ and
  $H$ such that $G$ is simulated by $H$, if $Q$ is true in $G$, then $Q$ is true
  in $H$. \qed
\end{definition}
The class of patterns presented above has this very property, which is shown
with an induction on the structure of the query. We point out, however, that our
approach is not restricted to forward NREs only.
\begin{lemma}
	\label{lem:regular-is-rub}
	The class of forward nested regular expressions is robust under simulation.
\end{lemma}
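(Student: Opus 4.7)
The plan is to proceed by structural induction on the forward NRE, but to make the induction go through one must strengthen the statement so that the induction hypothesis transports not just the existence of a match but the specific pair realizing it. Concretely, I would fix a simulation $R\subseteq\nodes(G)\times\nodes(H)$ witnessing that $G$ is simulated by $H$, and prove the following claim: for every forward NRE $E$ and every $(n,m)\in\sem{E}_G$, whenever $(n,n')\in R$ there exists $m'\in\nodes(H)$ with $(m,m')\in R$ and $(n',m')\in\sem{E}_H$. The Boolean robustness statement then follows immediately: if $\sem{E}_G\neq\emptyset$, pick $(n,m)\in\sem{E}_G$; since $G$ is simulated by $H$, some $n'$ simulates $n$; the claim provides $m'$ with $(n',m')\in\sem{E}_H$, so $\sem{E}_H\neq\emptyset$.

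For the base cases, $E=\epsilon$ is trivial with $m'=n'$. For $E=p$ and $E=\Box$, the fact that $(n,m)\in\sem{E}_G$ means there is an outgoing $p$-edge (resp.\ some outgoing edge) from $n$ to $m$ in $G$; clause~(3) of simulation provides a matching outgoing edge in $H$ from $n'$ together with an endpoint $m'$ with $(m,m')\in R$. The case $E=\langle\ell\rangle$ is where clause~(2) is crucial: $(n,n)\in\sem{\langle\ell\rangle}_G$ implies $n=\ell$ is a constant, and by simulation $n'=\ell$ as well, giving $(n',n')\in\sem{\langle\ell\rangle}_H$.

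For the inductive cases, $+$ and $\cdot$ follow by direct application of the induction hypothesis (for concatenation, apply the hypothesis first to the left factor to obtain an intermediate simulated node in $H$, and then to the right factor from that node). For the test $[F]$, observe that $(n,n)\in\sem{[F]}_G$ gives some $m$ with $(n,m)\in\sem{F}_G$; the induction hypothesis yields $m'$ with $(n',m')\in\sem{F}_H$, hence $(n',n')\in\sem{[F]}_H$ and we take $m'=n'$. For Kleene star $F^*$, do a secondary induction on the number of $F$-iterations linking $n$ to $m$, chaining the primary induction hypothesis along $R$ at each step.

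The main obstacle, and the precise reason for restricting to the forward fragment, is that every clause of the simulation definition is directional: simulation propagates only along \emph{outgoing} edges. If we allowed $E^-$, a match $(n,m)\in\sem{E^-}_G=\sem{E}_G^{-1}$ would require reconstructing an incoming pattern at $n'$ in $H$, but a simulator of $n$ is not required to inherit the in-neighbors of $n$, so the induction would fail at that case. Every other operator of the grammar matches the directionality of simulation, which is what makes the structural induction go through uniformly.
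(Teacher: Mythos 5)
Your proposal is correct and follows essentially the same route as the paper: the paper likewise strengthens the statement to a claim that transports the matched pair along the simulation relation (its Lemma on NRE semantics under simulation), proves it by structural induction with the same case analysis, and then derives Boolean robustness exactly as you do. Your handling of Kleene star by a secondary induction on the number of iterations is a little cleaner than the paper's decomposition, and your closing remark on why the inverse operator breaks the induction matches the paper's motivation for the forward restriction.
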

The related notion of bisimulation has found application in normalizing blank
nodes and essentially minimizing RDF graphs without altering its informational
contents~\cite{TzLaZe12}. Formally, a \emph{bisimulation} of a graph $G$ is a
simulation $R$ of $G$ by $G$ that is symmetric and reflexive. Again, there
exists a maximal bisimulation of any graph $G$, which we denote by
$\bisimulates$. We use the maximal bisimulation of a graph $G$ to construct it
\emph{reduct} $G/_\bisimulates$, which is the standard the quotient of the graph
$G$ and the equivalence relation $\bisimulates$ and replaces nodes of every
equivalence class by a single representative (details in
appendix). 
The main property that we employ in our proofs is that of the reduct of a typed
graph satisfies precisely the same shape schemas and the same queries
from any class robust under simulation, and furthermore it is the smallest typed
graph to have this property.

\paragraph{Universal simulation solution}
When dealing with classes of queries that are robust under simulation we employ
simulation instead of homomorphism to define a solution that allowing to find
all certain answers.
\begin{definition}
  A typed graph $\U$ is a \emph{universal simulation solution} to $\E$ for $I$
  iff $\U$ is simulated by every solution $J$ to $\E$ for $I$.  \qed
\end{definition}
And indeed, a universal simulation solution does allow us to capture certain
answers for queries from classes robust under simulation.
\begin{theorem}
	\label{thm:universal-satisfies-tree}
   Let $\mathcal{Q}$ be a class of Boolean graph queries robust under
   simulation. For any query $Q\in\mathcal{Q}$ and any consistent instance $I$ of
   $\Rbold$, true is the certain answer to $Q$ in $I$ w.r.t.\ $\E$ if and only
   if true is the answer to $Q$ in any universal simulation solution to $\E$
   for $I$.
\end{theorem}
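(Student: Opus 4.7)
The plan is to establish the biconditional from two key features of a universal simulation solution~$\U$: (i) $\U$ is simulated by every solution to $\E$ for $I$, and (ii) $\U$ is itself such a solution. Combined with the robustness hypothesis on $\mathcal{Q}$, this gives a direct two-direction argument and no induction on queries is needed (the structural work is hidden inside Lemma~\ref{lem:regular-is-rub} and inside the construction of $\U$).

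For the ``if'' direction, assume $Q$ is true in $\U$ and fix an arbitrary solution $J \in \sol_\E(I)$. By the defining property of a universal simulation solution, $\U$ is simulated by $J$; since $Q \in \mathcal{Q}$ and $\mathcal{Q}$ is robust under simulation, the truth of $Q$ in $\U$ transfers to $J$. As $J$ was arbitrary, $Q$ holds in every solution, which is precisely the statement that true is the certain answer to $Q$ in $I$ w.r.t.~$\E$.

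For the ``only if'' direction, assume true is the certain answer to $Q$, so $Q$ is true in every solution to $\E$ for $I$. Since $\U$ is itself a solution, $Q$ in particular holds in $\U$. The only genuinely nonroutine point in the whole argument is justifying that $\U$ is indeed a solution, rather than merely some typed graph simulated by every solution: this is baked into the informal notion of a universal simulation \emph{solution} and must be discharged by the construction sketched in the preceding paragraph, which produces a typed graph satisfying both the target shape schema $\Sbold$ and the st-tgds $\Sigma_\st$ while avoiding the introduction of new nodes whose set of types has already been realized.

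I expect this last point to be the only obstacle worth highlighting: without it the converse fails, because being simulated by every solution alone does not guarantee that $\U$ satisfies the shape constraints (for example, a $\ONE$- or $\PLUS$-mandated successor of a typed node could be present in every $J$ yet missing from $\U$, and nothing in the simulation condition would detect this). Thus the proof is essentially a two-line application of robustness plus one appeal to the construction of~$\U$ as a bona fide solution, and the remaining effort lies outside this statement, in exhibiting such a $\U$ for every consistent $I$.
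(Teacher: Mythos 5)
Your proof is correct and takes essentially the same route as the paper: the ``if'' direction combines robustness with the fact that $\U$ is simulated by every solution, and the ``only if'' direction appeals to $\U$ being itself a solution --- exactly the paper's own closing step (``it suffices to notice that a universal simulation solution is also a solution''), with the real work delegated to the construction of $\U_0=J_0\cup G_\Sbold/_\bisimulates$ and the proof that it is simulated by every solution. Your explicit caveat that being simulated by every solution does not by itself guarantee satisfaction of $\Sbold$, and that this must be discharged by the construction, accurately pinpoints the same implicit assumption the paper relies on.
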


The main challenge remains in constructing a universal simulation solution. We
begin with the \emph{core pre-solution} $J_0$ for $I$, which is the unique
minimal typed graph $J_0$ that satisfies the st-tgds $\Sigma_\st$ and the \TP
rules for $\Sbold$ (cf. Section~\ref{sec:preliminaries}).
The core pre-solution $J_0$ does not necessarily satisfy $\Sbold$ as it may
have \emph{frontier} nodes whose type requires outgoing edges that are
missing. To identify such nodes and add the necessary outgoing edges we first
identify the types associated to a node in a typed RDF graph
$\types_G(n) = \{T \mid T(n)\in G\}$. Also, we say that a type $T$
\emph{requires an outgoing $p$-edge} if $p\dbl S^\mu\in \delta(T)$ for some
$\mu\in\{\ONE,\PLUS\}$ and some type $S$, and by $\Req(X)$ we denote the set of
all IRIs that is required by any $T$ in $X$. Now, the frontier of $J_0$ is the
following set
\[
\Fbb = 
\{
(n,p) \mid n\in\nodes(J_0),\ p\in\Req(\types_{J_0}(n)),\ \nexists m.\,\Triple(n,p,m)\in J_0
\}.
\]
We also define a function that for a set of types $X$ satisfied at a node
indicates the set of types $\Delta(X,p)$ that must hold at any node reachable by
$p$-labeled edge
\[
\Delta(X,p) = \{S \mid \text{$p\dbl{}S^\mu\in\delta(T)$ for some $T\in X$ and $\mu\in\{\MAYBE,\ONE,\PLUS,\MANY\}$}\}
\]
Now, the set of additional null nodes that we add to $J_0$ is constructed in
an iterative process (where we identify each node with the set of types it is to
satisfy): $N=\bigcup_{i=0}^\infty N_i$, where
$N_0=\{\Delta(types(n),p) \mid (n,p)\in\Fbb\}$ and
$N_i=\{\Delta(X,p) \mid X \in N_{i-1},\ p\in\Req(X)\}$ for $i \geq 1$. Note that
we construct only subsets of the finite set of types $\Tcal$, and therefore,
this process eventually reaches a fix point. It may however be of size
exponential in the size of the schema, and in fact, with an elaborate example
using Chinese reminder theorem we can show that it is in fact a tight bound of
our construction. Now, the additional component of a universal simulation
solution is the following graph
\begin{multline*}
	G_\Sbold=\{(n,p,\Delta(\types_{J_0}(n),p))\mid (n,p) \in \Fbb \} \cup {} \\\{(X,p,\Delta(X,p)) \mid X\in N \land p\in \Req(X) \}\cup \{T(X) \mid X\in N \land T \in X \}.
\end{multline*}
We point out that $J_0\cup G_\Sbold$ does in fact satisfy $\Sigma_\st$ and
$\Sbold$ but it may not be the minimal universal simulation solution. However,
it suffices to take the bisimulation quotient of $G_\Sbold$ to ensure the
minimality: the constructed universal simulation solution is
$\U_0=J_0\cup G_\Sbold/_\bisimulates$. We point out that because $J_0$ does not
have any null  nodes, $\U_0=(J_0\cup G_\Sbold)/\bisimulates$.
\begin{theorem}
	\label{thm:quotient-solution-exp}
	For an instance $I$ of $\Rbold$, we can construct a size-minimal universal
   simulation solution $\U_0$ in time polynomial in the size of $I$ and
   exponential in the size of $\Sbold$. The size of $\U$ is bounded by a
   polynomial in the size of $I$ and an exponential function in the size of
   $\Sbold$.
\end{theorem}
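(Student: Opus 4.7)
}
The plan is to establish three things about the graph $\U_0 = (J_0 \cup G_\Sbold)/_\bisimulates$ constructed above: (i) it really is a universal simulation solution, (ii) it is size-minimal, and (iii) the construction has the claimed time and size bounds. The construction has already been spelled out, so the work is in verifying its properties.

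First I would check that $\U_0$ is itself a solution. Satisfaction of $\Sigma_\st$ reduces to a property of $J_0$, because the st-tgds are full tgds and $J_0$ was obtained by exhaustive chase with $\Sigma_\st$ together with the \TP rules. For $\Sbold$ one verifies each constraint $\delta(T,p) = S^\mu$ by cases on where the node of type $T$ sits: a non-frontier node of $J_0$ whose outgoing $p$-edges are handled by the \TP chase; a frontier node $n$ with $(n,p)\in\Fbb$, which now has the required $p$-edge to $\Delta(\types_{J_0}(n),p) \in N$; and a node $X\in N$, which by construction has edges to $\Delta(X,p)$ for each $p\in\Req(X)$, with types set so that $S\in\Delta(X,p)$ whenever $T\in X$. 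Since bisimulation quotient preserves both the validity w.r.t. $\Sbold$ and the satisfaction of $\Sigma_\st$ (only $J_0$-nodes appear in $\Sigma_\st$ triggers and they are non-null, so they survive the quotient as themselves), $\U_0$ remains a solution.

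The central step is to show that $\U_0$ is simulated by every $J\in\sol_\E(I)$. I would exhibit the simulation $R$ directly, as the union of the diagonal on $\nodes(J_0)$ (these nodes are non-null and must appear in $J$ because any solution extends the core pre-solution) with pairs $(X,m)$, where $X\in N$ and $m\in\nodes(J)$ satisfies $X\subseteq\types_J(m)$. Verifying the simulation clauses amounts to: for a pair $(n,p)\in\Fbb$ and any $m$ with $(n,m)\in R$, the constraints of $\Sbold$ force $J$ to contain a $p$-edge from $m$ to some $m'$ whose types include $\Delta(\types_{J_0}(n),p)$, and the same argument iterates through the layers $N_i$; the literal-vs-non-literal and null-vs-non-null conditions are handled by how we set $\Delta$ and $\Req$. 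Passing to the bisimulation quotient only merges $R$-equivalent nodes, hence preserves the simulation.

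For size-minimality I would argue that any universal simulation solution $\U'$ is simulated by $\U_0$ (since $\U_0$ is a solution) and, among universal simulation solutions that are also solutions, each is simulated by $\U_0$ while $\U_0$ is simulated by it, so the two are bisimilar; the bisimulation quotient is the unique minimum-size representative of a bisimulation class, hence $\U_0$ is minimal. For the complexity bounds, $J_0$ is computable in polynomial time in $|I|$ because the chase with full tgds and \TP rules introduces no nulls and produces only polynomially many facts; the set $N$ is a subset of $2^\Tcal$, so $|N| \le 2^{|\Tcal|}$ and $|G_\Sbold| \le (|J_0| + 2^{|\Tcal|})\cdot|\Pred|$; and the maximal bisimulation, hence the quotient, is computable in polynomial time in the size of $J_0\cup G_\Sbold$. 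The hardest step will be the simulation argument: one has to set $R$ up so the induction going through the $N_i$ layers does not stall when the matching node $m\in J$ is null, and so that the literal/non-literal distinction is respected uniformly; the remaining verifications are largely bookkeeping.
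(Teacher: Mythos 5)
Your overall route is the same as the paper's: build $J_0\cup G_\Sbold$, quotient by $\bisimulates$, prove that the result is a solution simulated by every solution, then argue size-minimality and the complexity bounds. The one genuine difference is the witness for universality. The paper defines the simulation by path-reachability: it pairs a node of $G_\Sbold$ reached from a frontier node $n_0$ along a word $\pi$ with any node of $J$ reached from $n_0$ along the same $\pi$, and needs an auxiliary induction on $|\pi|$ (Lemma~\ref{lem:same-path-in-uni-sol}) showing that $\pi$ is realized in $J$ and that the reached nodes carry at least the types $\Delta^*(\types_{J_0}(n_0),\pi)$. Your relation $\{(X,m)\mid X\subseteq\types_J(m)\}$ is the upward closure of that one; it lets you verify the simulation clauses directly from the \TP, \PF and \PE constraints of $\Sbold$ without the path induction, which is a modest but real simplification (one still has to check, as you note, that every $X\in N$ has at least one such $m$, which is exactly the content of the paper's path lemma). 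The complexity accounting ($|N|\le 2^{|\Tcal|}$, polynomial chase for $J_0$, polynomial-time bisimulation quotient) matches the paper's.

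The gap is in your minimality step. From ``each is simulated by $\U_0$ while $\U_0$ is simulated by it'' you conclude that the two graphs are bisimilar; mutual simulation (similarity) is strictly weaker than bisimilarity, so this inference is invalid in general, and the subsequent appeal to the bisimulation quotient being the minimum-size representative of its class does not go through as written. The paper argues minimality differently: it constructs an explicit injective map from $\nodes(\U_0)$ into the nodes of an arbitrary universal simulation solution $\U$ --- the identity on $J_0$, and each class of $G_\Sbold/_\bisimulates$ sent to some node of $\U$ reachable by the same path from the same frontier node --- and then derives injectivity from the fact that distinct classes of the quotient are non-bisimilar. To close your argument you would need to replace the graph-level bisimilarity claim by an injection of this kind (or establish node-level bisimilarity for nodes reached by a common path), since otherwise only ``$\U_0$ is a universal simulation solution'', not ``$\U_0$ is size-minimal'', has been shown. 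A smaller point: under the paper's literal definition a universal simulation solution need not itself be a solution, so the quantification over which graphs $\U_0$ is minimal among should be made explicit.
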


\subsection{Complexity}
We can now characterize the data complexity of certain query answering. Recall
that data complexity assumes the query and the data exchange setting to be
fixed, and thus of fixed size, and only the source instance is given on the
input. Consequently, the size of universal simulation solution $\U_0$ is
polynomially-bounded by the size of $I$. Since the data complexity of evaluating
NREs is know to be PTIME~\cite{PEREZ2010255}, we get the following result.
\begin{theorem}
	\label{thm:complexity}
   The data complexity of computing certain answers to forward nested regular
   expressions w.r.t.\ constructive relational to RDF data exchange setting is
   in PTIME.
\end{theorem}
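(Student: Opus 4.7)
The plan is to assemble the tools already developed in this section into a pipeline. Given a source instance $I$ of $\Rbold$ and a forward NRE $Q$, first invoke Theorem~\ref{thm:quotient-solution-exp} to construct the universal simulation solution $\U_0$ for $I$ to $\E$. Under the data complexity assumption, the setting $\E$ (and in particular the shape schema $\Sbold$) is fixed, so the ``exponential in the size of $\Sbold$'' factor in Theorem~\ref{thm:quotient-solution-exp} collapses to a constant; thus $\U_0$ can be built in time polynomial in $|I|$ and its size is polynomial in $|I|$.

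Next, by Lemma~\ref{lem:regular-is-rub}, forward NREs form a class of Boolean queries robust under simulation, so Theorem~\ref{thm:universal-satisfies-tree} guarantees that the truth value of a Boolean $Q$ on $\U_0$ equals the certain answer of $Q$ w.r.t.\ $\E$ on $I$. Evaluating $Q$ on $\U_0$ can then be carried out by the standard algorithm for NREs, whose data complexity is PTIME~\cite{PEREZ2010255}; since $|\U_0|$ is polynomial in $|I|$ the overall running time is polynomial in $|I|$, proving the Boolean case.

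For the non-Boolean case I would add the customary null-filtering step: evaluate $Q$ on $\U_0$ to obtain a relation of node pairs and then discard every pair containing a value from $\NullIri\cup\NullLit$. Soundness and completeness follow from two observations inherent to the construction of $\U_0$: (i) the definition of simulation forces every non-null node of $\U_0$ to be matched identically in every solution $J$, because condition 2 of simulation requires $n=m$ whenever $n$ is non-null; and (ii) the non-null nodes of $\U_0$ are exactly the nodes of the core pre-solution $J_0$, which is contained in every solution. Hence a tuple of constants is returned by $Q$ on every solution iff it is already returned by $Q$ on $\U_0$, and symmetric reasoning rules out spurious tuples involving nulls.

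The main obstacle is the non-Boolean extension: one must verify that the simulation-based universality (which, unlike homomorphism, is directional and identifies bisimilar null nodes) still preserves exactly the tuples of constants returned by a forward NRE. This reduces to the two observations above together with the fact that the bisimulation quotient in the definition of $\U_0$ only collapses nodes bearing exclusively null labels, leaving constant-labeled nodes untouched; once this is established, the PTIME bound is immediate from the PTIME evaluation of NREs combined with the PTIME construction of $\U_0$ under data complexity.
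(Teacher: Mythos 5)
Your proposal is correct and follows essentially the same route as the paper: build the universal simulation solution $\U_0$ via Theorem~\ref{thm:quotient-solution-exp} (whose exponential dependence on $\Sbold$ is constant under data complexity), reduce certain answering to evaluation on $\U_0$ via Lemma~\ref{lem:regular-is-rub} and Theorem~\ref{thm:universal-satisfies-tree}, and then apply the $O(|G|\cdot|E|)$ NRE evaluation algorithm of~\cite{PEREZ2010255}. The paper additionally spells out a syntactic translation of forward NREs into nSPARQL to invoke that algorithm, and it defers the non-Boolean, null-filtering extension to a separate proposition, but these are presentational differences only.
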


\paragraph{Full nested regular expressions}
Computing certain answers to the full class of NRE remains an open question. One
could explore using 2-way alternating automata (2ATAs) for infinite trees
corresponding to unraveling the universal simulation solution $\U_0$, a method
that has been successfully applied to the closely related problem of computing
certain answers to variants of regular path queries in the presence of
ontologies~\cite{CALVANESE201412,jung_et_al:LIPIcs:2018:8597}. However, using 2ATAs comes with significant computational
cost, and indeed, we show an increase in the complexity of computing certain
answers to NREs as compared to forward NREs. This increase is detected when we
fix the data exchange setting but consider both the query and the source
instance to be part of the input, a complexity measure that is between data and
combined complexity measures. Formally, for a class of Boolean graph queries
$\Q$ and a data exchange setting $\E$ we define the decision problem 
$D_\E^\Q =
\{
(I,Q) \mid 
Q\in\Q,\ \text{\textit{true} is the certain answer to $Q$ in $I$ w.r.t. $\E$}
\}$.
\begin{proposition}
  \label{prop:full-nre-intractable}
  For any constructive relational to RDF data exchange setting $\E$, 
  $D_\E^{\text{NRE}^\rightarrow}$ is in PTIME and $D_\E^{\text{NRE}}$ is
  PSPACE-hard.
\end{proposition}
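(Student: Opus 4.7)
The plan is to handle the two halves of the proposition separately: the PTIME upper bound holds uniformly in $\E$, while the PSPACE-hardness is established by exhibiting a particular constructive setting $\E^*$ that already realises the lower bound within the constructive class. In other words, "for any constructive $\E$" scopes cleanly over the upper bound, and the hardness half is a worst-case lower bound on the family $\{D_\E^{\text{NRE}}\}_{\E}$; for trivially degenerate $\E$ (empty $\Sigma_\st$, no recursive shape) $D_\E^{\text{NRE}}$ is obviously easy, so any uniform-over-$\E$ statement must be read as "there is an $\E^*$ in this class witnessing PSPACE-hardness".

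For the upper bound, fix an arbitrary constructive $\E$ and take input $(I,Q)$ with $Q\in\text{NRE}^\rightarrow$. By Theorem~\ref{thm:quotient-solution-exp} I build the universal simulation solution $\U_0$ in time polynomial in $|I|$: the exponential factor in $|\Sbold|$ is absorbed into the constant since $\E$ is fixed. Lemma~\ref{lem:regular-is-rub} says that forward NREs are robust under simulation, so Theorem~\ref{thm:universal-satisfies-tree} reduces certain answering to a single evaluation of $Q$ on $\U_0$. NRE evaluation has PTIME combined complexity: tests $[E]$ are resolved bottom-up, each regular-path subexpression is handled by the standard NFA-product construction, and the wildcard and label tests are constant-time. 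Composing these steps gives a procedure running in polynomial time in $|I|+|Q|$.

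For the PSPACE-hardness I would construct one fixed constructive setting $\E^*$ and reduce from a PSPACE-complete problem such as regular-expression non-universality over a fixed alphabet $\Sigma$. The setting $\E^*$ is engineered so that its core pre-solution on suitable input instances emits an "alphabet gadget" whose shape schema forces null-node creation into a recursive pattern whose unraveling realises the infinite $\Sigma$-labelled binary tree. Given a regular expression $R$ on input, I translate $R$ together with an automaton-complementation encoding into an NRE $Q_R$, crucially leveraging the inverse operator to simulate alternation on the unraveled tree, so that $Q_R$ is a certain answer for the input instance iff $R$ is not universal. Evaluating on $\U_0$ itself would not suffice here — inverse breaks simulation invariance, and the true semantics of $Q_R$ lives on the infinite unraveling, consistent with the 2ATA approach alluded to in the paper.

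The hard part is the joint design of $\E^*$ and $Q_R$. Since $\E^*$ is fixed, the PSPACE-encoded computation must live entirely in $Q_R$ plus a polynomial-size source instance, and the query has to use inverse in a controlled way to emulate the exponential powerset construction underlying non-universality. Two technical points dominate: (i) showing that the unraveling of $\U_0$ contains a universal $\Sigma$-labelled tree so that $Q_R$ can navigate every finite word, which rests on the existence of at least one recursive shape type in $\E^*$ that forces an infinite chain of $\bot$-nodes; and (ii) crafting $Q_R$ so that its semantics on this unraveling precisely matches the complement of the product automaton for $R$, using the forward/backward interplay permitted by inverse. Both are standard-but-intricate in the 2RPQ-certain-answering literature and account for the bulk of the technical work, while the PTIME half is essentially mechanical once Theorems~\ref{thm:quotient-solution-exp} and~\ref{thm:universal-satisfies-tree} are in hand.
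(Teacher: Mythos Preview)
Your upper bound argument is correct and is exactly the paper's approach: build $\U_0$ via Theorem~\ref{thm:quotient-solution-exp} (the $|\Sbold|$-exponential is a constant once $\E$ is fixed), invoke simulation-robustness of forward NREs, and evaluate.

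The lower bound is where your plan diverges, and it has a real gap. You reduce from regular-expression non-universality and propose to encode complementation inside $Q_R$ by ``using inverse to simulate alternation / emulate the exponential powerset construction''. You never say how. Complementation of an NFA or regular expression is exponential, and NREs have no negation; there is no standard polynomial trick that packs the complement of an arbitrary $R$ into a single NRE, with or without inverse. The appeal to the ``infinite unraveling'' is also off-target: certain answers are defined over all finite solutions, not over an infinite canonical object, so your semantic claim that ``the true semantics of $Q_R$ lives on the infinite unraveling'' does not by itself justify anything about $D_\E^{\text{NRE}}$.

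The paper's reduction is far simpler and avoids all of this. It reduces from \emph{intersection non-emptiness} of regular expressions $E_1,\ldots,E_n$ (PSPACE-complete). The fixed setting $\E^\ast$ has a single type $T$ with $\delta(T,a)=T^{\PLUS}$ for each $a\in\Sigma$, a single st-tgd $R(x)\Rightarrow T(f(x))$, and the fixed instance $I=\{R(0)\}$. Writing $\#=f(0)$, the query is
\[
Q \;=\; \langle\#\rangle \cdot \Box^{*} \cdot [E_1^{-}\cdot\langle\#\rangle] \cdot \ldots \cdot [E_n^{-}\cdot\langle\#\rangle].
\]
The inverse is used only to test, at a node $n$, whether $(\#,n)\in\sem{E_i}$; nesting conjoins the $n$ tests. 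If some $w\in\bigcap_i L(E_i)$ then in every solution the $w$-path from $\#$ lands on a node witnessing all tests; conversely, the product-DFA of the $E_i$'s (with $\#$ as the initial tuple and null IRIs for the rest) is a finite solution in which $Q$ holds iff a jointly accepting state is reachable, i.e.\ iff the intersection is nonempty. So the missing idea is: pick intersection non-emptiness, not non-universality, and use nested inverse tests $[E_i^{-}\cdot\langle\#\rangle]$ as a direct polynomial encoding of the conjunction.
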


\paragraph{Conjunctive queries}
The set of tgds in $\Sigma_\st\cup\Sigma_\Sbold$ is guarded and as such enables
using existing results by Cal\`{i} et al.~\cite{CALI201257} on tractability of
certain answering for conjunctive queries. We recall that the classes of
conjunctive queries and NREs are incomparable.
\begin{proposition}
	\label{prop:conjunctive-query-certain-answer}
   The data complexity of computing certain answers to conjunctive queries
   w.r.t.\ constructive relational to RDF data exchange setting is in PTIME.
\end{proposition}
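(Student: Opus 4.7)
The plan is to cast certain-answer computation into the framework of guarded tgds with keys, for which Cal\`i et al.~\cite{CALI201257} have established PTIME data complexity, and then to invoke their result. Given a source instance $I$, I would first materialize the core pre-solution $J_0$ as defined in Section~\ref{sec:data-exchange}. Because $\Sigma_\st$ consists of full tgds and the IRI constructors are interpreted as concrete functions, $J_0$ is built deterministically in time polynomial in $|I|$, and no further st-tgd can fire beyond $J_0$; from this point on, only rules in $\Sigma_\Sbold$ can add information to any extension of $J_0$.

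Next, I would check that $\Sigma_\Sbold$ forms a guarded theory with keys. The type propagation rules $T(x)\land\Triple(x,p,y)\Rightarrow S(y)$ are guarded by the $\Triple$-atom; the predicate existence rules $T(x)\Rightarrow\exists y.\,\Triple(x,p,y)$ have a single body atom and are trivially guarded; and the predicate functionality rules $T(x)\land\Triple(x,p,y_1)\land\Triple(x,p,y_2)\Rightarrow y_1=y_2$ behave as keys on the $\Triple$ relation restricted by the unary type predicate $T$. Consequently, $\Sigma_\Sbold$ sits inside the fragment considered by Cal\`i et al.

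Given this, a certain answer to a fixed conjunctive query $Q$ over the target vocabulary is computed by materializing $J_0$, running the guarded chase of $J_0$ with $\Sigma_\Sbold$, and evaluating $Q$ against the resulting (possibly infinite) universal model while discarding tuples containing nulls. By the cited result, this procedure runs in time polynomial in $|J_0|$, hence polynomial in $|I|$ for a fixed data exchange setting, which yields the desired PTIME data complexity bound.

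The main obstacle is verifying the separability (non-conflicting) condition that Cal\`i et al.\ require for the soundness of the guarded chase in the presence of egds: one must show that the predicate functionality egds do not interact with the tgds in a way that either causes chase failure or forces identifications invisible to the tgds alone. I would discharge this using our hypothesis that $\E$ is consistent: Theorem~\ref{thm:consistency} provides both value consistency and node-kind consistency, and together these preclude precisely the kinds of egd-tgd interactions that would violate separability. Once this check is carried out, the PTIME upper bound is an immediate consequence of the cited tractability result.
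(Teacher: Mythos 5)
Your proposal follows essentially the same route as the paper, which justifies this proposition in a single remark by observing that the tgds in $\Sigma_\st\cup\Sigma_\Sbold$ are guarded and invoking the tractability results of Cal\`i et al.~\cite{CALI201257}. You are in fact more careful than the paper on one point: the paper does not explicitly discuss the separability condition needed to soundly combine the guarded chase with the predicate functionality egds, whereas you correctly identify that this must be discharged (and do so via the standing consistency assumption), so your write-up is, if anything, more complete than the paper's.
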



\paragraph{Non-Boolean queries}
So far we have considered only Boolean queries and now we illustrate, on the
example of binary forward NREs, that the universal simulation solution can be
used to compute certain answers using the well-known method of evaluating the
query over $\U_0$ and dropping any answers using null values. Formally, a pair
of nodes $(n,m)$ is an \emph{answer} to an NRE $E$ in a graph $G$ iff
$(n,m)\in\sem{E}_G$. A pair $(n,m)$ is a \emph{certain answer} to an NRE $E$ in
$I$ w.r.t.\ $\E$ iff $(n,m)$ is an answer in every solution for $I$ to $\E$.
\begin{proposition}
  Given a constructive data exchange setting $\E$, a source instance $I$, and a
  forward NRE $E$, a pair $(n,m)$ is a certain answer to $E$ in $I$ w.r.t.\ $\E$
  if and only if $(n,m)$ is an answer to $E$ in a universal simulation
  solution $\U$ for $I$ w.r.t.\ $\E$ and neither $n$ nor $m$ are null. 
\end{proposition}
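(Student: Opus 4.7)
The plan is to reduce both directions of the equivalence to a binary-query strengthening of the robustness-under-simulation property (Lemma~\ref{lem:regular-is-rub}) tailored to forward NREs, and to apply it in the obvious way. Throughout I rely on the fact that the construction of $\U_0$ (and more generally, the construction recalled before Theorem~\ref{thm:quotient-solution-exp}) yields a typed graph that is itself a solution to $\E$ for $I$, so that $(n,m)\in\sem{E}_\U$ is meaningful and that the set of solutions contains $\U$.

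For the forward direction, I first observe that $\U$ being a solution immediately gives $(n,m)\in\sem{E}_\U$ whenever $(n,m)$ is a certain answer. For non-nullness, I argue that any null name occurring in $\U$ can be made to disappear from another solution by a renaming: from any solution $J$ one obtains a solution $J'$ by replacing null names with fresh ones disjoint from $\nodes(\U)$, and $\sem{E}_{J'}$ does not contain any pair mentioning a null of $\U$, so a certain answer cannot have a null endpoint.

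The substantive step is the binary version of the robustness claim, which I will state as: \emph{for every simulation $R$ of $\U$ by a solution $J$, every forward NRE $E$, every $(a,a')\in R$, and every $(a,b)\in\sem{E}_\U$, there exists $b'$ with $(b,b')\in R$ and $(a',b')\in\sem{E}_J$.} I will prove this by structural induction on $E$. The atomic cases $\epsilon$, $p$, $\Box$, and $\langle\ell\rangle$ fall out of the definition of simulation (for $\langle\ell\rangle$ one uses property~2, since a label test forces the node to be a constant and hence to be shared between $\U$ and $J$); the cases $[E]$ and $E_1+E_2$ are immediate; and $E^*$ is handled by an inner induction on the number of iterations. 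The technical heart is $E_1\cdot E_2$, where the intermediate witness in $\U$ may be null and therefore need not occur in $J$: the inductive hypothesis for $E_1$ produces a matching intermediate witness in $J$ together with an $R$-pair linking the two witnesses, and this $R$-pair is exactly what the inductive hypothesis for $E_2$ consumes.

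The backward direction of the proposition then follows. Given $(n,m)\in\sem{E}_\U$ with $n,m$ non-null and an arbitrary solution $J$, pick a simulation $R$ of $\U$ by $J$; non-nullness of $n$ together with the requirement that every node of $\U$ has an $R$-image and simulation property~2 forces $(n,n)\in R$. The claim yields $m'$ with $(m,m')\in R$ and $(n,m')\in\sem{E}_J$, and non-nullness of $m$ then forces $m'=m$, giving $(n,m)\in\sem{E}_J$; since $J$ was arbitrary, $(n,m)$ is certain. I expect the main obstacle to be the $E_1\cdot E_2$ case of the induction, where care is needed to thread the witness correspondence through intermediate null nodes that live only in $\U$; the simulation relation $R$ is precisely the right vehicle, but the bookkeeping is the delicate point, and the restriction to \emph{forward} NREs is essential here because simulation is asymmetric with respect to edge direction.
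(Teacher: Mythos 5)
Your proposal is correct and follows essentially the same route as the paper: your ``binary version of the robustness claim'' is precisely Lemma~\ref{lem:regular-expression} in the appendix, proved there by the same structural induction (with the concatenation case threading the simulation relation through a possibly-null intermediate witness, and the label-test case exploiting clause~2 of the simulation definition), and your derivation of both directions from it — using that $\U$ is itself a solution for the ``only if'' part and a null-renaming argument for non-nullness — is the intended one. The paper defers the explicit write-up of this proposition to a full version, but your argument supplies exactly the missing glue.
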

The above result can be generalized to any class of non-Boolean queries that is
robust under simulation. However, attempting to present a precise definition of
non-Boolean queries robust under simulation would exceed the space limits and we
leave it for the full version of the paper.


\section{Related Work} 
\label{sec:related}


R2RML is a W3C standard language for defining custom relational to RDF mappings
\cite{das:2011}, other languages such as YARRRML \cite{yarrrml} are
compiled to R2RML. These languages do not impose constraints on the target, and
consequently, the solution is always defined, unique, and trivially consistent
which makes the problems of consistency and certain query answering irrelevant.
In \cite{boneva:2015b}, Boneva et al. have studied relational to graph data
exchange with st-tgds and the target constraints based on conjunctions of nested
regular expressions. The framework is incomparable to the framework presented in
this paper.

Viewing RDF as a ternary relation and expressing shape constraints with a set of
target dependencies, thus reducing our framework to the standard relational data
exchange~\cite{fagin:2005a}, allows us to
translate back existing results but only to a certain degree. Most notably, the
work on chase with guarded tgds~\cite{CALI201257} allows us to show that
computing certain answers to conjunctive queries in our framework is tractable
(Proposition~\ref{prop:conjunctive-query-certain-answer}). In general, other
works consider dependencies that are unsuitable to capture our mappings and
shape schemas, focus on query classes that are not as well suited to query RDF
as are NREs, or being very generic incur a much higher computational cost. For
instance, data exchange with weakly acyclic tgds and edgs~\cite{fagin:2005a}
is suitable for capturing only a restricted \emph{weakly-recursive} shape
schemas \cite{boneva:18} that do not result in an infinite chase. While there exist works on data
exchange that consider queries that go beyond conjunctive queries and add
elements of transitive closure, such as XML tree
patterns~\cite{murlak-xml-absolute-consistency,amano2014} or Datalog
fragments~\cite{arenas2011}, they come at a price of high complexity. Also, while
shape schemas are reminiscent of DTDs (or more closely of XML Schemas), XML is
an ordered model and the source to target mappings in XML data exchange need to
specify the relative order of elements or a universal solution may fail to
exists, and even if unordered XML is employed computing certain answers easily
becomes intractable~\cite{arenas2008}. Finally, there is work on answering classes of
regular path queries in description
logics~\cite{CALVANESE201412,bienvenu2015} allows to easily capture our constructive data
exchange settings and the considered classes of queries seem suitable for
querying RDF but again they come with significant computational cost. However,
we believe that the underlying use of 2-way alternating tree automata
(2ATA)~\cite{CGKV88,THB95} can be employed to computing certain answers to NREs
in our framework, which we intend to pursue in our future work.

\section{Conclusion and Future Work}
\label{sec:future}
We have presented a data exchange framework for exporting in a R2RML-like
fashion a relational database to RDF with (non-overlapping) IRI constructors and
target shape schema. We have studied the problems of consistency and have shown
it to be coNP-complete using an intricate characterization. We have also studied
computing certain answers to forward nested regular expressions and shown it be
tractable using a novel construction of universal simulation solution. We have
also shown that extending the framework in a number of natural directions
generally leads to an increase of complexity.


Future research directions include a complete complexity analysis of relational
to RDF data exchange with non-constructive st-tgds and nondeterministic and
disjunctive shape schemas, and exploring using 2ATA for computing certain query
answers to the full fragment of nested relational expressions.


\bibliography{exchange}

\newpage
\appendix
 \includecomment{fullproof}
 \excludecomment{sketch}
\section{Omitted Formalisms and Proofs}
\newcommand{\subsubsubsection}[1]{\paragraph{#1}\mbox{}\\}
\setcounter{secnumdepth}{4}
\setcounter{tocdepth}{4}
\newcommand{\Dom}{\mathsf{Dom}}
\newcommand{\dom}{\mathit{dom}}
\newcommand{\eqClass}{\mathit{eq\text{\sf-}class}}
\newcommand{\NullSet}{\mathsf{Null}}
\newcommand{\ConstSet}{\mathsf{Const}}
\newcommand{\head}{\mathit{head}}
\newcommand{\vars}{\operatorname{\mathit{vars}}}
\newcommand{\body}{\mathit{body}}
\newcommand{\Vars}{\mathsf{Vars}}
\newcommand{\Model}{M}
\newcommand{\ConstLit}{\mathsf{ConstLit}}
\newcommand{\pe}{\PE}
\paragraph{Constants, nulls, and variables} 
We assume a fixed enumerable domain $\Dom$. For the purposes of this paper, we
assume the domain to be partitioned into three infinite subsets
$\Dom = \Iri\cup \Lit \cup \NullIri$ of \emph{IRIs}, \emph{literals}, and
\emph{blank} node identifiers respectively. We assume an infinite set of
\emph{null literals} $\NullLit \subseteq \Lit$ and identify the set of
\emph{null values} $\NullSet=\NullLit\cup\NullIri$. We refer to the
remaining elements as \emph{constants} $\ConstSet = \Dom \setminus \NullSet$,
and additionally, identify non-null literals
$\ConstLit=\Lit\setminus\NullLit=\ConstSet\cap\Lit$. Also, we fix an
infinite set of (first-order) variables $\Vars$. In the sequel, we use $a$, $b$,
and $c$ to range over elements of $\Dom$ and $\bar{a}$, $\bar{b}$, $\bar{c}$
to range over sequences of elements of $\Dom$. Similarly, we use $x$, $y$, and
$z$ to range over variables and $\bar{x}$, $\bar{y}$, and $\bar{z}$ to range
over sequences of variables.

\paragraph{First-order logic} We recall basic notions of first-order logic. We
assume a finite \emph{signature}, which consists of a set of relation symbols
$\Rcal$ and a set of function symbols $\Fcal$, each symbol having a fixed
arity. We use first-order formulas with relation and function names from
$\Rcal\cup\Fcal\cup\{\mathord=\}$, variables from $\Vars$, and constants from
$\Dom$, however, we only employ \emph{flat terms} that do not use nested 
applications of function symbols. An \emph{atom} has the form $R(\bar{t})$,
where $R\in\Rcal$ and $\bar{t}$ is a sequence of terms. A \emph{relational
  atom} does not use any function symbols. A \emph{clause} is a conjunction of
atoms and we often view it as a set of atoms.  A formula is \emph{closed} if it
has no free variables. A formula is \emph{ground} if it uses no variables
whatsoever. A \emph{fact} is a ground relational atom. A clause is
\emph{relational} if it employs only relational atoms.

A \emph{structure} (or a \emph{model}) $\Model$ (over the signature
$\Rcal\cup\Fcal$) is a mapping that associates to every relation and function
symbol $\xi\in\Rcal\cup\Fcal$ a corresponding relation or function $\xi^\Model$
on elements of $\Dom$ of appropriate arity. The semantics of a first-order logic
formula $\varphi$ over a model $\Model$ is captured with the \emph{entailment
  relation} $\Model\models\varphi$ defined in the standard fashion.  The
entailment relation is extended to a set of formulas: $\Model\models\Phi$ iff
$\Model\models\varphi$ for every $\varphi\in\Phi$. Also, we often view a model
over a signature consisting of relation symbols only as the set of all facts
satisfied by the model.

\paragraph{Dependencies} A \emph{dependency} $\sigma$ is a closed first-order
formula of the form $\forall\bar{x}.\varphi\Rightarrow\exists\bar{y}.\psi$,
and we define $\body(\sigma)=\varphi$, $\head(\sigma)=\psi$, and
$\vars(\sigma)=\bar{x}\cup\bar{y}$. $\sigma$ is an \emph{equality-generating
  dependency} (egd) if $\varphi$ is a clause and $\psi$ is a conjunction of
equality conditions $x=y$ on pairs of variables. $\sigma$ is a
\emph{tuple-generating dependency} (tgd) if both $\varphi$ and $\psi$ are
clauses. A tgd is \emph{full} if it uses no existentially quantified variables
$\bar{y}$. The use of the equality relation $\mathord=$ in an egd $\sigma$
implies a binary relation on the variables of $\sigma$, its (reflexive,
symmetric, and transitive) closure gives an equivalence relation that identifies
variables that need to have the same value, and by $\eqClass(\sigma)$ we denote
the set of equivalence classes to which this relation partitions the variables
$\bar{x}$ (the variables $\bar{y}$ can be ignored).

\paragraph{Relational databases} A \emph{relational schema} is a pair
$\Rbold = (\Rcal, \Sigma_\fd)$ where $\Rcal$ is a set of relation names, each with a
fixed arity, and $\Sigma_\fd$ is a set of \emph{functional dependencies} (fds) of the
form $R:X\rightarrow Y$, where $R\in\Rcal$ is a relation name of arity $n$, and
$X,Y\subseteq\{1,\ldots,n\}$. An fd $R:X\rightarrow Y$ is a short for the egd
$\forall\bar{x},\bar{y}.\ R(\bar{x})\land R(\bar{y})\land
\textstyle\bigwedge_{i\in X} (x_i=y_i) \Rightarrow \textstyle\bigwedge_{j\in Y}
(x_j=y_j)$. An \emph{instance} of $\Rbold$ is a model $I$ over $\Rcal$, and
unless we state otherwise, in the sequel we work only with instances that use
literal constants from $\ConstLit$. The \emph{active
  domain} $\dom(I)$ of the instance $I$ is the set of elements of $\Dom$
used in $I$.

\paragraph{Graphs} An \emph{RDF graph} (or simply a \emph{graph}) is a finite
set $G$ of triples in
$(\Iri \cup \NullIri) \times \Pred \times (\Iri \cup \NullIri \cup
\Lit)$. We view $G$ as an edge labeled graph by interpreting a triple
$(s,p,o)$ as a $p$-labeled edge from the node $s$ to the node $o$. The set of
\emph{nodes} of $G$, denoted $\nodes(G)$, is the set of elements that appear on
first or third position of a triple in $G$.

\paragraph{Shape constraints as dependencies}
A deterministic shapes schema $\Sbold$, as defined in Section~\ref{sec:preliminaries}, can be captured with a set
$\Sigma_\Sbold$ of egds and tgds. $\Sigma_\Sbold$ contains for all type
$T\in\Tcal$ and all shape constraint $\delta(T,p)=S^\mu$
\begin{itemize}
\item $\tp(T,p,S)=\forall x,y.\ T(x)\land\Triple(x,p,y)\Rightarrow S(y)$,
\item $\pf(T,p)=\forall x.\ T(x)\Rightarrow \exists y.\ \Triple(x,p,y)$,
  if $\mu\in\{\ONE,\PLUS\}$,
\item
  $\pe(T,p)=\forall x,y,z.\
  T(x)\land\Triple(x,p,y)\land\Triple(x,p,z)\Rightarrow y=z$, if
  $\mu\in\{\ONE,\MAYBE\}$.
\end{itemize}
It is easy to see that a typed graph $(G,\typing)$ satisfies
$\Sbold$ if and only if $(G,\typing)\models\Sigma_\Sbold$.

\paragraph{Homomorphisms and universal solutions} A substitution is a function
$h:\Dom\cup\Vars\rightarrow\Dom\cup\Vars$ that is different from identity on a
finite set $\dom(h)$ of null values and variables, and furthermore, $h$ assigns
a value in $\Dom$ to every element in $\dom(h)$. We assume that the library of
IRI constructors is known from the context, and extend substitutions to flat
terms while applying interpretations of the IRI constructors:
$h(f(t_1,\ldots,t_k))=\smash{f^F}(h(t_1),\ldots,h(t_k))$. We further extend
homomorphisms, in a standard fashion, to atoms
$h(R(t_1,\ldots,t_k))=R(h(t_1),\ldots,h(t_k))$, and to sets of atoms
$h(A)=\{h(\alpha) \mid \alpha\in A\}$. Recall that both instances and clauses
can be viewed as set of atoms. Now, a \emph{homomorphism} of $I_1$ in $I_2$ is a
substitution $h$ such that $h(I_1)\subseteq I_2$. A homomorphism $h'$
\emph{extends} a homomorphism $h$, written $h\subseteq h'$, if
$\dom(h)\subseteq\dom(h')$ and $h'(x)=h(x)$ for all $x\in\dom(h)$. A
\emph{universal} solution $U\in\sol_\E(I)$ is a solution that subsumes all
other solutions i.e., for any $J\in\sol_\E(I)$ there is a homomorphism of $U$
in $J$.

\paragraph{Chase} We recall the \emph{chase} procedure for tgds and egds. Let
$\sigma=\forall\bar{x}.\ \varphi\Rightarrow\exists \bar{y}.\ \psi$. If
$\sigma$ is a tgd, we say that it is \emph{triggered} in $I$ by $h$ if
$\dom(h)=\bar{x}$, $h(\varphi)\subseteq I$, and there is no extension $h'$ of
$h$ such that $h'(\psi)\subseteq I$. It has a successful \emph{execution} $h'$
yielding $I'$, in symbols $\smash{I \xrightarrow{\sigma,h'} I'}$, if $h'$ is an
extension of $h$ such that $\dom(h')=\bar{x}\cup\bar{y}$ and
$I'=I\cup h'(\psi)$.

Next, suppose that $\sigma$ is an egd. It is \emph{triggered} in $I$ by $h$ if
$\dom(h)=\bar{x}$, $h(\varphi)\subseteq I$, and there is
$\bar{z}\in\eqClass(\sigma)$ and $z_1,z_2\in\bar{z}$ such that $h(z_1)\neq
h(z_2)$. It has a successful \emph{execution} $I'$ with $h'$, in symbols
$\smash{I \xrightarrow{\sigma,h'} I'}$, if $I'=h'(I)$ and $h'$ is a homomorphism
such that $\dom(h')=h(\dom(h))\cap\NullSet$ i.e., $h'$ assigns values to the
null values used by $h$, and for any $\bar{z}\in\eqClass(\sigma)$ and any
$z_1,z_2\in\bar{z}$ we have $h'(h(z_1))=h'(h(z_2))$. If $\sigma$ is triggered
in $I$ by $h$ but does not have a successful execution, we say that it
\emph{fails}, in symbols $\smash{I \xrightarrow{\sigma,h} \bot}$.

Now, a \emph{chase} sequence on $I_0$ with a set $\Sigma$ of tgds and egds is a
possibly infinite sequence
$\smash{I_0 \xrightarrow{\sigma_0,h_0} I_1 \xrightarrow{\sigma_1,h_1} I_2}
\ldots$, where $\sigma_i\in\Sigma$ for all $i$. A \emph{terminating} chase
sequence ends with a failure or an instance that triggers no dependency in
$\Sigma$. It is a classic result that a universal solution $U$ for
$I$ to $\Sigma$ exists if and only if there is a terminating chase sequence on $I$ with
$\Sigma$ that ends with $U$ \cite{fagin:2005a}. Naturally, this result extends
to constructive data exchange settings with fixed IRI constructors.

\subsection{Proofs for Section~\ref{sec:consistency-des} (Consistency)}
\label{app:proofs-consistency}
\newcommand{\Cotypes}{\operatorname{\mathit{CoTypes}}}
\newcommand{\sttc}{\Sigma_\st \cup \Sigma_\Sbold^\tp }
\newcommand{\shexmostonedep}{\Sigma_\Sbold^\pf}
\newcommand{\shexdep}{\Sigma_\Sbold}

\newcommand{\T}{\ensuremath{\mathbf{t}}}
\newcommand{\F}{\ensuremath{\mathbf{f}}}
\newcommand{\ran}{\mathit{ran}}
\newcommand{\chasestep}[4]{#1\; \smash{\xrightarrow{#2,\,#3}}\; #4}
\newcommand{\Instpisigma}{I_{\pi,\sigma,\sigma'}}
\newcommand{\Bpisigma}{B_{\pi,\sigma,\sigma'}}
\newcommand{\hpisigma}{h_{\pi,\sigma,\sigma'}}
We start by precising the definition of a chase step in order to take into account the possible conflict due to merging a literal and a non-literal node.
Suppose that $\sigma$ is an egd. It is \emph{triggered} in $I$ by $h$ if
$\dom(h)=\bar{x}$, $h(\varphi)\subseteq I$, and there is
$\bar{z}\in\eqClass(\sigma)$ and $z_1,z_2\in\bar{z}$ such that $h(z_1)\neq
h(z_2)$. It has a successful \emph{execution} $I'$ with $h'$, in symbols
$\smash{I \xrightarrow{\sigma,h'} I'}$, if 
\begin{itemize}
\item[(1)] any $\bar{z}\in\eqClass(\sigma)$ and $z_1,z_2\in\bar{z}$ such that $h(z_1)\neq
h(z_2)$ satisfy $h(z_1),h(z_2)$ are both literals, or $h(z_1),h(z_2)$ are both non literals, and
\item[(2)] $I'=h'(I)$ and $h'$ is a homomorphism
such that $\dom(h')=h(\dom(h))\cap\NullSet$ i.e., $h'$ assigns values to the
null values used by $h$, and for any $\bar{z}\in\eqClass(\sigma)$ and any
$z_1,z_2\in\bar{z}$ we have $h'(h(z_1))=h'(h(z_2))$.
\end{itemize}
If $\sigma$ is triggered
in $I$ by $h$ but does not have a successful execution, we say that it
\emph{fails}, in symbols $\smash{I \xrightarrow{\sigma,h} \bot}$.

\subsubsection{Value consistency}
We prove here the different propositions made in Section~\ref{sec:cone}.

\subsubsubsection{Proof of Lemma~\ref{lem:existence-type-fact}}
\begin{fullproof}
We show the left-to-right direction. 
We claim that (1) for any instance $I$ of $\Rbold$, a solution for $I$ to $\Sigma_\st\cup\Sigma_\Sbold^\tp$ is included in a solution for $I$ to $\E$.

We now show (1). Take a universal solution $J$ to $\Sigma_\st\cup\Sigma_\Sbold^\tp$ and a universal solution $J'$ to $\E$. We prove that $J\subseteq J'$. We fix two chase sequence $s$ and $s'$ such that the instance where there can not be triggered more rules are $J$ and $J'$ for $s$ and $s'$ respectively. Since $s'$ is finite, i.e. there is not failure, then the egds only are triggered for those triples that contain null as objects. This is not produced by $\Sigma_\Sbold^\tp$. Since $s'$ has applied the rules in $\Sigma_\st\cup \Sigma_\Sbold^\tp$, then $J$ is in $J'$.

Now, we take a pair $(T_n,f_n)\in \Tcal\times \Fcal$. Assume $(T_n,f_n)$ is accessible in $\E$. Then, we construct an instance $I$ of $\Rcal$ where $\Rcal=\{R_1,\ldots, R_n\}$ such that $\dom(I)=\{b\}$ and for each $R\in \Rcal$ is of arity $n\in\mathbb{N}$. Because $(T_n,f_n)$ is accessible, we know that there is a sequence $\sigma_0,\ldots, \sigma_i$ of st-tgds s.t.:
\begin{itemize}
	\item $\head(\sigma_0) = T_0(f_0(\bar{y_0}))$, and
	\item $\head(\sigma_i) = \Triple(f_{i-1}(\bar{x_i}), p_i, f_i(\bar{y_i}))$ for any $1 \le i \le n$, and
	\item $(p\dbl{} T_i)^\mu \in \delta(T_{i-1})$ for some multiplicity $\mu$, and
	\item $(T,f,p) = (T_n,f_n,p_n)$
\end{itemize}
for some type symbols $\{T_i \mid 0 \le i < n\} \subseteq \Tcal$, function symbols $\{f_i \mid 0 \le i < n\} \subseteq \Fcal$, and IRIs $\{p_i \mid 1 \le i < n\} \subseteq \Iri$.

Now, we take any solution $J$ for $I$ to $\E$. By chasing $I$ with the sequence of st-tgds  $\sigma_0,\ldots, \sigma_i$ and $\Sigma_\Sbold^\tp$ of triple constraints from accessibility of $(T_n,f_n)$, we have that
\begin{align*}
A=\{
T(f_0(\bar{b})),\Triple(f_0(\bar{b}),p_1,f_1(\bar{b})),T_1(f_1(\bar{b})),\Triple(f_1(\bar{b}),p_2,f_2(\bar{b})),\ldots,\\\Triple(f_{n-1}(\bar{b}),p_n,f_n(\bar{b})),T_n(f_n(\bar{b}))\} 
\end{align*}

Since chase sequence of $\Sigma_\st\cup \Sigma_\Sbold^\tp$ is finite, then there is an instance $A$ where there is no rule triggered. By chase sequence definition, $A$ is a universal solution to $\Sigma_\st \cup \Sigma_\Sbold^\tp$. By definition of universal solution, there is an homomorphism $h:A\to J'$ such that $h(c)=c$ for all $c \in \dom(A)$ and $J'$ a solution to $\Sigma_\st \cup \Sigma_\Sbold^\tp$. Since in $A$ there are no nulls, $A \subseteq J'$. By claim (1) and $A \subseteq J'$, $A \subseteq J$. Let $a=f_n^F(\bar{b})$, i.e. $a\in \ran(f_n^F).$ Then $T_n(a)\in J$. Thus, $T_n(a)$ is in all solutions for $I$ to $\E$.

Now, we show the right-to-left direction. We take a pair $(T_n,f_n)$ and assume there exist an instance $I$ of $\Rcal$ and a constant $a$ in $\ran(f_n^F)$ s.t. $T_n(a)$ is a fact in all solutions for $I$ to $\E$. 

We have to prove that $(T_n,f_n)$ is accessible in $\E$. We fix a chase sequence $s=I=J_0\xrightarrow{\sigma_1,h_1}J_1\xrightarrow{\sigma_2,h_2}\ldots J_{m-1}\xrightarrow{\sigma_m,h_m}J_m$ where $\sigma_i$ a dependency in $\Sigma_\st$ and $h_i:\sigma_i\to J_i$ an homomorphism for any $i \in \{ 1,\ldots,m\}$. 

We claim (1) that for any finite chase sequence $s$ for dependencies $\Sigma_\st\cup \Sigma_\Sbold^\tp$, any $a \in \ConstSet$, any $T\in \Tcal$, any $k\in \{1,\ldots,|s|\} $, if $J_k$ contains the fact $T(a)$ then there is $f\in \Fcal$ such that $(T,f)$ is accessible in $\E$.

We claim (2) that for any finite chase sequence $s$ for dependencies $\Sigma_\st\cup \Sigma_\Sbold^\tp$, any $a \in \ConstSet$, any $T\in \Tcal$, any $k\in \{2,\ldots,|s|\} $, if $J_k$ contains the fact $T(a)$ then there is $k'<k$ and $T'\in \Tcal$ and $p\in \Pred$ and $a'\in \ConstSet$ such that $J_{k'}$ contains the facts $T'(a'),\Triple(a',p,a)$ and $p\dbl{}T''^\mu\in\delta(T')$ for some $\mu$.

We claim (3) for any finite chase sequence $s$, any $k\in \{1,\ldots,|s|\}$, any $b,b'\in \ConstSet$ and any $q\in \Iri$ if $J_k$ contains the fact $\Triple(b,q,b')$ then there is a st-tgd $\sigma\in \Sigma_\st$ such that $\head(\sigma)=\Triple(f(\bar{y_1}), q, g(\bar{y_2}))$ for some $f,g \in \Fcal$ and $\bar{y_1}$ and $\bar{y_2}$ in $\vars(\sigma)$; and there is an homomorphism $h:\head(\sigma)\to J_{k}$ such that $h(\bar{y_1})=b$ and $h^F(\bar{y_2})=b'$.

We now prove claim (1). Let $\sigma$ be a rule in $\Sigma_\st$ such that $\head(\sigma)=T(f(\bar{x}))$ for some $\bar{x}\in \vars(\sigma)$. Let $h:\sigma\to J_k$ such that $h(f(\bar{x}))=a$. Since $J_k$ contains a fact produced in chase step of $s$, then this $h$ was triggered in some instance before $J_k$. By definition of accessibility, we conclude that $(T,f)$ is accessible in $\E$.

Next, we prove claim (2). Let $\sigma$ be a rule in $\Sigma_\Sbold^\tp$ such that $\head(\sigma)=T(x)$. Let $h:\sigma\to J_k$ such that $T(h(x))=a$. Since $J_k$ contains a fact produced in chase step of $s$, then $\sigma$ exists with $h$ because it was triggered in some instance before $J_k$. Let that instance be $J_{k'}$ such that $k'<k$. Thus, $T'(h(x)),\Triple(h(x),p,h'(y))$ in $J_{k'}$ for some $p\in \Iri$. Let $h(x) =a'$ for some $a'\in \ConstSet$. Since $\sigma\in \Sigma_\Sbold^\tp$ then $p\dbl{}T\in \delta(T')$.

Finally, we prove claim (3). Assume $J_k$ contains the fact $\Triple(b,q,b')$. Since $J_k$ is an instance of chase sequence at $k$ chase step of $\Sigma_\st\cup \shexdep$, then the only rule that can be triggered in some instance before $J_k$ is of the form $\varphi \Rightarrow \Triple(f(\bar{y_1}), q, g(\bar{y_2}))$ for some $\bar{y_1}$ and $\bar{y_2}$ in $\vars(\varphi)$. Thus, we conclude that there is an homomorphism $h$ such that $h(f(\bar{y_1}))=b$ and $h^F(q)=q$ and $h(f(\bar{y_2}))=b'$.

Since the fact $T_n(a)\in J_m$ is the result of either trigger a rule in $\Sigma_\st$ or $\Sigma_\Sbold^\tp$, we can apply claim (1) or (2) respectively. Considering the  rule be in $\Sigma_\st$ and by claim (1), we have that $(T_n,f_n)$ is accessible. Considering the rule be in $\Sigma_\Sbold^\tp$, we obtain that there is a $k'<m$ and $T'\in \Tcal$ and $a'\in \ConstSet$ and $p\in \Pred$ such that the facts $T'(a')$ and $\Triple(a',p,a)$ are in $J_{k'}$ and $p\dbl{}T_n^\mu\in\delta(T')$ for some $\mu$. Applying claim (3) to the fact $\Triple(a',p,a)\in J_{k'}$, we have that $\sigma=\Triple(f(\bar{y_1}), p, f_n(\bar{y_2}))$ for some $f\in \Fcal$ such that $a'\in \ran(f^F)$. Let this $f$ be $f_{n-1}$. We analyze the fact $T'(a')$ as in the beginning with claims (1) or (2). Assume that claim (1) is not applied until we are in step $2$. Then we have only applied claim (2) and (3) sequentially obtaining at this chase step that $T_0(a_0),\Triple(a_0,p_1,a_1)$ in $J_2$. By applying claim (1) to the fact $T_0(a_0)$ we obtain a rule $\sigma_0=T_0(f_0(\bar{y_0}))$. This rule together with the set of rules $\sigma_i,\ldots,\sigma_j$ where $i<j<m$ that are in $\Sigma_\st$, which were obtained by the application of claim (3) allows to conclude that $(T_n,f_n)$ is accessible in $\E$.
\end{fullproof}
\begin{sketch}
  \begin{proof}[Sketch of proof.]
    For the left-to-right direction, let $b \in \Dom$.
    Consider the instance $I$ that contains exactly one fact $R(b^{\mathit{arity(R)}})$ for any relational symbol $R$ in $\Rcal$, where $b^n$ is the $n$-tuple containing only $b$'s.
    Then we show that $J_0$, the core pre-solution for $I$ to $\E$, contains the fact $T(f^F(b^{\mathit{arity}(f)}))$ whenever $(T,f)$ is accessible in $\E$.

    For the right-to-left direction, we fix an arbitrary terminating chase sequence $s$ on $I$ with $\Sigma_\st \cup \Sigma_\Sbold^\tp$ producing the core pre-solution $J_0$.
    Using that $T(a)$ is a fact in $J_0$, we show that $s$ necessarily contains chase steps with dependencies as those Definition~\ref{def:accessible}, thus witnessing that $(T,f)$ is accessible in $\E$.
  \end{proof}
\end{sketch}

\subsubsubsection{Proof of Proposition~\ref{prop:existence-violation}}
\begin{fullproof}
Suppose first that $\pi, \sigma, \sigma', h$ are as in the premise of \ref{item:prop-existence-violation:1}. and let $\pi = \sigma_0, \ldots, \sigma_n$, $\head(\sigma) = \Triple(f(\bar{z}), p, t)$, $\head(\sigma') = \Triple(f(\bar{z'}), p, t')$ and the $\sigma_0, \ldots, \sigma_n$  as in Definition~\ref{def:accessible}, so $(T,f,p) = (T_n,f_n,p_n)$.
Because $(T_n,f_n)$ is accessible in $\E$ with $\pi$, we know that $\Sigma_\Sbold^\tp$ contains the rules $\tp(T_{i-1},p_i,T_i)$ for any $0 < i \le n$. 
  Let $\tp(T_{i-1},p_i,T_i) = T_{i-1}(u_i) \land \Triple(u_i, p_i, v_i) \Rightarrow T_i(v_i)$ for any $0 < i \le n$, where w.l.g. $u_i,v_i$ are fresh w.r.t. the variables used in $\sigma_0,\ldots,\sigma_n,\sigma,\sigma'$ and $\{u_i,v_i\}$ is disjoint from $\{u_j,v_j\}$ whenever $i\neq j$.
Let $h' = h\circ\hpisigma$, then by definition of $\Instpisigma$ and $h$ it follows that $I$ is the disjoint union of $h'(\Bpisigma)$ and $I'$, the latter containing the facts of $I'$ that are not images by $h$ of some fact in $\Instpisigma$.

Consider the following chase sequence $s$ starting at $I$; note that in the sequel we abuse the notation and use $h'$ as its restriction on any subset of variables in its domain.
The first chase step of $s$ is $\chasestep{I}{\sigma_0}{h'}{I_0}$.
The subsequent chase steps are defined inductively by adding the following two chase steps for all $0 < i \le n$:
$$
\chasestep{I_{i-1}}{\sigma_i}{h'}{I'_i} \chasestep{}{\tp(T_{i-1},p_i,T_i)}{h_i}{I_i},
$$
where $h_i$ is defined by $h_i(u_i) = h'(f_{i-1}(\bar{y}_{i-1}))$ and $h_i(v_i) = h'(f_i(\bar{y}_i))$.

Thus $s$ is of the form:
$$
\chasestep{I}{\sigma_0}{h'}{I_0} \chasestep{}{\sigma_1}{h'}{I'_1} \chasestep{}{\tp(T_0,p_1,T_1)}{h_1}{I_1} \rightarrow \cdots \rightarrow \chasestep{I_{n-1}}{\sigma_n}{h'}{I'_n} \chasestep{}{\tp(T_{n-1},p_n,T_n)}{h_n}{I_n}.
$$
We now show that $s$ is indeed a chase sequence. 
That is, we need to show that the homomorphism of each step above is indeed a homomorphism from the body of the dependency being applied to the instance to which the step is applied.
It immediately follows from the definitions and hypotheses that (1) $I_0 = I' \cup h'(\Bpisigma) \cup T_0(h(f_0(\bar{y}_0)))$ where $I'$ contains the facts of $I$ that are not images of some fact of $\Instpisigma$ by $h$.
For any $1 \le i \le n$ we show the following by induction on $i$:
\begin{itemize}
\item[(2)] $I'_i = I_{i-1} \cup h'(\head(\sigma_0) \cup \cdots \cup \head(\sigma_{i}))$;
\item[(3)] $I_i = I'_i \cup T_i(h'(f_i(\bar{y}_i)))$.
\end{itemize}
For the base case $i=1$. 
From (1) it follows that $h':\sigma_1 \to I_0$ is a homomorphism, and by definition of the chase, applying this homomorphism on $I_0$ yields $I'_1 = I_0 \cup h'(\head(\sigma_1))$, thus (2) holds.
Now from (1) and (2) we know that $I'_1$ contains the facts $T_0(h'(f_0(\bar{y}_0)))$ and $\Triple(h'(f_0(\bar{x}_1)), p_1, f_1(\bar{y}_1)) = h'(\head(\sigma_1))$.
Recall that by definition, $\hpisigma(\bar{x}_1) = \hpisigma(\bar{y}_0)$, so also $h'(\bar{x}_1) = h'(\bar{y}_0)$, thus $h_1$ is indeed an homomorphism from $T_{i-1}(u_i) \land \Triple(u_i, p_i, v_i)$ into $I'_i$ and the resulting instance is indeed $I'_i \cup T_i(h'(f_i(\bar{y}_i)))$.

The same arguments apply for the induction step for showing that $h':\sigma_i \to I_{i-1}$ and $h_i: \tp(T_{i-1},p_i,T_i)$ are homomorphisms, and their application yields the instances described in (2) and (3).

Consider now the chase sequence
$$
s' = \chasestep{I_n}{\sigma}{h}{I_\sigma} \chasestep{}{\sigma'}{h}{I_{\sigma'}}.
$$
It immediately follows from the definition of $h$, from (3) and from the definition of a chase step that $I_\sigma = I_n \cup \{\Triple(h(f_n(\bar{y}_n)), p_n, h(t))\}$ and $I_{\sigma'} = I_\sigma \cup \{Triple(h(f_n(\bar{y}_n)), p_n, h(t'))\}$.

Finally, consider any terminating chase sequence by $\Sigma_\st \cup \Sigma_\Sbold^\tp$ starting at $I_{\sigma'}$, and let $J'$ be its terminal instance; we know that such finite chase instance exists because the dependencies in $\Sigma_\st \cup \Sigma_\Sbold^\tp$ are full.
Then $J'$ is a universal solution for $I$ to $\Sigma_\st \cup \Sigma_\Sbold^\tp$ and moreover $I_{\sigma'} \subseteq J'$, so $\{T_n(h'(f_n(\bar{y}_n))), \Triple(h'(f_n(\bar{y}_n)), p_n, h'(t)), \Triple(h'(f_n(\bar{y}_n)), p, h'(t'))\} \subseteq J' \subseteq J$ for any $J$ solution for $I$ to $\Sigma_\st \cup \Sigma_\Sbold^\tp$.
We conclude the proof of the left-to-right direction of \ref{item:prop-existence-violation:1} remarking that by definition, $\hpisigma(\bar{y}_n) = \hpisigma(\bar{z}) = \hpisigma(\bar{z}')$, so also $h'(\bar{y}_n) = h'(\bar{z}) = h'(\bar{z}')$.
This also shows \ref{item:prop-existence-violation:2}. in the case where the left-to-right direction of \ref{item:prop-existence-violation:1}. holds.

\noindent

\end{fullproof}
\begin{sketch}
\begin{proof}[Sketch of proof.]
  For the left-to-right direction of \ref{item:prop-existence-violation:1}., we show that we can chase $I$ by using the sequence of dependencies $\sigma_0, \sigma_1, \sigma^{\tp}_1, \ldots, \sigma_n, \sigma^{\tp}_n, \sigma, \sigma'$ in this order, where $\pi = \sigma_0, \ldots, \sigma_n$, and the $\sigma^\tp_i$ are the corresponding $\tp$-dependencies as in Definition~\ref{def:accessible}.
  The instance resulting from this chase sequence includes the violation $\{T(a), \Triple(a,p,b), \Triple(a,p,b')\}$, and is included in any universal solution for $I$ to $\Sigma_\st \cup \Sigma_\Sbold^\tp$.

  For the right-to-left direction of \ref{item:prop-existence-violation:1}., the existence of $\pi$ follows from Lemma~\ref{lem:existence-type-fact}. 
  We then use a similar technique as in the proof of Lemma~\ref{lem:existence-type-fact}: we take an arbitrary terminating chase sequence $s$ of $I$ with $\Sigma_\st \cup \Sigma_\Sbold^\tp$, and we show that this sequence necessarily contains applications of dependencies $\sigma,\sigma'$ as required.
  Item~\ref{item:prop-existence-violation:2}. is proved simultaneously to item~\ref{item:prop-existence-violation:1}.
\end{proof}
\end{sketch}

\paragraph{Proof of Theorem~\ref{thm:pre-des-cone}}
\begin{fullproof}
We first show the left-to-right direction by proving its contraposition.
Let a violation sort $(T,f,p)$, and let $\pi,\sigma,\sigma',J,h$ be as in the theorem, in particular $h\circ\hpisigma(t) \neq h\circ\hpisigma(t')$.
Then by Proposition~\ref{prop:existence-violation} we have that $J_0$, the core pre-solution for $I$ to $\Sigma_\st \cup \Sigma_\Sbold^\tp$, includes $w = \{T(a), \Triple(a,p,b), \Triple(a,p,b')\}$, where $a = h\circ\hpisigma(f(\bar{z})) = h\circ\hpisigma(f(\bar{z}'))$, $b = h\circ\hpisigma(t)$ and $b' = h\circ\hpisigma(t')$.
By hypothesis, $b \neq b'$, so $w$ is a $(T,f,p)$-violation, so the core pre-solution of $I$ does not satisfy $\shexmostonedep$.

We show the right-to-left direction again proving its contraposition.

	Suppose there exists a consistent source instance $I$ s.t. its core pre-solution $J_0$ includes a $(T,f,p)$-violation, say $w = \{T(a), \Triple(a,p,b), \Triple(a,p,b')\}$ with $b \neq b'$ and $(T,f,p)$ a violation sort.
So Proposition~\ref{prop:existence-violation} applies allowing to deduce that there exist $\pi,\sigma,\sigma',h$ s.t. $(T,f)$ is accessible with $\pi$ in $\E$, $\sigma,\sigma'$ are $(T,f,p)$-contentious st-tgds, and $h: \Instpisigma \to I$ is a homomorphism, with $\head(\sigma) = \Triple(f(\bar{z}), p, t)$ and $\head(\sigma') = \Triple(f(\bar{z'}), p, t')$, and $a = h\circ\hpisigma(f(\bar{z})) = h\circ\hpisigma(f(\bar{z}'))$, $b = h\circ\hpisigma(t)$ and $b' = h\circ\hpisigma(t')$.
Therefore  $h\circ\hpisigma(t) \neq h\circ\hpisigma(t')$.
Suppose by contradiction that for all $J,h'$ s.t. $J$ solution for $\Instpisigma$ to $\Sigma_\fd$ and $h':\Instpisigma \to J$ the corresponding homomorphism we have $(\hpisigma \circ h')(t) = (\hpisigma \circ h')(t')$. We are then able to construct a contradiction to the fact that $I$ is a consistent source instance that satisfies the source functional dependencies.
\end{fullproof}

\begin{sketch}
  \paragraph{Sketch of proof of Theorem~\ref{thm:consistency}}
  The contraposition of the left-to-right direction easily follows for Proposition~\ref{prop:existence-violation}.
  We also show the contraposition of the right-to-left direction.
  Take a consistent source instance $I$ that does not have a solution to $\E$, then by Lemma~\ref{lem:only-tc-deps-matter} all solutions for $I$ to $\Sigma_\st \cup \Sigma_\Sbold^\tp$ contain some $(T,f,p)$-violation, and by Proposition~\ref{prop:existence-violation} we identify the corresponding $\pi,\sigma,\sigma',h$.
\mycomment{TODO: to be finished}
\end{sketch}

\subsubsection{Node kind consistency}

We start with a slight generalization of the construction of a universal simulation solution presented in Section~\ref{sec:query-answering} that will be used for computing $\Cotypes(J)$ and for the proofs of Lemma~\ref{lem:not-ctwo-implies-inconsistent} and Lemma~\ref{lem:cone-and-ctwo-implies-solution-exists}.
We recall the definitions from Section~\ref{sec:query-answering} are useful for this definition.

Fix a source instance $I$, and suppose that $J$ is a solution for $I$ to $\sttc \cup \shexmostonedep$.
The types associated to a node in a typed RDF graph $\types_G(n) = \{T \mid T(n)\in G\}$. 
The IRIs required by a set of types $X$ is the set $\Req(X) = \{p \mid \exists S,\mu, T \in X.\ p\dbl S^\mu\in \delta(T)\}$.
The frontier of $J$ is the following set
\[
\Fbb = 
\{
(n,p) \mid n\in\nodes(J),\ p\in\Req(\types_{J}(n)),\ \nexists m.\,\Triple(n,p,m)\in J
\}.
\]
For any set of types $X$ and IRI $p$, $\Delta(X,p)$ is the set of types that must hold at any node having a type from $X$ and reachable by a $p$-labeled edge:
\[
\Delta(X,p) = \{S \mid \text{$p\dbl{}S^\mu\in\delta(T)$ for some $T\in X$ and $\mu\in\{\MAYBE,\ONE,\PLUS,\MANY\}$}\}.
\]

Whether $J$ can be augmented to a solution that satisfies $\shexdep$ depends on the sets of types that co-occur in the frontier of $J$.
Define the set of subsets of $\Tcal \cup \{\Literal\}$: $N_0=\{\Delta(\types_J(n),p) \mid (n,p)\in\Fbb\}$.
Then let $N_{J}=\bigcup_{i=0}^\infty N_i$, where $N_i=\{\Delta(X,p) \mid X \in N_{i-1},\ p\in\Req(X)\}$ for any $i \geq 1$.
This process reaches a fixed point in a final number of steps.

\begin{lemma}
  \label{lem:NJ-is-cotypes}
  For any typed graph $J$, $N_J = \Cotypes(J)$.
\end{lemma}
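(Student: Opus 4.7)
My plan is to prove the lemma by establishing the two inclusions $N_J \subseteq \Cotypes(J)$ and $\Cotypes(J) \subseteq N_J$ separately, using the iterative structure of $N_J$ for the first and the universal simulation solution as a canonical witness for the second.

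For $N_J \subseteq \Cotypes(J)$ I would argue by induction on the stage $i$ at which a given set $X$ enters the iteration $N_0, N_1, \ldots$. In the base case $X = \Delta(\types_J(n), p)$ for some frontier pair $(n,p) \in \Fbb$; by definition of $\Fbb$, at least one $T \in \types_J(n)$ imposes a constraint $\delta(T,p) = S^\mu$ with $\mu \in \{\ONE, \PLUS\}$, and $n$ has no outgoing $p$-edge in $J$. In any solution $G \supseteq J$ the corresponding $\PE$ rule forces $n$ to gain a $p$-successor $m$, and the $\TP$ rules then yield $\types_G(m) \supseteq X$. For the inductive step $X = \Delta(Y, p)$ with $Y \in N_{i-1}$ and $p \in \Req(Y)$, the induction hypothesis provides a node in $G$ whose type-set matches $Y$, and the same rule application at that node supplies a $p$-successor with type-set containing $X$.

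For $\Cotypes(J) \subseteq N_J$ I would exhibit the universal simulation solution $\U_0 = J \cup G_\Sbold / \bisimulates$ from Section~\ref{sec:query-answering} as a canonical witness solution. By the definition of $G_\Sbold$, its new nodes are indexed by the elements of $N_J$, and the facts $T(X) \in G_\Sbold$ (for $T \in X$) make the node named $X$ carry exactly the types in $X$. Distinct $X, Y \in N_J$ induce distinct typings so their representatives are not bisimilar, hence the quotient leaves them as distinct nodes, and the set of type-sets realized at new nodes of $\U_0$ is precisely $N_J$. Any $X \in \Cotypes(J)$ must then equal $\types_{\U_0}(n)$ for some node $n$; node-kind consistency of $J$ handles the type-sets already carried by $\nodes(J)$, so $X$ must be contributed by the $N_J$-indexed part.

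The most delicate step I expect is matching equality (rather than mere inclusion) of type-sets in the forward direction, because the $\TP$ and $\PE$ rules impose only lower bounds on the typing of fresh successors. My plan for resolving this is to exploit the fact that $\Delta(X,p)$ collects \emph{exactly} the types mandated by the $\TP$-rules and nothing more, together with the universality property of $\U_0$: by appealing to the simulation of $\U_0$ into an arbitrary solution $G$, one identifies a canonical witness successor whose typing matches $X$ on the nose, which transfers the equality back to $G$ via the simulation.
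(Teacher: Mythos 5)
You are proving a lemma that the paper itself states without an explicit proof; the substantive content lives in the adjacent Lemma~\ref{lem:lit-blank-inconsistency} (forward direction, by induction on the stage $i$, exactly as you set it up) and Lemma~\ref{lem:GS-solution} (the canonical witness solution $J \cup G_\Sbold$). Your overall plan is therefore in the right spirit, but both directions as you argue them have genuine gaps. For $N_J \subseteq \Cotypes(J)$: your induction only yields $X \subseteq \types_G(m)$, and the fix you propose for upgrading inclusion to equality does not work. The paper's simulation relation constrains only the triple structure and the null/literal status of nodes, not the typing, so simulating $\U_0$ into an arbitrary solution $G$ cannot "transfer equality back to $G$"; and no argument can, because an arbitrary solution is free to decorate the forced successor with extra harmless types. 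Concretely, with $\delta(T,p)=S^\ONE$, a second constraint-free type $S'$, and $J=\{T(a)\}$, the solution $\{T(a),\Triple(a,p,b),S(b),S'(b)\}$ contains no node typed exactly $\{S\}$, although $\{S\}\in N_J$. So under the literal equality reading of $\Cotypes$ the forward inclusion is false; what is provable, and what the paper actually proves and uses (sketch of Lemma~\ref{lem:lit-blank-inconsistency}), is the inclusion form "every solution including $J$ has a node $n_X$ with $X \subseteq \types_G(n_X)$". Your proof must be recast in that form (equivalently, $\Cotypes$ must be read as recording forced co-occurrence, not exact type-sets); the "delicate step" you flag is not merely delicate, it is unfixable as stated.

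For $\Cotypes(J) \subseteq N_J$ there are two problems. First, using the quotient $\U_0 = J \cup G_\Sbold/_\bisimulates$ as the witness is wrong for this purpose: bisimilarity is defined on the triple structure only, and the quotient of a typed graph unions the typings of merged nodes, so two distinct sets $X \neq Y$ in $N_J$ whose outgoing structure (determined by $\Req$ and $\Delta$) happens to be bisimilar are collapsed into a single node typed $X \cup Y$; hence the type-sets realized at the fresh nodes of $\U_0$ need not be exactly $N_J$, contrary to your claim. The appendix's construction deliberately uses the un-quotiented $J \cup G_\Sbold$ with a distinct fresh node $n_X$ per $X \in N_J$ precisely so that each $X$ is realized on the nose; that is the canonical solution you should invoke (and only under the hypothesis that it exists, i.e.\ that no set of $N_J$ mixes $\Literal$ with a type, per Lemma~\ref{lem:GS-solution}). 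Second, your dismissal of the old nodes via "node kind consistency of $J$ handles the type-sets carried by $\nodes(J)$" is a non sequitur: node kind consistency is not a hypothesis of the lemma, and even granting it, a set $X \in \Cotypes(J)$ may be witnessed in the canonical solution only by a node of $J$ itself (i.e.\ $X=\types_J(n)$), which in general is not a member of $N_J$; moreover, if $J$ admits no solution at all, $\Cotypes(J)$ is vacuously the full powerset while $N_J$ is not. So the reverse inclusion only holds once the statement is read as characterizing the type-sets forced at the fresh null nodes added beyond $J$ — which is how the paper uses it — and your argument needs to say this explicitly and handle the $\nodes(J)$ case honestly rather than by appeal to node kind consistency.
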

As a corollary of Lemma~\ref{lem:NJ-is-cotypes} we get that $\Cotypes(J)$ can be effectively computed.

\paragraph{Proofs of Lemma~\ref{lem:not-ctwo-implies-inconsistent} and Lemma~\ref{lem:cone-and-ctwo-implies-solution-exists}}

\begin{lemma}
  \label{lem:lit-blank-inconsistency}
  For any instance $I$ of $\Rcal$ and any $J$ solution for $I$ to $\sttc \cup \shexmostonedep$, if $N_{J}$ contains a set $X$ with $\{\Literal, T\} \subseteq X$ for some type $T$ in $\Tcal$, then $I$ does not admit a solution to $\E$ that includes $J$.
\end{lemma}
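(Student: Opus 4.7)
The plan is to reason by contradiction: suppose some $G\in\sol_\E(I)$ satisfies $J\subseteq G$. Then $G$ satisfies the full schema $\Sbold$, so in particular every type-propagation rule in $\Sigma_\Sbold^\tp$ and every predicate-existence rule is satisfied in $G$, and $G$ carries a valid typing that respects the RDF node-kind partition (literal nodes receive only $\Literal$, non-literal nodes receive only types in $\Tcal$).

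The main step is to prove by induction on the stage $i$ at which a set enters the construction $N_0\cup N_1\cup\cdots$ that for every $X\in N_J$ there exists a node $n_X\in\nodes(G)$ with $X\subseteq\types_G(n_X)$. For the base case $X\in N_0$, write $X=\Delta(\types_J(m),p)$ for some $(m,p)\in\Fbb$. Since $p\in\Req(\types_J(m))$ there is $T\in\types_J(m)$ with $\delta(T,p)=S^\mu$ and $\mu\in\{\ONE,\PLUS\}$; because $J\subseteq G$ we have $T(m)\in G$, so the corresponding predicate-existence rule in $\Sigma_\Sbold$ forces the existence of a node $n$ with $\Triple(m,p,n)\in G$. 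Applying the type-propagation rule $\tp(T',p,S')$ for every $T'\in\types_J(m)$ such that $\delta(T',p)=S'^{\mu'}$ is defined then yields $S'(n)\in G$, so $n$ carries every type in $\Delta(\types_J(m),p)=X$ and I set $n_X=n$. The inductive step, for $X=\Delta(Y,p)$ with $Y\in N_{i-1}$ and $p\in\Req(Y)$, is identical except that the ambient node is now $n_Y$, which by induction already satisfies $Y\subseteq\types_G(n_Y)$; the witness $T\in Y$ required to fire predicate existence on $p$ is provided by the membership $p\in\Req(Y)$.

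Applied to the distinguished $X\in N_J$ with $\{\Literal,T\}\subseteq X$ for some $T\in\Tcal$, this construction produces a node $n_X\in\nodes(G)$ that is typed by both $\Literal$ and by $T$. This contradicts the validity of the typing of $G$: in a valid typed graph $\Literal$ may label only literal nodes and types of $\Tcal$ only non-literal nodes, so no single node can carry both. I do not anticipate any genuine obstacle beyond two trivial monotonicity observations used implicitly in the induction ($J\subseteq G$ gives $\types_J(\cdot)\subseteq\types_G(\cdot)$ pointwise, and $\Delta$ is monotone in its first argument) and verifying at each level that the multiplicity condition $\mu\in\{\ONE,\PLUS\}$ required for predicate existence is guaranteed by $p\in\Req(\cdot)$.
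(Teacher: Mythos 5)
Your proof is correct and follows essentially the same route as the paper's: an induction on the stage $i$ at which $X$ enters $N_J$ showing every solution $G\supseteq J$ contains a node $n_X$ with $X\subseteq\types_G(n_X)$ (using predicate existence for the edge and type propagation for the types), followed by a node-kind contradiction when $\{\Literal,T\}\subseteq X$. You have merely spelled out the details that the paper's sketch leaves implicit.
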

\begin{proof}[Sketch of proof]
  We first show that if $X$ is a set in $N_{J}$, then any solution $G$ for $I$ to $\E$ that includes $J$ must contain a node $n_X$ with $X \subseteq \types_G(n_X)$.
  This is done by induction on the index $i$ s.t. $X \in N_i$.
  Next we show that if $\Literal$ is in $X$, then $n_X$ must be a literal, and if some type $T$ from $\Tcal$ is in $X$, then $n_X$ must be an IRI or a blank node.
  Then the lemma follows by contradiction.
\end{proof}

Remark that Lemma~\ref{lem:not-ctwo-implies-inconsistent} is an immediate consequence of Lemma~\ref{lem:NJ-is-cotypes} and Lemma~\ref{lem:lit-blank-inconsistency}.

Now, if $N_{J}$ does not contain any set $X$ in which $\Literal$ co-occurs with some type $T$ from $\Tcal$, then we can construct a solution for $I$ to $\E$ that includes $J$, as follows.
For any $X \in N_{J}$ s.t. $X \subseteq \Tcal$, let $n_X$ be a fresh blank node, i.e. $n_X \in \NullIri \setminus \dom(J)$.
For any $X \in N_{J}$ and $p \in \Req(X)$, let $n_{X,p}$ be a fresh null literal, i.e. $n_{X,p} \in \NullLit \setminus \dom(J)$.
Define the graph $G_\Sbold$ as follows.
\begin{align*}
  G_\Sbold = &\{\Triple(n,p,n_{X}) \mid (n,p) \in \Fbb \land X = \Delta(\types_J(n),p) \subseteq \Tcal \} \cup{} \\
               & \{\Triple(n_X,p,n_{X,p}) \mid (n,p) \in \Fbb \land \Delta(\types_J(n),p) = \{\Literal\} \} \cup{} \\
               & \{\Triple(n_X,p,n_{X'}) \mid X \in N_{J} \land p\in \Req(X) \land X' = \Delta(X,p) \subseteq \Tcal \} \cup{} \\
               & \{\Triple(n_X,p,n_{X,p}) \mid X \in N_{J} \land p\in \Req(X) \land \Delta(X,p) = \{\Literal\}\} \cup{} \\
               & \{T(n_X) \mid X\in N_{J} \land T \in X \}.
\end{align*}

\begin{lemma}
  \label{lem:GS-solution}
  For any source instance $I$ and any $J$ solution for $I$ to $\sttc \cup \shexmostonedep$, if $N_{J}$ does not contain a set $X$ with $\{\Literal, T\} \subseteq X$ for some type $T$ in $\Tcal$, then $J \cup G_\Sbold$ is a solution for $I$ to $\E$.
\end{lemma}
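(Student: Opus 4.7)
My plan is to verify that $J \cup G_\Sbold$ satisfies both $\Sigma_\st$ and $\Sigma_\Sbold = \Sigma_\Sbold^\tp \cup \Sigma_\Sbold^\pf \cup \Sigma_\Sbold^\pe$, using as a starting point that $J$ already satisfies $\sttc \cup \shexmostonedep$. I would first observe that $\Sigma_\st$ remains satisfied trivially: it is a set of full tgds whose bodies live over $\Rcal$, so adding only target facts (those in $G_\Sbold$) without touching $I$ cannot create a new body match, and every head fact that was already in $J$ is still present in $J \cup G_\Sbold$.

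Next, I would check the three shape-schema rule families in turn. For the type-propagation rules $\Sigma_\Sbold^\tp$, I would do a case analysis on where the triggering facts $T(x)$ and $\Triple(x,p,y)$ come from. When both lie in $J$, the conclusion follows from $J \models \Sigma_\Sbold^\tp$. A fact $T(x) \in G_\Sbold$ forces $x = n_X$ for some $X \in N_J$ with $T \in X$; as $n_X$ is fresh, the outgoing triple must also come from $G_\Sbold$, which means $p \in \Req(X)$ and $y$ is either $n_{X'}$ with $X' = \Delta(X,p) \subseteq \Tcal$ or a fresh literal null $n_{X,p}$ with $\Delta(X,p) = \{\Literal\}$. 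By definition of $\Delta$, the target type $S$ prescribed by $\delta(T,p)$ belongs to $\Delta(X,p)$, so either $S(y) \in G_\Sbold$ (in the first case) or $S = \Literal$ and $y$ is genuinely a literal (in the second). The symmetric case $T(x) \in J$ with $\Triple(x,p,y) \in G_\Sbold$ uses $(x,p) \in \Fbb$, $X' = \Delta(\types_J(x), p)$, and the same dichotomy. Predicate functionality $\Sigma_\Sbold^\pf$ is preserved because $J$ already satisfies it and $G_\Sbold$ contributes at most one $p$-outgoing edge from each frontier pair $(n,p) \in \Fbb$ (by definition of $\Fbb$ none existed in $J$) and exactly one $p$-outgoing edge per $p \in \Req(X)$ at each $n_X$. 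Predicate existence $\Sigma_\Sbold^\pe$ holds by construction: every required $p$-edge out of a typed node is either already in $J$ or added by the corresponding clause of $G_\Sbold$.

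The main obstacle, and the step in which the hypothesis of the lemma is decisive, is ensuring that the literal/non-literal dichotomy remains coherent throughout the construction. The hypothesis that no set in $N_J$ contains both $\Literal$ and a type $T \in \Tcal$ implies that every $\Delta$-set appearing in $G_\Sbold$ is either entirely contained in $\Tcal$ or equal to $\{\Literal\}$: for $X \in N_J$ and $p \in \Req(X)$ this holds because $\Delta(X,p) \in N_J$, and for frontier pairs $(n,p) \in \Fbb$ it holds because $\Delta(\types_J(n), p) \in N_0 \subseteq N_J$. This clean dichotomy is precisely what allows $G_\Sbold$ to allocate a fresh blank node $n_{X'}$ in one case and a fresh literal null $n_{X,p}$ in the other, and it is what closes the type-propagation argument above: a $\tp$ rule demanding $S \in \Tcal$ is met because the target is non-literal and $G_\Sbold$ supplies $S$ as an explicit fact, whereas a $\tp$ rule demanding $\Literal$ is met because the target is a genuine literal node. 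Without this hypothesis some $\Delta$-set would simultaneously require a node to be literal and non-literal, which is exactly the obstruction identified by Lemma~\ref{lem:lit-blank-inconsistency}.
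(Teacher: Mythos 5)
Your proof is correct and follows essentially the same route as the paper's (which is only a brief sketch): check $\Sigma_\st$ trivially, then verify the \TP, \PF, and \PE families against the construction of $G_\Sbold$, with \PF resting on the fact that $G_\Sbold$ adds at most one $p$-outgoing edge per node and only at frontier pairs where $J$ had none. Your additional case analysis for the \TP rules and the explicit use of the hypothesis to get the dichotomy ``$\Delta$-set $\subseteq \Tcal$ or $= \{\Literal\}$'' fills in details the paper leaves implicit, but does not change the argument.
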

\begin{proof}[Sketch of proof.]
  $J \cup G_\Sbold$ satisfies $\Sigma_\st$ as $J$ does.
  It is easy to see by its definition that $J \cup G_\Sbold$ also satisfies the $\tp$ and $\pe$ dependencies in $\shexdep$. Regarding the $\pf$ dependencies in $\shexdep$:
  on the one hand, $J$ satisfies the $\pf$ dependencies by hypothesis.
  On the other hand, by construction, the triples added in $G_\Sbold$ are such that no node has more than one $p$-outgoing edge for any IRI $p$. 
  Therefore $J \cup G_\Sbold$ does not contain a trigger for a $\pf$ dependency.
\end{proof}
We point out that Lemma~\ref{lem:cone-and-ctwo-implies-solution-exists} is an immediate consequence of Lemma~\ref{lem:NJ-is-cotypes} and Lemma~\ref{lem:GS-solution}.

\paragraph{Proof of Theorem~\ref{thm:consistency}}
\label{app:previous-section}

Let $\E = (\Rbold, \Sbold, \Sigma_\st,\Fbold)$ and let $\Sbold = (\Tcal, \delta)$.
We first show that it is decidable whether $\E$ is node kind consistent.

Let $N_0$ be the set of subsets of $\Tcal \cup \{\Literal\}$ such that $X \in N_0$ iff there exists a function symbol $f \in \Fcal$ s.t. $X = \{T \mid (T,f) \text{ accessible in }\E\}$.
Then we define $\Cotypes(\E) = \bigcup^\infty_{i=0} N_i$, where $N_i =\{\Delta(X,p) \mid X \in N_{i-1},\ p\in\Req(X)\}$ for any $i \ge 1$.
Note that  $\Cotypes(\E)$ converges to a fix point in a finite number of steps.

\begin{lemma}
  \label{lem:toto}
  $\E$ is node kind consistency iff $\Cotypes(\E)$ does not contain a set $X$ s.t. $\{T, \Literal\} \subseteq X$ for some $T \in \Tcal$.
\end{lemma}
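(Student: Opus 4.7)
The plan is to prove the bi-implication in both directions.

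For the \textbf{right-to-left} direction I proceed by contrapositive. Suppose some consistent source instance $I$ has a core pre-solution $J_0$ whose $\Cotypes(J_0) = N_{J_0}$ (using Lemma~\ref{lem:NJ-is-cotypes}) contains a bad set. I show by induction on the stage at which a set enters $N_{J_0}$ that every $X \in N_{J_0}$ is a subset of some $X^{*} \in \Cotypes(\E)$. For the base case, any frontier pair $(n,p) \in \Fbb$ has $n = f^{F}(\bar{a})$ for some $f \in \Fcal$, since in the constructive setting non-literal nodes of target graphs are produced only by IRI constructors. Lemma~\ref{lem:existence-type-fact} then yields $\types_{J_0}(n) \subseteq X_f := \{T \mid (T,f) \text{ accessible in } \E\}$, and hence $\Delta(\types_{J_0}(n),p) \subseteq \Delta(X_f,p) \in \Cotypes(\E)$. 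The induction step uses monotonicity of $\Req$ and of $\Delta$ in its first argument. Since any superset of a set containing $\{\Literal,T\}$ is itself bad, $\Cotypes(\E)$ must also contain a bad set.

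For the \textbf{left-to-right} direction, given a bad set $Y \in \Cotypes(\E)$ with $\{\Literal, T_0\} \subseteq Y$, I exhibit a source instance whose core pre-solution is not node kind consistent. Pick the smallest stage $k \geq 1$ at which $Y$ enters $\Cotypes(\E)$ and trace a witnessing chain $X_f = X_0 \xrightarrow{p_1} X_1 \xrightarrow{p_2} \cdots \xrightarrow{p_k} X_k = Y$ with each $X_i = \Delta(X_{i-1},p_i)$ and $p_i \in \Req(X_{i-1})$. Take $I$ to be the single-element instance $\{R(b^{\mathit{arity}(R)}) \mid R \in \Rcal\}$ from the proof of Lemma~\ref{lem:existence-type-fact}; it trivially satisfies any $\Sigma_\fd$, and its core pre-solution $J_0$ satisfies $\types_{J_0}(f^{F}(b^{\mathit{arity}(f)})) = X_f$ for every $f \in \Fcal$. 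I then walk the chain in $J_0$ starting at $n_f = f^{F}(b^{\mathit{arity}(f)})$: at step $i$, either (a) the current node has no outgoing $p_i$-edge, in which case $\Delta(\text{current types}, p_i) \supseteq X_i$ enters $N_0(J_0)$ and the remaining $\Delta$-applications inside $N_{J_0}$ reach a superset of $Y$; or (b) the current node has a $p_i$-edge to some successor whose types include $X_i$ by type propagation. If case (b) persists throughout, the terminal node accumulates types $\supseteq Y \supseteq \{\Literal,T_0\}$, mixing literal and non-literal kinds; then $J_0$ admits no solution and $\Cotypes(J_0)$ vacuously contains every subset of $\Tcal \cup \{\Literal\}$. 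In either situation $\Cotypes(J_0)$ contains a bad set, so $\E$ is not node kind consistent.

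The \textbf{main obstacle} is handling the interplay between the abstract $\Delta$-chain within $\Cotypes(\E)$ and the concrete edge structure of $J_0$: the walk along the chain may be partly realized by actual edges in $J_0$ (accumulating types at successor nodes via type propagation) and partly blocked at frontiers (triggering the abstract $\Delta$-iteration within $N_{J_0}$). The key insight is that monotonicity preserves badness in either branch: following an edge only enlarges the types accumulated, and following abstract $\Delta$-steps from a frontier superset only enlarges the sets produced within $N_{J_0}$. Combined with Lemma~\ref{lem:NJ-is-cotypes}, a superset of the bad $Y$ therefore appears in $\Cotypes(J_0)$ along whichever branch the walk takes, completing the argument.
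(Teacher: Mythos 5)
Your proof is correct, and for one of the two directions it takes a genuinely different route from the paper. For the direction ``$\Cotypes(\E)$ contains a bad set $\Rightarrow$ $\E$ is node kind inconsistent'' you use the same witness as the paper (the single-constant instance with one fact per relation and $\dom(I)=\{b\}$), but you actually work out the argument the paper only gestures at, including the case split between chain steps realized by genuine edges of $J_0$ (where types accumulate at a concrete node, possibly yielding a direct $\{\Literal,T\}$ clash and hence a vacuously bad $\Cotypes(J_0)$) and steps blocked at the frontier (where the abstract $\Delta$-iteration inside $N_{J_0}$ takes over); this is a useful addition. For the converse direction the paper introduces an auxiliary graph $R$ on $\Tcal\cup\{\Literal\}$, extracts a frontier pair together with types $S,S'$ and a common word $w$ labelling paths to $T$ and to $\Literal$, and inducts on $w$; you instead prove directly, by induction on the stage at which a set enters $N_{J_0}$ and using monotonicity of $\Delta$ and $\Req$ in their set argument, that every member of $N_{J_0}=\Cotypes(J_0)$ is contained in some member of $\Cotypes(\E)$, with Lemma~\ref{lem:existence-type-fact} supplying the base case $\types_{J_0}(f^F(\bar a))\subseteq X_f$. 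Your containment argument is shorter and more uniform; what the paper's $R$-graph formulation buys is that it is reused immediately afterwards for the coNP certificate (the product-automaton test for a common word $w$), so it is not redundant in context. One shared caveat: both you and the paper implicitly assume that every frontier node carrying a type from $\Tcal$ is of the form $f^F(\bar a)$; a literal object to which a type in $\Tcal$ has been propagated would already constitute a node-kind clash inside $J_0$ and needs to be dispatched separately (as you in fact do for the analogous situation in the other direction), so it would be worth stating that this case is handled by the vacuity of $\Cotypes(J_0)$ rather than by the inductive containment.
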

\begin{proof}
  We need to show that for any $I$ instance of $\Rcal$ $\Cotypes(J_0)$ does not contain a set $X$ with $\{T, \Literal\} \subseteq X$ for some $T \in \Tcal$ iff $\Cotypes(\E)$ does not contains a set $X$ s.t. $\{T, \Literal\} \subseteq X$ for some $T \in \Tcal$.
 
  For the left-to-right direction we show that if $\Cotypes(\E)$ contains such $X$, then there exists an instance $I$ s.t. $\Cotypes(J)$ contains a set $X'$ with $X \subseteq X'$. 
  It is enough to take the $I$ such that it contains exactly one fact for any relation in $\Rcal$, and such that $\dom(I) = \{b\}$ for some constant $b$.

  For the right-to-left direction, define the graph $R$ which vertices are $\Tcal$ and that has an edge labelled with $p$ from $T_1$ to $T_2$ iff $\delta(T_1)$ contains a triple constraint $p \dbl T_2^\ONE$ or $p \dbl T_2^\PLUS$.
  Intuitively, a $p$-edge from $T_1$ to $T_2$ in $R$ indicates that the label $p$ is required in every node that has type $T_1$, and every $p$-edge leads to a node that must have type $T_2$.

  We show that if there is an instance $I$ and a type $T \in \Tcal$ s.t. $\Cotypes(J_0)$ contains $X$ with $\{T, \Literal\} \subseteq X$ then necessarily the frontier of $J_0$ contains some $(n,p)$ s.t. $n$ is not null and there exist types $S$, resp. $S'$ in $\types_{J_0}(n)$ that are, intuitively, the reasons why $T$, resp. $\Literal$, were added to $X$ during the construction of $\Cotypes(J_0)$.
  Note that such $S,S'$ are not necessarily distinct.
  Moreover, there is a sequence $w$ of IRI's and a path in $R$ from $S$ to $T$ labelled with $w$ and a path in $R$ from $S'$ to $\Literal$ labelled with $w$.
  
  Using Lemma~\ref{lem:existence-type-fact} we deduce that $(S,f)$ and $(S',f)$ are accessible in $\E$, where $f$ is the function symbol s.t. $n \in \ran(f^F)$.
  Then we show inductively on the $w$ that during the construction of $\Cotypes(\E)$ we will reach a set $X'$ that contains both $T$ and $\Literal$.
\end{proof}

We now describe a coNP decision procedure for $\E$ being node kind consistent.
A certificate for a node kind inconsistency is composed of types $T,S,S'$ and a function symbol $f \in \Fcal$ as in the proof of the right-to-left direction of Lemma~\ref{lem:toto}.
More precisely, choose non-deterministically $T,S,S'$ and $f$ s.t. $(S,f)$ and $(S,f')$ are accessible in $\E$ (the latter can be tested in polynomial time).
According to the proof of Lemma~\ref{lem:toto}, it is enough to test whether there exists a sequence $w$ of IRIs s.t. $R$ has paths labelled with $w$ from $S$ to $T$ and from $S'$ to $\Literal$.
The latter can be polynomially tested by considering two finite state automata $A_S$ and $A_{S'}$ that are both derived from the graph $R$.
That is, both automata have the vertices of $R$ as states and the edges of $R$ as transitions.
$A_S$ has $S$ as initial state, while $S'$ is the initial state of $A_{S'}$.
We then compute the product automaton $A_S \times A_{S'}$ in polynomial time.
Then there exists $w$ as above iff state $(T,\Literal)$ is accessible in the product automaton.
In this case, $w$ is the shortest path from $(S,S')$ to $(T,\Literal)$ in this automaton.

Now the proof of Theorem~\ref{thm:consistency} can be completed:
\begin{itemize}
\item If $I$ is an instance of $\Rbold$ and the core pre-solution for $I$ to $\E$ is value consistent and node kind consistent, then $I$ admits a solution to $\E$ (Lemma~\ref{lem:cone-and-ctwo-implies-solution-exists}).
\item If $I$ is an instance of $\Rbold$ and the core pre-solution of $I$ to $\E$ is not value consistent, resp. is not node kind consistent, then $I$ does not admit a solution to $\E$ (corollary of Theorem~\ref{thm:pre-des-cone}, resp. Lemma~\ref{lem:not-ctwo-implies-inconsistent}).
\item It is decidable whether $\E$ is value consistent (Lemma~\ref{cor:des-cone}) and it is decidable whether $\E$ is node kind consistent (here above).
\end{itemize}

\subsubsection{Proof of Theorem~\ref{thm:consistency-intractable}}

\paragraph{Upper bound}

Checking node kind consistency is in co-NP as shown in Section~\ref{app:previous-section}.

Regarding value consistency:
\begin{lemma}
  \label{cor:des-cone}
  Deciding whether $\E$ is value consistent is in coNP.
\end{lemma}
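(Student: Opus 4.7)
The plan is to use Theorem~\ref{thm:pre-des-cone} as the starting point and exhibit a polynomial-size certificate for value \emph{inconsistency}, placing the complement in NP. By that theorem, $\E$ fails to be value consistent exactly when there exist a violation sort $(T,f,p)$, an accessibility witness $\pi = \sigma_0,\ldots,\sigma_n$ for $(T,f)$, and two contentious st-tgds $\sigma,\sigma'$ of sort $(T,f,p)$, such that the canonical homomorphism $h$ from $\Instpisigma$ into its chase by $\Sigma_\fd$ satisfies $h(\hpisigma(t)) \neq h(\hpisigma(t'))$, where $t,t'$ are the object positions in $\head(\sigma),\head(\sigma')$. The nondeterministic algorithm will simply guess these witnesses and perform the verification.

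First I would bound the size of the witnesses. The triple $(T,f,p)$ is drawn from $\Tcal \times \Fcal \times \Pred$ and the pair $\sigma,\sigma'$ from $\Sigma_\st \times \Sigma_\st$, both of polynomial size. For the accessibility sequence, the key observation is the remark following Definition~\ref{def:accessible}: whenever $(T,f)$ is accessible it is accessible via an \emph{elementary} sequence, whose entries are pairwise distinct st-tgds, so one may assume $|\pi| \le |\Sigma_\st|$. Checking that a guessed $\pi$ actually satisfies the four clauses of Definition~\ref{def:accessible}, and that $(T,p)$ matches a functional multiplicity $\ONE$ or $\MAYBE$ in $\delta$, are purely syntactic tests.

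Next I would describe the verification. The body $\Bpisigma = \bigcup_i \body(\sigma_i)$ and the substitution $\hpisigma$ are built in polynomial time directly from their inductive definitions, producing $\Instpisigma = \hpisigma(\Bpisigma)$ of polynomial size. Crucially, by construction $\Instpisigma$ contains only null values introduced by $\hpisigma$, so chasing it with the egds in $\Sigma_\fd$ cannot fail on clashing constants; it terminates in polynomial time by merging null-equivalence classes, yielding a unique (up to isomorphism) instance $J$ and the canonical projection $h:\Instpisigma \to J$. One then tests whether $h(\hpisigma(t)) \neq h(\hpisigma(t'))$ in constant time. By Theorem~\ref{thm:pre-des-cone}, the existence of guessed witnesses that pass this test is equivalent to value inconsistency of $\E$, hence the complement of value consistency is in NP and value consistency itself is in coNP. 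The only delicate ingredient of the argument is the elementary-sequence bound on $|\pi|$; without it, a naive guess of $\pi$ could a priori be of unbounded length, so this observation is what prevents the certificate from blowing up.
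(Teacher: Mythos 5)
Your proposal is correct and follows essentially the same route as the paper: both negate Theorem~\ref{thm:pre-des-cone}, guess the violation sort, an elementary accessibility sequence $\pi$, and the contentious pair $\sigma,\sigma'$ as a polynomial-size certificate, then construct $\Instpisigma$, chase it with $\Sigma_\fd$ in polynomial time (your observation that the all-null instance cannot cause a failing egd step, and the paper's remark that fd bodies have exactly two atoms, are just two ways of justifying the same polynomial bound), and test the inequality of the two object terms. Your emphasis on the elementary-sequence bound for $|\pi|$ is exactly the point the paper relies on as well.
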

\begin{proof}
    The contraposition of Theorem~\ref{thm:pre-des-cone} implies that $\E$ is value inconsistent iff there exists a source instance ($J$ in the theorem) that satisfies the source integrity constraints $\Sigma_\fd$ but is value inconsistent.
  This gives a co-NP decision procedure for value consistency.
  Indeed, $J$ as in the theorem is a certificate for the value inconsistency.
  We now argue that such certificate has size polynomial in the size of $\E$ and we can test in polynomial time whether it is indeed value inconsistent.
  First, guess a violation sort $(T,f,p)$, an elementary sequence $\pi = \sigma_0,\ldots,\sigma_n$ and two st-tgds $\sigma,\sigma'$ from $\Sigma_\st$.
  This is done in polynomial time as $\pi$ is elementary.
  Then check that $(T,f)$ is accessible in $\E$ with $\pi$ and that $\sigma,\sigma'$ are contentious with sort $(T,f,p)$ and construct the source instance $\Instpisigma$.
  This step is done in polynomial time as well.
  Finally, chase $\Instpisigma$ with $\Sigma_\fd$.
  The latter can also be done in polynomial time because the instance $\Instpisigma$ has a polynomial size, and all bodies of dependencies in $\Sigma_\fd$ contain exactly two atoms, thus require to compute a unique join in order to be evaluated.
  Additionally, $\Sigma_\fd$ chase steps do not increase the size of the instance, and only a polynomial number of chase steps can be executed before a solution or a failure is reached.
  The result of the chase is the certificate $J$.
  Consequently, deciding whether $\E$ is value consistent is in coNP.
\end{proof}

\paragraph{Lower bound}
  We prove coNP-hardness with reduction from the complement of SAT. Take any CNF

  $\varphi=c_1\land\ldots\land c_m$, where
  $c_j=\ell_{j,1}\lor\ldots\lor\ell_{j,k_j}$ is a clause over the variables
  $x_1,\ldots,x_n$. We construct the corresponding data exchange setting
  $\E_\varphi$ as follows. The relational schema consists of the following
  binary relation names (each having the first attribute as a key
  $A\rightarrow B$)
  \[
    V_\T(\underline{A},B), V_\F(\underline{A},B), R_1(\underline{A},B), \ldots,
    R_m(\underline{A},B)
  \]
  The constructor set is 
  \[
    \Fcal=\{f_1,\ldots,f_m,f_{m+1}\}
  \] 
  and their implementation is very straightforward
  $f_i(x) = \text{"$i$:"}+\mathtt{str}(x)$. We use the types
  \[
    \Tcal=\{T_1,\ldots,T_m,T_{m+1}\}
  \] 
  and the shape constraints:
  \begin{align}
    & T_j\to a\dbl T_{j+1}^\MANY\qquad\text{for $1\leq j \leq m$ and} \\
    & T_{m+1}\to a\dbl \Literal^\ONE.
  \intertext{
  The source to target dependencies are as follows. First, we have the two rules:
  }
    &V_\T(x,y) \Rightarrow \Triple(f_{m+1}(x),a,y) \label{eq:VT}\\
    &V_\F(x,y) \Rightarrow \Triple(f_{m+1}(x),a,y) \label{eq:VF}
  \intertext{
  Next, for any $1\leq j\leq m$ let $c_j=\ell_{j,1}\lor\ldots\lor\ell_{j,k_j}$
  and for $q\leq k \leq k_j$ if $\ell_{j,k}= x_i$, then we add this rule
  }
    &R_i(x,y)\land V_\T(x,y)\Rightarrow \Triple(f_j(x),a,f_{j+1}(x))\label{eq:RT}
  \intertext{
  and otherwise if $\ell_{j,k}= \lnot x_i$, then we add this rule
  }
    &R_i(x,y)\land V_\F(x,y)\Rightarrow \Triple(f_j(x),a,f_{j+1}(x))\label{eq:RF}
  \intertext{
  And finally, we add the following two rules:
  }
    &V_\T(x,y)\Rightarrow T_1(f_1(x))\\
    &V_\F(x,y)\Rightarrow T_1(f_1(x))
  \end{align}
  We claim that 
  \[
    \varphi\in\text{SAT}\quad\text{iff}\quad\text{$\E_\varphi$ is not consistent}.
  \]

  For \emph{only if} part, we take a valuation $V$ that satisfies $\varphi$ and
  construct an instance $I_V$ as follows. We fix 3 constants $c$, $\T$, and
  $\F$. The instance is
  \begin{multline*}
    I_V=\{V_\T(c,\T), V_\F(c,\F)\}\cup \{R_i(c,\T) \mid i\in\{1,\ldots,n\},\ V(x_i)=\text{\bf true}\}\cup{}\\
    \{R_i(c,\F) \mid i\in\{1,\ldots,n\},\ V(x_i)=\text{\bf false}\}.
  \end{multline*}
  It is easy to see that $I_V$ is consistent and with a simple inductive proof
  we can show that the result of chase on $I_V$ contains $T_{m+1}(f_{m+1}(c))$
  and the two triples $\Triple(f_{m+1}(c),a,\T)$ and $\Triple(f_{m+1}(c),a,\F)$
  which violates the shape constraint on the type $T_{m+1}$.

  For the \emph{if} part, we take a consistent instance $I$ such that chase of $I$ with $\E_\varphi$ is equal to $J$ and violates the shape constraints. The only shape
  constraint that can be violated is the constraint on the type $T_{m+1}$ (all
  remaining constraints can be satisfied by chase by adding null values if
  needed). Consequently $J$ contains $T_{m+1}(t_{m+1})$,
  $\Triple(t_{m+1},a,\T)$, and $\Triple(t_{m+1},a,\F)$, for some $t_{m+1}$,
  $\F$, and $\T$. Naturally, the two triples must be introduced with the rules
  \eqref{eq:VT} and \eqref{eq:VT}, and therefore, there is a constant $c$ such
  that $t_{m+1}=f_{m+1}(c)$, $V_\T(c,\T)\in I$, and $V_\F(c,\F)\in
  I$. Furthermore, with a simple inductive proof we can show that for every
  $j\in\{1,\ldots,m\}$ we have $T_j(f_j(c))\in J$,
  $\Triple(f_j(c),a,f_{j+1}(c))\in J$. We observe the triples
  $\Triple(f_j(c),a,f_{j+1}(c))$ can be only added by chase with the use of
  rules \eqref{eq:RT} and \eqref{eq:RF}, and the inductive proof also shows that
  every clause $c_j$ has at least one literal for which the corresponding rule
  must have been triggered. Since $I$ is consistent for no $i\in\{1,\ldots,n\}$
  can $I$ have both $R_i(c,\T)$ and $R_i(c,\F)$ ($I$ may have none of the
  two). We can therefore define the following valuation
  \[
    V(x_i) = 
    \begin{cases}
      \text{\bf true} &\text{if $R_i(c,\T)\in I$,}\\
      \text{\bf false} & \text{otherwise.}
    \end{cases}
  \]
  We show that $V$ satisfies $\varphi$ by observing that if for the chase
  triggers a clause \eqref{eq:RT} or \eqref{eq:RF} that corresponds to some
  literal $\ell$ of $c_j$, then $V$ satisfies $c_j$. We finish the proof by
  observing that the proposed reduction is polynomial. 

\subsection{Consistency of Non-Constructive st-tgds}
\label{app:consistency-non-constructive}

\newcommand{\coldash}{\mathrel{\text{:--}}}
\newcommand{\espace}{\hspace{0.7cm}}

Using the notations from Section~\ref{sec:consistency-non-constructive}, recall that any pair of rules in $\Sigma_\st$ use pairwise disjoint variables, $\theta$ is the set of terms that appear in the heads of $\Sigma_\st$, and $\Tcal$ is the set of type names of the shapes schema $\Sbold$. Denote $V$ the universally quantified variables that appear in $\Sigma_\st$.
We write $A \in \Sigma_\st$ when the atom $A$ appears in some head in $\Sigma_\st$, thus $\Sigma_\st$ is viewed as a monadic relation over atoms.
We use $t,t',u,u'$ to denote terms, and $x,x',y,y'$ and $\bar{x},\bar{y}$ to denote variables and vectors of variables, respectively.
The relations $\Acc \subseteq \theta \times \Tcal$, $\Eq \subseteq \theta \times \theta$ and $\Rev \subseteq \theta$ are defined by the following mutually recursive rules, where $X,X',Y,Y'$ are variables over $\theta$.
\begin{align}
  &T(t) \in \Sigma_\st \coldash \Acc(T,t)\\
  &\Eq(f(\bar{x}), f(\bar{y})) & \text{ whenever } f(\bar{x}) \in \theta, f(\bar{y}) \in \theta \\ 
  &\Rev(x) & \text{ whenever } x \in V \\ 
  &\Rev(f(\bar{x})) &\text{ whenever } f(\bar{x}) \in \theta\\
  &\Acc(T,X), \Triple(X,\p,Y) \in \Sigma_\st \coldash \Acc(U,Y) & \text{ for every } \delta(T,\p) = U^\mu\\
  &\Acc(T,X), \Eq(X,Y) \coldash \Acc(T,Y) & \text{ for every } T \in \Tcal\\
  &\Triple(X,\p,Y) \in \Sigma_\st, \Triple(X',\p,Y') \in \Sigma_\st, \nonumber\\
  &\espace \Eq(X,X'), \nonumber\\
  &\espace \Acc(T,X), \Acc(T,X') \coldash \Eq(Y,Y') & \text{ for every } \delta(T,\p) = U^\mu, \mu \in \{\ONE, \MAYBE\} \\
  &\Triple(X,\p,Y) \in \Sigma_\st, \Triple(X',\p,Y') \in \Sigma_\st, \nonumber\\
  &\espace \Eq(X,X'), \Rev(Y), \nonumber\\
  &\espace \Acc(T,X), \Acc(T,X')  \coldash \Rev(Y') &\text{ for every } \delta(T,\p) = U^\mu, \mu \in \{\ONE, \MAYBE\}
\end{align}
Two terms are \emph{contentious} if they satisfy the relation $\Cont \subseteq \theta \times \theta$ defined by
\begin{align}
  &\Triple(X,\p,Y) \in \Sigma_\st, \Triple(X',\p,Y') \in \Sigma_\st, \nonumber\\
  &\espace \Eq(X,X'), \Acc(T,X), \nonumber\\
  &\espace \Rev(Y), \Rev(Y') \coldash \Cont(Y,Y') &\text{ for every } \delta(T,\p) = U^\mu, \mu \in \{\ONE, \MAYBE\} 
\end{align}
This captures the fact that triples generated from the atoms $\Triple(X,\p,Y)$ and $\Triple(X',\p,Y')$ might lead to value inconsistency caused by the functional predicate egd for $\delta(T,\p) = U^\mu$.

The rules defining $\Acc$, $\Eq$, $\Rev$ and $\Cont$ can be turned into a Datalog program over the signature $\{\Acc, \Eq, \Rev, \Cont, \_ \in \Sigma_\st\}$ by creating as many rules as required by the conditions on $\delta$.
The size of $P$ is polynomial in the size of $\E$.
Then we can use $P$ to materialize in polynomial time the relations $\Acc$, $\Eq$, $\Rev$ and $\Cont$.

\begin{example}[Example~\ref{ex:illustartion-acc-eq-rev} continued.]
  \label{ex:ex1}
  With the data exchange setting as defined in Example~\ref{ex:illustartion-acc-eq-rev}, the materialized relations contain the following facts (non exhaustive).
  \begin{align*}
    &\Acc(T, f(x')), \Acc(U,g(y'')), \Acc(U, g(y))\\
    &\Rev(w''), \Rev(y')\\
    &\Eq(g(y''),z'), \Eq(g(y),z'), \Eq(g(y),g(y'')), \Eq(f(x),f(x'))
  \end{align*}
  \qed
\end{example}

Let $\Cont(t,t')$ for some terms $t,t'$ in $\theta$.
A \emph{proof tree} for $\Cont(t,t')$ is a derivation of the program $P$ which root is $\Cont(t,t')$ and which other nodes are facts from the $\Acc$, $\Eq$, $\Rev$ and $\_ \in \Sigma_\st$ relations.
In particular, the children of the root of such proof tree are $\Triple(u,\p,t) \in \Sigma_\st$, $\Triple(u',\p,t') \in \Sigma_\st$, $\Eq(u,u')$, $\Acc(T,u)$, $\Rev(t)$ and  $\Rev(t')$ for some terms $u,u'$ and some type $T$ and predicate $\p$.

\begin{example}[Example~\ref{ex:ex1} continued.]
  \label{ex:ex2}
  With the data exchange setting as defined in Example~\ref{ex:illustartion-acc-eq-rev}, there is a proof tree for $\Cont(w'',y')$ which nodes are exactly the facts listed in Example~\ref{ex:ex1} and has additionally as leaves all the facts of the form $A \in \Sigma_\st$ for all the atoms $A$ that appear in some rule head in $\Sigma_\st$:
  \begin{align*}
    &\Triple(f(x),\p,g(y)) \in \Sigma_\st, \Triple(g(y),\q,z) \in \Sigma_\st\\
    &T(f(x')) \in \Sigma_\st, \Triple(f(x'),\p,z') \in \Sigma_\st, \Triple(z', \q, y') \in \Sigma_\st\\
    &\Triple(g(y''),\q,w'') \in \Sigma_\st
  \end{align*}
  \qed
\end{example}

With every such proof tree $\pi$ we can associate an instance $I_{\pi,t,t'}$ of $\Rcal$ s.t. when chased with $\Sigma_\st \cup \Sigma_\Sbold^\tp$ would produce a violation in which (two constants derived from) the terms $t,t'$ need to be equated by an functional predicate egd for $\delta(T,\p)$.
The instance $I_{\pi,t,t'}$ is effectively constructed by a \emph{backchase} procedure.
We claim that
\begin{lemma}
  If $I$ instance of $\Rcal$ is value inconsistent, then there exist two terms $t,t'$ in $\Sigma_\st$ s.t. $\Cont(t,t')$ holds and there is $I' \subseteq I$ and $\pi$ a proof tree for $\Cont(t,t')$ s.t. $I'$ is isomorphic $I_{\pi,t,t'}$.
\end{lemma}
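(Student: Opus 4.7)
The plan is to reverse-engineer the chase: starting from an assumed value inconsistency of $I$, extract the witnessing chase steps and show that they correspond, one by one, to rules of the Datalog program $P$ defining $\Cont$.

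First, I would fix an arbitrary terminating chase sequence of $I$ with $\Sigma_\st \cup \Sigma_\Sbold^\tp$ that leads to a value inconsistent core pre-solution $J$. By definition this means that $J$ contains a violation $W = \{T(a), \Triple(a,\p,b), \Triple(a,\p,b')\}$ with $b \neq b'$ both constants (or, more precisely, both values that will eventually be merged via predicate functionality and forced to be equal to two distinct constants; the case analysis for \PE-created nulls is handled via $\Rev$ below). Each of $\Triple(a,\p,b)$ and $\Triple(a,\p,b')$ was produced by an st-tgd application: there are st-tgds with heads containing $\Triple(u,\p,t)$ and $\Triple(u',\p,t')$ respectively, and homomorphisms $h,h'$ from their bodies into $I$ with $h(u) = h'(u') = a$, $h(t) = b$ and $h'(t') = b'$.

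Second, I would show, by induction on the structure of the chase, that each piece of the witness above corresponds to a derivation in $P$. Specifically: (i) the fact that $h(u) = h'(u')$ unfolds into a sequence of unifications, each either identifying two terms built with the same IRI constructor (giving $\Eq(f(\bar x), f(\bar y))$) or arising from an existential variable that was later equated via a \PF egd applied during the chase; this second kind of equality is exactly what the $\Eq$-closure rule involving $\Acc(T,\cdot)$ and $\delta(T,\p)=U^\mu$ with $\mu \in \{\ONE,\MAYBE\}$ captures. (ii) The presence of $T(a)$ in $J$ traces back through \TP-rules and st-tgd heads of the form $T_0(f_0(\bar y_0))$, which is exactly what the $\Acc$ rules derive. (iii) The fact that $b$ and $b'$ are constants (or existential values equated to constants) corresponds to the $\Rev$ predicate: $\Rev(t)$ holds when $t$ is either a universally quantified variable (bound to a source constant) or a term $f(\bar x)$ (producing an IRI constant), and the \PF-based closure of $\Rev$ mirrors the merging of nulls with constants during the chase. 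Collecting these derivations into a single tree rooted at $\Cont(t,t')$ yields the desired proof tree $\pi$.

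Third, having $\pi$, I would run the backchase procedure to build $I_{\pi,t,t'}$, and exhibit a sub-instance $I' \subseteq I$ isomorphic to it. The natural candidate for $I'$ is the image in $I$ of the bodies of the st-tgds that appear at the leaves of $\pi$, under the homomorphisms $h, h'$ and the analogous homomorphisms extracted from the other leaves; I would then verify that after contracting along the $\Eq$-edges used in $\pi$ (i.e.\ identifying the source variables forced to coincide) the induced sub-instance is isomorphic to $I_{\pi,t,t'}$ by construction of the backchase.

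The main obstacle is step two, and in particular the treatment of existential variables. When a \PE-rule invents a null, and a later \PF-egd merges it with another term, the two st-tgd head-terms that end up being equated are not related by a shared IRI constructor in any direct syntactic sense; the relationship is mediated by the $\Acc$/$\Eq$ interaction. Making this precise requires a careful induction that simultaneously tracks, for every null introduced by the chase, the set of head-terms it may have been unified with, and shows that each such unification step is reflected by exactly one application of the corresponding $\Eq$- or $\Rev$-rule in $P$. Once this bookkeeping is set up, the construction of $\pi$ and the identification of $I'$ with $I_{\pi,t,t'}$ are routine.
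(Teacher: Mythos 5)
Your overall strategy is the one the paper intends: its entire proof of this lemma is the single sentence ``similar to the proof of Theorem~\ref{thm:pre-des-cone}'', i.e.\ fix a chase sequence witnessing the failure and reverse-engineer it, and your plan elaborates exactly that. You also correctly isolate the genuinely new difficulty of the non-constructive case, namely equalities between head terms that are not mediated by a shared IRI constructor but by a \PE-invented null that a later \PF egd merges with another term --- precisely what the $\Acc$/$\Eq$ interaction in the Datalog program is meant to capture. Steps one and two of your plan are sound in outline.

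The step that would fail as written is the final isomorphism claim. The homomorphisms you extract from the chase map the bodies of the leaf st-tgds into $I$, and their images may identify strictly more source values than the $\Eq$-edges of the proof tree force: two variables occurring in two different rule bodies that $\pi$ never relates may still land on the same constant of $I$. The induced sub-instance $I'$ then has fewer elements than $I_{\pi,t,t'}$ and cannot be isomorphic to it, and switching to another proof tree for $\Cont(t,t')$ does not repair this, since proof trees only vary the derivations used, not the pattern of additional identifications in $I$. What your chase analysis actually delivers is a homomorphism from $I_{\pi,t,t'}$ to $I$, exactly as in Proposition~\ref{prop:existence-violation} for the constructive case; ``by construction of the backchase'' cannot upgrade that to an isomorphic embedding. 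The lemma should be read (or restated) in that weaker homomorphism form, with the decision procedure then chasing $I_{\pi,t,t'}$ itself with $\Sigma_\fd$ and checking whether $t$ and $t'$ are forced to be equal, as in Theorem~\ref{thm:pre-des-cone}. With that adjustment, the rest of your plan goes through.
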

The proof is similar the proof of Theorem~\ref{thm:pre-des-cone}.

Then in order to check value inconsistency of $\E$ it is enough to enumerate all $t,t'$ s.t. $\Cont(t,t')$, all proof trees $\pi$ for $\Cont(t,t')$ and the corresponding instances $I_{\pi,t,t'}$.
If such $I_{\pi,t,t'}$ exists and is a valid instance of $\Rbold$, that is, satisfies the source functional dependencies, then $I_{\pi,t,t'}$ is value inconsistent for $\E$ thus $\E$ is value inconsistent.
If there is no $I_{\pi,t,t'}$ that satisfies the source functional dependencies, then $\E$ is value consistent.

\begin{example}[Example~\ref{ex:ex2} continued.]
  With the data exchange setting as defined in Example~\ref{ex:illustartion-acc-eq-rev} and the proof tree mentioned in Example~\ref{ex:ex2}, we construct the source instance
  $$
  \{R(x,y,z), S(x, y'), R(x'', y, z'')\}.
  $$
  One can see that when chasing the above source instance we can derive the facts
  \begin{align*}
    &\{\Triple(f(x),\p,g(y)), \Triple(g(y),\q,\bot_1), U(g(y))\\
    &\hspace{1cm}T(f(x)), \Triple(f(x), \p, \bot_2), \Triple(\bot_2,\q,y'), U(\bot_2)\\
    &\hspace{1cm}\Triple(g(y),\q,w'')\} 
  \end{align*}
  Then using the functional predicate egd for $\delta(T,\p) = U^\ONE$ we \emph{reveal} $\bot_2$ as being equal to $g(y)$.
  Finally applying the functional predicate egd for $\delta(U,\q) = \Literal^\MAYBE$ we need to equate $w''$ and $y'$ thus the chase fails and $I$ is value inconsistent.
  \qed
 \end{example}

On the other hand, node kind consistency of a non-constructive data exchange setting $\E$ can be tested in the same way as for constructive data exchange settings.
This concludes the proof of Theorem~\ref{thm:consistency-non-constructive}.

\subsection{Complexity of consistency for nondeterministic shape schemas}
\label{app:complexity-non-det}
\renewcommand{\t}{\ensuremath{\mathbf{t}}}
\newcommand{\f}{\ensuremath{\mathbf{f}}}

We reduce the problem of validity of $\pmb{\forall\exists}\mathsf{QBF}$ formulas
to testing the consistency of constructive data exchange settings with
nondeterministic shape schemas.

We fix a formula $\Phi=\forall \bar{x}.\exists \bar{y}.\varphi$, where
$\varphi=c_1\land\ldots\land c_k$ is a conjunction of clauses over
$\bar{x}=x_1,\ldots,x_n$ and $\bar{y}=y_1,\ldots,y_m$. We construct the
following data exchange setting $\E_\Phi$. The relational source schema consists
of relations (each relation with a single key)
\[
V_\t(\underline{x},y),
V_\f(\underline{x},y),
R_{x_1}(\underline{x},y),
\ldots,
R_{x_n}(\underline{x},y).
\]
We employ a non-overlapping library that for every variable
$v\in\bar{x}\cup\bar{y}$ contains the unary IRI constructors
$f_v, f_v^\t, f_v^\f$ and for every clause $c$ it contains a unary $f_c$.

The source-to-target dependencies and the shape schema will introduce a gadget
for every variable $v\in\bar{x}\cup\bar{y}$ that will be the only possible
source of inconsistency. We identify 3 forms of the gadget, $\mathbf{G}_v$ when
the valuation of $v$ is (yet) undetermined, and $\mathbf{G}_v^\t$ and
$\mathbf{G}_V^\f$ for when the variable takes the value true and false
respectively. The 3 kinds of gadgets are presented in
Figure~\ref{fig:gadget-variable}
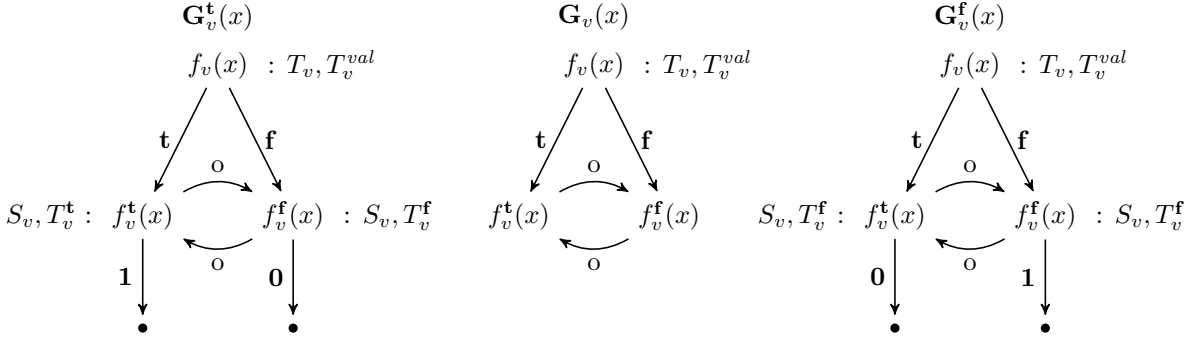
\begin{figure}[htb]
  \centering
  \begin{tikzpicture}[>=stealth',semithick,
      punkt/.style={circle,minimum size=0.1cm,draw,fill,inner sep=0pt, outer sep=0.125cm}]
    \path[use as bounding box] (-6,-3.5) rectangle (6,0.75);
    \begin{scope}
      \node at (0,0.65) {$\mathbf{G}_v(x)$};
      \node (v) at (0,0) {$f_v(x)$};
      \node[right] at (v.east) {: $T_v,T_v^\mathit{val}$};
      \node (vt) at (-1,-2) {$f_v^\t(x)$} edge [<-] node[left] {\t} (v);
      \node (vf) at (1,-2) {$f_v^\f(x)$} 
      edge [<-] node[right] {\f} (v)
      edge [<-,bend right] node[above] {o} (vt)
      edge [->,bend left] node[below] {o} (vt);
    \end{scope}
    \begin{scope}[xshift=-5cm]
      \node at (0,0.65) {$\mathbf{G}_v^\t(x)$};
      \node (v) at (0,0) {$f_v(x)$};
      \node[right] at (v.east) {: $T_v,T_v^\mathit{val}$};
      \node (vt) at (-1,-2) {$f_v^\t(x)$} edge [<-] node[left] {\t} (v);
      \node[left] at (vt.west) {$S_v,T_v^\t$ :};
      \node[punkt] (o) at (-1,-3.5) {} edge[<-] node[left]{$\mathbf{1}$} (vt);
      \node (vf) at (1,-2) {$f_v^\f(x)$}
      edge [<-] node[right] {\f} (v)
      edge [<-,bend right] node[above] {o} (vt)
      edge [->,bend left] node[below] {o} (vt);
      \node[right] at (vf.east) {: $S_v,T_v^\f$};
      \node[punkt] (o) at (1,-3.5) {} edge[<-] node[left]{$\mathbf{0}$} (vf);
    \end{scope}
    \begin{scope}[xshift=5cm]
      \node at (0,0.65) {$\mathbf{G}_v^\f(x)$};
      \node (v) at (0,0) {$f_v(x)$};
      \node[right] at (v.east) {: $T_v,T_v^\mathit{val}$};
      \node (vt) at (-1,-2) {$f_v^\t(x)$} edge [<-] node[left] {\t} (v);
      \node[left] at (vt.west) {$S_v,T_v^\f$ :};
      \node[punkt] (o) at (-1,-3.5) {} edge[<-] node[left]{$\mathbf{0}$} (vt);
      \node (vf) at (1,-2) {$f_v^\f(x)$}
      edge [<-] node[right] {\f} (v)
      edge [<-,bend right] node[above] {o} (vt)
      edge [->,bend left] node[below] {o} (vt);
      \node[right] at (vf.east) {: $S_v,T_v^\f$};
      \node[punkt] (o) at (1,-3.5) {} edge[<-] node[left]{$\mathbf{1}$} (vf);
    \end{scope}
  \end{tikzpicture}
  \caption{Three gadgets for a variable.}
  \label{fig:gadget-variable}
\end{figure}
The shape schema for every variable $v\in\bar{x}\cup\bar{y}$ contains the following types and their definitions 
\begin{align*}
  &T_v \rightarrow \t\dbl{} S_v,\ \f\dbl{}S_v\\
  &S_v \rightarrow o\dbl S_v, 
    \mathbf{1}\dbl{}T_\varnothing\MAYBE, 
    \mathbf{1}\dbl{}T_\varnothing\MAYBE\\
  &T_v^\mathit{val} \rightarrow 
  \t\dbl{}T_v^\t\MAYBE,\ 
  \f\dbl{}T_v^\t\MAYBE,\ 
  \t\dbl{}T_v^\f\MAYBE,\ 
  \f\dbl{}T_v^\f\MAYBE\\
  &T_v^\t \rightarrow 
    \mathbf{1}\dbl{}T_\varnothing\ONE,\ 
    \mathbf{0}\dbl{}T_\varnothing\NONE,\ 
    o\dbl{}T_v^\f\\
  &T_v^\f \rightarrow 
    \mathbf{1}\dbl{}T_\varnothing\NONE,\ 
    \mathbf{0}\dbl{}T_\varnothing\ONE,\ 
    o\dbl{}T_v^\t\\
  &T_\varnothing\rightarrow \epsilon
\end{align*}
We point out that imposing the types $T_v$ and $T_v^\mathit{val}$ on a node
$f_v(x)$ in $\mathbf{G}_v(x)$ creates a tension that can only be resolved by
adding the necessary outgoing edges $\mathbf{0}$ and $\mathbf{1}$ to the nodes
$f_v^\t(x)$ and $f_v^\f(x)$ but only in one of the two ways $\mathbf{G}_v^\t(x)$
or $\mathbf{G}_v^\f(x)$; in particular no node $f_v^c(x)$ can have both outgoing
edges $\mathbf{0}$ and $\mathbf{1}$ and every such node must have at least one
of the two. In fact, this is the principal reason why an result of chasing any
source instance would fail and we shall refer to such situation as type
overlap. The schema also contains the following type that shall be used to
enforce satisfiability of the clauses
\begin{align*}
&C   \rightarrow l\dbl{}T_\t\PLUS, l\dbl{}T_\f\MANY\\
&T_\t\rightarrow \mathbf{1}\dbl{}T_\varnothing\ONE\\ 
&T_\f\rightarrow \mathbf{0}\dbl{}T_\varnothing\ONE
\end{align*}
Now, the source-to-target dependencies are as follows. For every $x\in\bar{x}$
we have
\begin{align*}
  &V_\t(z,y)\land R_x(z,y) \Rightarrow G_x^\t(z)\\
  &V_\f(z,y)\land R_x(z,y) \Rightarrow G_x^\f(z)
\end{align*}
For every $y\in\bar{y}$ we have 
\begin{align*}
  &V_\t(x,y_1)\land V_\f(x,y_2) \Rightarrow G_y(x)
\end{align*}
For every clause $c$ we also introduce the following st-tgd. Let
$c=\ell_1\lor\ldots\lor\ell_a$, let $v_i$ be the variable used by the literal
$\ell_i$, and let $b_i\in\{\t,\f\}$ be the valuation of $v_i$ that satisfies
$c$. The st-tgd we introduce for $c$ is
\begin{align*}
  &V_\t(x,y_1)\land V_\f(x,y_2)\Rightarrow 
\Triple(f_c(x),l,f_{v_1}^{b_1}(x))\land\ldots\land
\Triple(f_c(x),l,f_{v_a}^{b_a}(x))\land C(f_c(x)).
\end{align*}
For instance, if $c=\lnot x_2\lor y_4 \lor x_3$, then the corresponding st-tgd
is
\begin{align*}
  V_\t(x,y_1)\land V_\f(x,y_2)\Rightarrow &
    \Triple(f_c(x),l,f_{x_2}^{\f}(x))\land{}\\
    &\Triple(f_c(x),l,f_{y_4}^{\t}(x))\land{}\\
    &\Triple(f_c(x),l,f_{x_3}^{\t}(x))\land{}\\
  &C(f_c(x)).
\end{align*}

\noindent
We claim that $\Phi$ is valid if and only if $\E_\Phi$ is consistent. 
\begin{itemize}
\item For the \emph{if} part we fix an instance $I$ and observe that a result of
  chasing $I$ may only fail due to type overlap in the nodes of some gadget
  $G_v(c)$. This is only possible if $c$ is present in both $V_\t(c,t)$ and
  $V_\f(c,f)$ for some $t$ and $f$. In fact, for each such $c$ we can consider
  the subset $I_c\subseteq I$ containing only the facts of $I$ that use $c$ as
  their key, and the problem can be treated independently for $I_c$. From this
  subinstance we construct the (possibly partial) valuation
  $V_C:\bar{x}\rightarrow\{\t,\f\}$ and we take any
  $V^*:\bar{y}\rightarrow\{\t,\f\}$ such that $V_c\cup V^*\models\varphi$. We
  use $V^*$ to construct a consistent solution $J_c$ to $I_c$. The solution to
  $I$ is obtained from taking the union of all $J_c$'s (and the result of
  chasing any elements of $I$ that do not belong to any $I_c$ but those cannot
  create any inconsistency). 
\item For the \emph{only if} part we take any valuation
  $V:\bar{x}\rightarrow\{\t,\f\}$ and build the instance 
  \[
    I_V=\{V_\t(c,\t),V_f(c,\f)\}\cup\{R_x(c,V(x))\mid x\in\bar{x}\}
  \]
  Since this is a consistent source instance, it has a solution $J$ for
  $\E_\Phi$. This solution contains either the gadget $G_y^\t(c)$ or $G_y^\f(c)$
  for every universally quantified variable $y\in\bar{y}$. We construct the
  corresponding valuation $V^*:\bar{y}\rightarrow\{\t,\f\}$. That
  $V\cup V^*\models \varphi$ follows from the fact that $J$ satisfies the
  constraints imposed by the type $C$
\end{itemize}

\subsection{Proofs for Section~\ref{sec:query-answering}(Certain Query Answering)}
\label{app:proofs-cqa}
Note that $\bisimulates$ is an equivalence relation on nodes of $G$, and we
denote by $[n]$ the equivalence class of node $n$ and by
$\nodes(G)/_\bisimulates$ the set of all equivalence classes. For each
equivalence class $C\in\nodes(G)/_\bisimulates$ we fix an arbitrarily chosen
representative node $\eta_C\in C$. Now, the \emph{bisimulation quotient} of $G$,
denoted by $G/_\bisimulates$ is the graph
$G/_\bisimulates=\{(\eta_{[n]},p,\eta_{[m]}) \mid (n,p,m)\in G\}$. The choice of
the representative does not matter because a non-null value is bisimilar only to
itself, and consequently, every non-singleton equivalence class in
$\nodes(G)/_\bisimulates$ contains null values only. The bisimulation quotient
of a typed graph $(G,\typing)$ is the typed graph $(G/_\bisimulates,\typing')$,
where $\typing'(\eta_C) = \bigcup\{\typing(n)\mid n\in C\}$ for any
$C\in\nodes(G)/_\bisimulates$.
Take a regular acyclic pattern $E$. We claim.
\begin{lemma}
	\label{lem:regular-expression}
	For any two nodes $n$ of $G$ and $m$ of $H$ such that $n \simulates m$, for any $n'$ of $G$, $(n,n')\in \llbracket E \rrbracket_G$ implies there is a $m'$ of $H$ such that $(m,m')\in \llbracket E \rrbracket_H$ and $n'\simulates m'$
\end{lemma}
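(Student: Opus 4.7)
The plan is to proceed by structural induction on the forward NRE $E$. The base cases are $E=\epsilon$, $E=p$ with $p\in\Pred$, $E=\Box$, and $E=\langle\ell\rangle$. For $\epsilon$, we have $n=n'$, so taking $m'=m$ gives $(m,m)\in\sem{\epsilon}_H$ and $n'\simulates m'$ by hypothesis. For $E=p$, there is an edge $(n,p,n')\in G$, and clause (3) of the definition of simulation produces $m'$ with $(m,p,m')\in H$ and $n'\simulates m'$. The case $E=\Box$ is analogous, existentially quantifying the predicate label. For $E=\langle\ell\rangle$ with $\ell$ a (non-null) constant, we have $n=n'=\ell$, so clause (2) of simulation forces $m=\ell$ and $m'=m$ works.

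The inductive cases are $[F]$, $F_1+F_2$, $F_1\cdot F_2$, and $F^*$. The test $[F]$ and union $F_1+F_2$ follow directly from the induction hypothesis applied to the subexpressions (for $[F]$ we take $m'=m$ after using the IH to transport the witness of non-emptiness). For concatenation $F_1\cdot F_2$, if $(n,n')\in\sem{F_1\cdot F_2}_G$ is witnessed by $k\in\nodes(G)$ with $(n,k)\in\sem{F_1}_G$ and $(k,n')\in\sem{F_2}_G$, then the IH applied to $F_1$ yields $k'\in\nodes(H)$ with $(m,k')\in\sem{F_1}_H$ and $k\simulates k'$; applying the IH to $F_2$ with the pair $(k,k')$ produces the desired $m'$.

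The main obstacle is the Kleene closure $E=F^*$, which requires an auxiliary induction on the number of $F$-steps in a witnessing path. If $(n,n')\in\sem{F^*}_G$, fix a sequence $n=n_0,n_1,\ldots,n_j=n'$ with $(n_i,n_{i+1})\in\sem{F}_G$ for $0\leq i<j$. I would show by induction on $j$ that there exists a sequence $m=m_0,m_1,\ldots,m_j=m'$ in $H$ with $(m_i,m_{i+1})\in\sem{F}_H$ and $n_i\simulates m_i$ for all $i$: the base case $j=0$ is immediate, and the inductive step applies the outer IH on $F$ to extend the sequence by one step.

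Finally, I would stress that the argument works precisely because the grammar of forward NREs omits the inverse operator $E^-$. The inverse would require following edges \emph{backward} from $n$ to $n'$, which the one-directional definition of simulation does not preserve; it is this asymmetry that motivates restricting to $\text{NRE}^\rightarrow$ in Lemma~\ref{lem:regular-is-rub}.
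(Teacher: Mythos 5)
Your proof is correct and follows essentially the same route as the paper: a structural induction on $E$, with the base cases discharged by clauses (2) and (3) of the simulation relation and the inductive cases by the induction hypothesis. The only real difference is the Kleene-star case, where your inner induction on the length of the witnessing $F$-path is cleaner than the paper's somewhat opaque decomposition of the closure, but both arguments establish the same fact.
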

\begin{proof}
	The proof is by induction on the structure of $E$. The base cases are $E$ with $|E|=1$ ($E\in \{\epsilon,p,\Box, \langle \ell \rangle \})$. We have the following cases:
	\begin{itemize}
		\item When $E=\epsilon$. Since $n\simulates m$, trivially $(m,m)\in \llbracket \epsilon \rrbracket_H$.
		\item When $E=p$. Assume $(n,n')\in \llbracket p \rrbracket_G$. By semantics of $\llbracket p \rrbracket_G$ and $n\simulates m$, there is a $m'$ such that $(m,p,m') \in H$. Since $(m,p,m') \in H$, then $(m,m')\in \llbracket p \rrbracket_H$ and $n'\simulates m'$
		\item When $E=\Box$. Assume $(n,n')\in \llbracket \Box \rrbracket_G$. By semantics of $\llbracket \Box \rrbracket_G$ and $n\simulates m$, there is a $m'$ such that $(m,p,m') \in H$. Since $(m,p,m') \in H$, then $(m,m')\in \llbracket p \rrbracket_H$ and $n'\simulates m'$.
		\item When $E=\langle\ell\rangle$. Assume $(n,n)\in \llbracket \langle\ell\rangle \rrbracket_G$. Since $n\simulates m$ and $m\in \nodes(G)$ and $\llbracket \langle\ell\rangle \rrbracket_G\not=\emptyset$, then $\ell=n=m$. Since $\ell=m$, then $(m,m)\in \llbracket \langle\ell\rangle\rrbracket_H$ and $n\simulates m$.
	\end{itemize}
	Now assume (IH) that for every expression $E$ with $|E|<i$, we have that for any two nodes $n$ of $G$ and $m$ of $H$ such that $n \simulates m$, for any $n'$ of $G$, if $(n,n')\in \llbracket E \rrbracket_G$ then there is a $m'$ such that $(m,m')\in \llbracket E \rrbracket_H$ and $n'\simulates m'$. Let $E$ be a regular expression with $|E|=i$. Assume $n \simulates m$. We distinguish the following cases:
	\begin{enumerate}
		\item $E=E_1 + E_2$. We have that $|E_1|<i$ and $|E_2|<i$. Assume $(n,n')\in \llbracket E \rrbracket_G$. We have to prove that there is a $m'$ such that $(m,m') \in \llbracket E \rrbracket_H$ and $n'\simulates m'$. Let $m'$ be in $H$. By definition, $(n,n')\in \llbracket E_1 \rrbracket_G \cup \llbracket E_2 \rrbracket_G$. By IH, there is $m_1$ of $H$ such that $(m,m_1)\in \llbracket E_1 \rrbracket_G$ and $n' \simulates m_1$. Let $m_1=m'$. By IH, there is $m_2$ of $H$ such that $(m,m_2)\in \llbracket E_2 \rrbracket_G$ and $n' \simulates m_2$. Let $m_2=m'$. By $(m,m_1)\in \llbracket E_1 \rrbracket_G$ and $(m,m_2)\in \llbracket E_2 \rrbracket_G$ and $m_1=m=m_2$, we have that $(m,m')\in \llbracket E_1 \rrbracket_H \cup \llbracket E_2 \rrbracket_H$. Hence, we conclude $(m,m')\in \llbracket E \rrbracket_H$ and $n'\simulates m'$.
		\item $E=E_1\cdot E_2$. We have that $|E_1|<i$ and $|E_2|<i$. Assume $(n,n')\in \llbracket E \rrbracket_G$. By semantics of $\cdot$, $(n,n')\in \llbracket E_1 \rrbracket_G \circ \llbracket E_2 \rrbracket_G$. By composition of binary relations, there is $n_2$ such that $(n,n_2)\in \llbracket E_1 \rrbracket_G$ and $(n_2,n')\in \llbracket E_2 \rrbracket_G$. Since $(n,n_2)\in  \llbracket E_1 \rrbracket_G$ and $n\simulates m$, then there is $m_2$ such that $(m,m_2)\in \llbracket E_1 \rrbracket_H$. By applying IH, we have that $n_2\simulates m_2$, and together with $(n_2,n')\in \llbracket E_2 \rrbracket_G$, we obtain that $(m_2,m') \in \llbracket E_2 \rrbracket_H$ for some $m'$ in $H$. We can use IH to conclude that $n'\simulates m'$. By $(m,m_2)\in \llbracket E_1 \rrbracket_H$ and $(m_2,m') \in \llbracket E_2 \rrbracket_H$, we have that $(m,m')\in  \llbracket E_1 \rrbracket_H \circ \llbracket E_2 \rrbracket_H$. By semantics of $\cdot$, $(m,m')\in  \llbracket E_1 \cdot E_2\rrbracket_H$. Thus, we obtain $(m,m') \in \llbracket E \rrbracket_H$ and $n'\simulates m'$.

		\item $E=E'^*$. We have that $|E'|=i-1$. Assume $(n,n')\in \llbracket E \rrbracket_G$. Since $i>1$ and by semantics of $*$, we have that $(n,n')\in \bigcup_{2\leq k}\llbracket E' \rrbracket_G^k$. Applying transitive closure of binary relation, we obtain $(n,n')\in \llbracket E'\rrbracket_G^2 \cup \llbracket E''^{*}\rrbracket_G$ and $|E''^{*}|<i$. Then, we have that $(n,n')\in \llbracket E'\rrbracket_G \circ \llbracket E'\rrbracket_G$, or $(n,n')\in \llbracket E''^{*}\rrbracket_G$. By $n\simulates m$ and the prove of $E_1\cdot E_2$, we know that there is $(m,m')\in  \llbracket E'\rrbracket_G \circ \llbracket E'\rrbracket_H$ and $n'\simulates m'$. Applying IH in $|E''^{*}|<i$, there is $(m,m')\in \llbracket E''^{*}\rrbracket_H$ and $n'\simulates m'$. By the two statements above, we have $(m,m')\in \bigcup_{2\leq k}\llbracket E' \rrbracket_H^k$ and $n'\simulates m'$. Thus, we conclude $(m,m')\in \llbracket E \rrbracket_H$ and $n'\simulates m'$.
		\item $E=[E']$. Assume $(n,n)\in \llbracket [E] \rrbracket_G$ and $n\simulates m$. By definition of $[E']$, there is $n'$ such that $(n,n')\in  \llbracket E' \rrbracket_G$. Applying IH, there is $m'$ such that $(m,m')\in \llbracket E' \rrbracket_H$. By definition, $(m,m)\in\llbracket [E'] \rrbracket_H$ and $n\simulates m$. Thus, we conclude that $(m,m)\in \llbracket E \rrbracket_H$.
	\end{enumerate}
\end{proof}
Finally, we claim.
\begin{lemma}
	\label{lem:graph-regular}
	For any two graphs $G$ and $H$, if $G\simulates H$ then if $G\models E$ implies $H\models E$.
\end{lemma}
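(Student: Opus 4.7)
The plan is to derive this lemma as a direct corollary of Lemma~\ref{lem:regular-expression}, which already does the heavy lifting by handling the structural induction on the expression $E$. At the graph level, the statement $G \models E$ unpacks to the existence of a witnessing pair in $\sem{E}_G$, and simulation of $G$ by $H$ gives us a starting node in $H$ from which to invoke the per-node preservation property.

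More concretely, I would first assume $G \simulates H$ and $G \models E$. From $G \models E$ we obtain nodes $n, n' \in \nodes(G)$ with $(n,n') \in \sem{E}_G$. The hypothesis $G \simulates H$ means (per the definition just above Lemma~\ref{lem:regular-is-rub}) that every node of $G$ is simulated by some node of $H$; in particular there is $m \in \nodes(H)$ with $n \simulates m$. Now applying Lemma~\ref{lem:regular-expression} to $n$, $m$ and $n'$ we obtain some $m' \in \nodes(H)$ such that $(m,m') \in \sem{E}_H$. Hence $\sem{E}_H \neq \emptyset$, which is exactly $H \models E$.

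There is essentially no obstacle here beyond bookkeeping: Lemma~\ref{lem:regular-expression} is stated precisely in the form needed, and the only subtlety is recognising that Boolean satisfaction $G \models E$ is equivalent to nonemptiness of $\sem{E}_G$, so a single witness pair transported through the simulation suffices. Together with Lemma~\ref{lem:regular-expression}, this completes the proof of Lemma~\ref{lem:regular-is-rub}, since the two lemmas together establish that forward NREs are robust under simulation in the sense of the definition preceding Lemma~\ref{lem:regular-is-rub}.
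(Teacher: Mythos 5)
Your proof is correct and follows essentially the same route as the paper's: extract a witness pair $(n,n')\in\sem{E}_G$ from $G\models E$, use $G\simulates H$ to find $m$ simulating $n$, and invoke Lemma~\ref{lem:regular-expression} to obtain $(m,m')\in\sem{E}_H$, whence $\sem{E}_H\neq\emptyset$.
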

\begin{proof}
Take any two graphs $G$ and $H$. Assume $G\simulates H$ and $G\models E$. By definition of $G\models E$, $\llbracket E \rrbracket_G \not=\emptyset$, and in consequence there is a pair $(n,n')\in \llbracket E \rrbracket_G$. By $G\simulates H$, there is a node $m$ in $H$ such that $n\simulates m$. By lemma~\ref{lem:regular-expression} and $n\simulates m$ and $(n,n')\in \llbracket E \rrbracket_G$, there is $m'$ of $H$ such that $(m,m') \in \llbracket E \rrbracket_H$. Since $\llbracket E \rrbracket_H \not= \emptyset$, then $H\models E$.
\end{proof}

Take two graphs $G$ and $H$. Assume $G\models E$. By lemma~\ref{lem:graph-regular}, it holds that $H\models E$. Consequently, $E$ is robust under simulation.

\subsubsection{Proof of Theorem~\ref{thm:universal-satisfies-tree}}
Take any consistent instance $I$ of $\Rbold$. Now, we prove that the typed graph $\U = J_0 \cup G_\Sbold$, is a universal simulation solution. First, we define \emph{reachability} from a node $n$ with a path $\pi$ as follows:
\begin{align*}
	&\begin{aligned}
	\pazocal{R}_G(N,p)&{}= \{ n' \mid \exists n\in N.\,\Triple(n,p,n') \in G \}\\
	\pazocal{R}_G^*(n,\pi\cdot p)&{}=\pazocal{R}_G(\pazocal{R}_G^*(n,\pi),p)\\
	\pazocal{R}_G^*(n,\epsilon)&{}=\{n\}
	\end{aligned}
\end{align*}
Then, we extend the canonical function $\Delta$ as follows.
\begin{itemize}
\item $\Delta^*(X,\pi\cdot p)=\Delta(\Delta^*(X,\pi),p)$
\item $\Delta^*(X,\epsilon)=X$
\end{itemize}
Now, we claim \begin{lemma}
	\label{lem:same-path-in-uni-sol}
	For any solution $J$ to $\E$ for $I$ and for any frontier $(n_0,p_0)\in \Fbb$ and for any path in $\U$ of the form $\pi=p_0\cdot p_1 \cdot \ldots \cdot p_k$. Let $X=\Delta^*(\types_{J_0}(n_0),\pi)$. Then 
	\begin{enumerate}
		\item $\pi$ is also in $J$
		\item if $n=\pazocal{R}_\U^*(n_0,\pi)$ then $\types_\U(n)=X$
		\item $\forall m \in \pazocal{R}_J^*(n_0,\pi) .\, X\subseteq \types_J(m)$
	\end{enumerate}
\end{lemma}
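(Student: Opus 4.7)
The plan is to prove the three statements simultaneously by induction on the length $k$ of the path $\pi = p_0 \cdot p_1 \cdots p_k$. Throughout, I use the fact that any solution $J \in \sol_\E(I)$ must contain the core pre-solution $J_0$ (as $J$ satisfies $\Sigma_\st$ and the \TP rules, and chase with full tgds introduces no nulls), so in particular $n_0 \in \nodes(J)$ and $\types_{J_0}(n_0) \subseteq \types_J(n_0)$. I also use the shorthand $X_i = \Delta^*(\types_{J_0}(n_0), p_0 \cdots p_i)$ for the incrementally propagated type set.

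For the base case $k=0$, consider the single-step path $\pi = p_0$. Statement (2) is immediate from the construction of $G_\Sbold$: the frontier generates the edge $(n_0, p_0, X_0)$ with $X_0 = \Delta(\types_{J_0}(n_0), p_0)$, and by definition of $G_\Sbold$ the target node is named by the set $X_0$ and typed with exactly $X_0$. For (1), since $(n_0,p_0) \in \Fbb$, there is $T \in \types_{J_0}(n_0)$ with $\delta(T,p_0) = S^\mu$ and $\mu \in \{\ONE,\PLUS\}$, so the \PE rule $\pe(T,p_0)$ forces $J$ to contain a $p_0$-edge out of $n_0$. Statement (3) follows by applying each \TP rule $\tp(T, p_0, S)$ for every $T \in \types_{J_0}(n_0) \subseteq \types_J(n_0)$ with $\delta(T,p_0)=S^\mu$: any $m$ reached by $p_0$ from $n_0$ in $J$ acquires type $S$, and the union of these $S$'s is exactly $X_0$.

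For the inductive step, suppose the claim holds for $\pi$ of length $k$, and consider the extension $\pi' = \pi \cdot p_{k+1}$. Let $n_k = \pazocal{R}_\U^*(n_0, \pi)$; by IH(2) we have $\types_\U(n_k) = X_k$. Because $\pi'$ is a path in $\U$, there is a $p_{k+1}$-edge from $n_k$, which by the construction of $G_\Sbold$ is only possible when $p_{k+1} \in \Req(X_k)$ and the edge leads to the node $X_{k+1} = \Delta(X_k, p_{k+1})$ whose types are, again by construction, exactly $X_{k+1}$. This establishes IH(2) for $\pi'$. For IH(1) and IH(3) on the side of $J$: by IH(3) every $m_k \in \pazocal{R}_J^*(n_0, \pi)$ satisfies $X_k \subseteq \types_J(m_k)$; since $p_{k+1} \in \Req(X_k)$, some $T \in X_k$ has $\delta(T,p_{k+1}) = S^\mu$ with $\mu \in \{\ONE,\PLUS\}$, so $\pe(T,p_{k+1})$ forces $m_k$ to have a $p_{k+1}$-successor in $J$, giving (1). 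Finally, for any $m_{k+1}$ reached from such $m_k$ by $p_{k+1}$ in $J$, applying every \TP rule $\tp(T, p_{k+1}, S)$ for $T \in X_k$ yields $X_{k+1} = \Delta(X_k, p_{k+1}) \subseteq \types_J(m_{k+1})$, proving (3).

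The main conceptual point (and the only nontrivial step) is the linkage between the two sides of the induction: the fact that $\U$-paths are exactly driven by the $\Req$/$\Delta$ machinery guarantees that every edge we traverse in $\U$ witnesses a $\PE$-obligation on the $J$-side, and IH(3) ensures the relevant types are present in $J$ to fire that obligation. Everything else is a bookkeeping application of the \TP and \PE rules; no nondeterminism or fixpoint argument is needed since the induction tracks only one path at a time.
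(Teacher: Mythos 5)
Your proof is correct and follows essentially the same route as the paper's: a simultaneous induction on the length of $\pi$, using the construction of $G_\Sbold$ (via $\Req$ and $\Delta$) for claim (2), the \PE obligations to force the corresponding edges in $J$ for claim (1), and the \TP rules together with $\types_{J_0}(n_0)\subseteq\types_J(n_0)$ (resp.\ the inductive hypothesis $X_k\subseteq\types_J(m_k)$) for claim (3). No gaps; the argument matches the paper's proof in both structure and substance.
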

\begin{proof}
	Take any solution $J$ for $I$ to $\E$ and any frontier $(n_0,p_0)\in \Fbb$. We prove by induction in the size of the path $\pi$. Let $|\pi|\leq k$. The base case is when $\pi$ is of size 1, i.e. $\pi=p_0$. Let $X=\Delta^*(\types(n_0),\pi)$. Then
		\begin{enumerate}
			\item For case 1. We know  $(n_0,p_0)\in \Fbb$ and because $p_0$ is in $\U$ then $p_0\in \Req(\types_{J_0}(n_0))$. Since $J$ is a solution then the IRIs required by $\types_{J_0}(n_0)$ must be satisfied. Since $p_0\in \Req(\types_{J_0}(n_0))$ then there is $m$ such that $\Triple(n_0,p_0,m) \in J$. Thus, $\pi$ is in $J$.
			\item For case 2. Assume $n=\pazocal{R}_\U^*(n_0,p_0)$. By definition of $\Delta^*$, we have $X=\Delta(\types_{J_0}(n_0),p_0)$. Then, we take a type $T \in \types_\U(n)$, and by definition, we have $T(n)\in G_\Sbold$, $n\in N$ and $T\in n$. By construction of $G_\Sbold$ and $(n_0,p_0)\in \Fbb$, we obtain $n\in N_0$. Also by construction of $N_0$ and $X$ definition, we obtain $n=\Delta(\types_{J_0}(n_0),p_0)=X$. Finally, by construction of $G_\Sbold$, we have that $T\in X.$ Similar process is done in left direction. Thus, we conclude that $\types_\U(n)=X$.
			\item For case 3. Take any $m\in \pazocal{R}_J(n_0,p_0)$, i.e. $\Triple(n_0,p_0,m)\in J$. By definition of $\Delta^*$, we have $X=\Delta(\types_{J_0}(n_0),p_0)$. Then, take any $T\in X$. Since $(n_0,p_0)\in \Fbb$ and $J$ is a solution, then there is a type $T'$ such that $T'(m) \in J$. Let $T'=T$. By $T(m) \in J$, we have that $T\in \types_J(m)$. Thus, we conclude that $X\subseteq \types_J(m)$.
	\end{enumerate}
Now we fix $k>1$ and assume (IH) for any path $\pi$ in $\U$ such that $|\pi|\leq k$ holds that (a) $\pi$ is also valid in $J$, and (b) if $n=\pazocal{R}_\U(n_0,\pi)$ then $\types_\U(n)=X$; and (c) for any $m \in \pazocal{R}_J(n_0,\pi)$ holds that $X\subseteq \types_J(m)$.

Let $|\pi|\leq k+1$. Take any path $\pi$ in $\U$ such that $|\pi|\leq k+1$. Let $X=\Delta^*(\types(n_0),\pi)$ and $\pi=\pi'\cdot p$ such that $|\pi'|\leq k$. We have the following cases:
		\begin{enumerate}			
			
			\item Case 1. By definition of path, there is $n_0,\ldots,n_{k+1}$ such that $(n_{i-1},p_i,n_i)\in \U$ for $i\in\{1\ldots,k,k+1 \}$ and $\pi'=p_1\cdot\ldots\cdot p_k$. Applying IH, we have that $\pi'$ is a path in $J$ and there are $m,m_k$ in $J$ such that  $\Triple(m,p_k,m_k)\in J$. Since $(n_0,p_0)\in \Fbb$ and $p_{k+1}\in\Req(X)$ and $J$ is a solution, then it holds that $\Triple(m_k,p_{k+1},m_{k+1})\in J$. Thus, $\pi$  is a path in $J$.
			\item Case 2. Assume $n=\pazocal{R}_\U^*(n_0,\pi'\cdot p)$. Let $X'=\Delta^*(\types_{J_0}(n_0))$. By definition of $\Delta^*$, it is equivalent to $X=\Delta(X',p)$. Let $n_1=\pazocal{R}_\U^*(n_0,\pi')$. By definition of $\pazocal{R}_\U^*$, we have $\pazocal{R}_\U(n_1,p)=n$. Applying IH, we obtain that $\types_\U(n_1)=X'$, and by definition of path $\pi'\cdot p$, we have that $\Triple(n_1,p,n)\in \U$. Now, we take any $T\in X$. By the statements above and considering that $p\in \Req(X')$ and by construction of $\U$, we have that $T(n)\in \U$, i.e. $T \in \types_\U(n)$. A similar process is done for proving the right direction. Thus, we conclude that $\types_\U(n)=X$.
			\item Case 3. Take $m'\in \pazocal{R}_J^*(n_0,\pi'\cdot p)$. By definition of $\pazocal{R}_J$, there is $\Triple(m,p,m')\in J$ where $m\in \pazocal{R}_J^*(n_0,\pi')$. Now, we take any $T\in X$ and applying the IH, we obtain that there is a type $T'$ such that $T'(m)\in J$. Since $J$ is a solution and $p \in \Req(X)$ and statements above, we have that $T(m')\in J$, i.e. $T\in \types_J(m')$. Thus, we conclude that $X\subseteq \types_J(m)$. 
	\end{enumerate}
\end{proof}
Next, we claim the following.
\begin{lemma}
	\label{lem:universal-simulated-by-solution}
	$\U$ is simulated by every solution $J$ for $I$ to  $\E$.
\end{lemma}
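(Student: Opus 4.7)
The plan is to exhibit an explicit relation $R \subseteq \nodes(\U) \times \nodes(J)$ and check that it is a simulation of $\U$ by $J$. First I observe that since $J$ is a solution it must satisfy $\Sigma_\st$ and the type-propagation rules $\Sigma_\Sbold^\tp$, and by the minimality of $J_0$ we have $J_0 \subseteq J$; in particular every (non-null) node of $J_0$ is already a node of $J$.

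Define $R$ as follows. For every $n \in \nodes(J_0)$ put $(n,n) \in R$. For every null node $X$ of $\U$ coming from $G_\Sbold$ (identified in the construction with a subset of $\Tcal \cup \{\Literal\}$ in $N$) put $(X,m) \in R$ whenever there exist a frontier pair $(n_0,p_0) \in \Fbb$ and a path $\pi = p_0\cdot p_1\cdots p_k$ in $\U$ from $n_0$ to $X$ such that $m \in \pazocal{R}^*_J(n_0,\pi)$. The fact that every null node of $\U$ is reachable in $\U$ from some frontier pair, combined with Lemma~\ref{lem:same-path-in-uni-sol}(1), ensures that every node of $\U$ appears on the left of some pair in $R$, and in particular $\U$ is covered by $R$.

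Next I verify the three conditions of simulation. The identity condition for non-null nodes of $\U$ is immediate from the $(n,n)$ pairs and $J_0 \subseteq J$. The literal/non-literal condition is the place where consistency of $\E$ is really used: if the null node $X$ is a literal, then by the construction of $G_\Sbold$ we must have $X = \{\Literal\}$, and Lemma~\ref{lem:same-path-in-uni-sol}(3) gives $\Literal \in \types_J(m)$, so $m$ is literal; if $X$ is a non-literal (null) IRI, then $X \subseteq \Tcal$ and contains some type $T$, so by the same lemma $T \in \types_J(m)$, and node kind consistency of $\E$ rules out $m$ being a literal.

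For the edge-preservation (zig-zag) condition I proceed by a short case distinction on the origin of the edge $(n,p,n') \in \U$. If the edge lies in $J_0$, then both endpoints are non-null and $J_0 \subseteq J$ provides the matching edge with $(n',n') \in R$. If $n \in \nodes(J_0)$ and the edge is one of the frontier edges of $G_\Sbold$ with $n' = X$, then $(n,p)\in \Fbb$ forces $p \in \Req(\types_{J_0}(n))$, so the \PE rule implies $J$ has some $p$-successor $m'$ of $n$; by construction $X = \Delta(\types_{J_0}(n),p)$, so the path $p$ witnesses $(X,m') \in R$. If both $n=X$ and $n'=X'$ are null nodes of $G_\Sbold$, then $(X,m) \in R$ comes with a witnessing path $\pi$ from some frontier $(n_0,p_0)$, the edge in $\U$ forces $p \in \Req(X)$ with $X' = \Delta(X,p)$, and Lemma~\ref{lem:same-path-in-uni-sol}(3) gives $X \subseteq \types_J(m)$; applying the \PE rule in $J$ to a type in $X$ that requires $p$ yields a $p$-successor $m'$ of $m$, and appending $p$ to $\pi$ gives a witnessing path for $(X',m') \in R$. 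The main obstacle is this last case: keeping the invariant that the second coordinate is always reachable in $J$ from a frontier pair along a path whose $\Delta^*$-image is exactly the associated set of types is what makes the induction close, and it is precisely the content of Lemma~\ref{lem:same-path-in-uni-sol}. Passing finally to the bisimulation quotient $\U_0 = (J_0\cup G_\Sbold)/_\bisimulates$ is harmless since bisimilar nodes can be collapsed in the first coordinate of $R$ without affecting any of the conditions above.
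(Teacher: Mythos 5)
Your proposal is correct and follows essentially the same route as the paper's proof: you build the same relation $R$ (identity on $J_0$ together with pairs matched along frontier-rooted paths), and you discharge the zig-zag condition via the same case analysis relying on Lemma~\ref{lem:same-path-in-uni-sol}. The only additions are the explicit treatment of the node-kind condition and the closing remark on the quotient, both of which are consistent with (and slightly more detailed than) the paper's argument.
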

\begin{proof}
	We construct 
	\begin{multline*}
	R=\{(n,m)\in {J_0}\times {J_0}\mid n=m \}\cup{}\\\{(n,m)\mid \exists (n_0,p_0)\in \Fbb .\, \exists\pi=p_0\cdot p_1\cdot \ldots \cdot p_k .\ n \in \pazocal{R}_\U(n_0,\pi) \land m\in \pazocal{R}_J(n_0,\pi) \}.
	\end{multline*}
	We show that $R$ is a simulation of $\U$ by $J$.
	Then, we take any pair $(n,m)\in R$ and $p\in \Iri$. We have the following cases:(a) $n \in {J_0} \land (n,p)\not\in \Fbb$ and (b) $(n,p)\in \Fbb \lor n \in \nodes(G_\Sbold)$.
	
	For case a. We know that $(n,m) \in {J_0}\times {J_0}$ and $n=m$. We take $n'\in \nodes({J_0})$ such that $\Triple(n,p,n')\in {J_0}$. As a result of consider  $m'=n'$, we obtain $m'\in \nodes({J_0})$ then $m'\in \nodes(J)$. Since $\Triple(n,p,n') \in {J_0}$, we have $\Triple(m,p,m') \in {J_0}$, and by $(n',m') \in {J_0}\times {J_0}$, we conclude $(n',m')\in R$.
	
	For case b. We prove only when $n \in \nodes(G_\Sbold)$ since the other is implied by this proof. By $n\in \nodes(G_\Sbold)$ and $(n,m)\in R$, we have that $n=\pazocal{R}_\U^*(n_0,\pi)$ where $\pi=p_1\cdot \ldots \cdot p_k$ and $(n_0,p_1)\in \Fbb$ and $m\in \pazocal{R}_J(n_0,\pi)$. Then, we take $p,n'$ such that $\Triple(n,p,n')\in \U$, i.e. $n'\in \pazocal{R}_\U^*(n_0,\pi\cdot p)$ and $\pi\cdot p$ is valid in $\U$. By lemma~\ref{lem:same-path-in-uni-sol}, we have $\pi\cdot p$ is a path in $J$, i.e. there is a node $m'\in \pazocal{R}_J(n_0,\pi\cdot p)$. Thus, we conclude that $(n',m')\in R$.
\end{proof}
By lemma~\ref{lem:universal-simulated-by-solution}, $\U$ is a universal simulation solution. Then the proof is relatively straightforward, and for the \emph{if} part, it suffices
to use Lemma~\ref{lem:regular-is-rub} and for the \emph{only if} part, it
suffices to notice that a universal simulation solution is also a solution. 
\subsubsection{Proof of Theorem~\ref{thm:quotient-solution-exp}}
Before proving this theorem, we show that $\U_0$ is indeed the minimal universal simulation solution. We take
any universal simulation solution $\U$ and create an injective mapping from the
nodes of $\U_0$ to the nodes of $\U$. The mapping is an identity on $J_0$ which
is contained in any solution. Now, for a node $n$ of $G_\Sbold/\bisimulates$ we
observe that there must be at least one path $\pi$ from a frontier node $n_0$ to
$n$, and because $\U$ is simulated in $\U_0$, there exists at least one node $m$
in $\U$ that is reachable from $n_0$ by path $\pi$. Consequently, we map $n$ to
an arbitrary such $m$. Now, suppose that two different nodes $n_1$ and $n_2$ of
$G_\Sbold/\bisimulates$ are mapped to the same node $m$. Because $\U_0$ is a
bisimulation quotient and the nodes $n_1$ and $n_2$ are different, they are not
bisimilar. However, since $\U_0$ is simulated by $\U$, and vice versa, and $n_1$
is reachable with the same path in $\U_0$ as $m$ in $\U$ and $n_1$ is reachable
with the same path in $\U_0$ as $m$ in $\U$, $n_1$ is bisimilar to $m$ and $m$
is bisimilar to $n_2$. By transitivity of bisimulation, we get that
$n_1\bisimulates n_2$, a contradiction.

Take an instance $I$ of $\Rbold$. We construct a typed graph as follows $\U_0={J_0}\cup G_\Sbold/_\bisimulates$ where $G_\Sbold/_\bisimulates$ is the bisimulation quotient of $G_\Sbold$. We claim that
\begin{lemma}
	\label{lem:exist-universal}
	$\U_0$ is a universal simulation solution.
\end{lemma}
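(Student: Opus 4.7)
The plan is to reduce Lemma~\ref{lem:exist-universal} to the already established Lemma~\ref{lem:universal-simulated-by-solution}, which asserts that $\U = J_0 \cup G_\Sbold$ is simulated by every solution $J$ to $\E$ for $I$. Since the paper observes that $J_0$ has no null nodes, the typed graph $\U_0 = J_0 \cup G_\Sbold/_\bisimulates$ can equivalently be written as the bisimulation quotient $\U/_\bisimulates$. By construction, $\U$ and $\U/_\bisimulates$ are bisimilar, and a bisimulation gives rise to two simulations, one in each direction. I will exhibit the simulation from $\U_0$ to $\U$ and compose it with the one obtained from Lemma~\ref{lem:universal-simulated-by-solution} to get the required simulation from $\U_0$ into every solution.

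Concretely, I would define $R_1 = \{(\eta_C, m) \mid C \in \nodes(\U)/_\bisimulates,\ m \in C\}$, where $\eta_C$ is the chosen class representative, which lives in both $\U$ and $\U_0$. The kind and value conditions of simulation are met because non-null nodes are bisimilar only to themselves, so pairs $(\eta_C, m) \in R_1$ with non-null $\eta_C$ are forced to be diagonal. For the edge-transfer condition, take an outgoing edge $(\eta_C, p, \eta_{C'}) \in \U_0$; by definition of the quotient it arises from some $(n, p, m) \in \U$ with $n \in C$ and $m \in C'$. Given any $(\eta_C, m_0) \in R_1$, i.e.\ $m_0 \in C$, bisimilarity between $m_0$ and $n$ in $\U$ transfers $(n, p, m)$ to an edge $(m_0, p, m_0') \in \U$ with $m_0' \bisimulates m$; hence $m_0' \in C'$ and $(\eta_{C'}, m_0') \in R_1$, as needed.

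Finally, given an arbitrary solution $J$, Lemma~\ref{lem:universal-simulated-by-solution} supplies a simulation $R_2$ from $\U$ to $J$, and the relational composition $R_1 \circ R_2$ is a simulation from $\U_0$ to $J$: the three conditions chain through the intermediate witness in $\U$ in an essentially mechanical manner. This proves that $\U_0$ is simulated by every solution, hence is a universal simulation solution. The main obstacle is checking the edge-transfer for $R_1$ at the boundary between $J_0$ and the quotient part of $G_\Sbold$, but this is immediate because frontier nodes of $J_0$ are non-null, lie in singleton equivalence classes, and coincide with their own representatives, so that the transfer there reduces to a trivial identity step.
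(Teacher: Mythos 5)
Your proof is correct, but it is organized differently from the paper's. The paper proves Lemma~\ref{lem:exist-universal} by directly exhibiting a simulation of $\U_0$ by an arbitrary solution $J$: it defines the relation pairing an equivalence class $C$ of $G_\Sbold/_\bisimulates$ with every node of $J$ reachable by a common path $\pi$ from a frontier pair $(n_0,p_0)\in\Fbb$, and then re-runs the path-transfer argument of Lemma~\ref{lem:same-path-in-uni-sol}, essentially repeating the proof of Lemma~\ref{lem:universal-simulated-by-solution} with representatives $\eta_C$ in place of nodes. You instead factor the claim: you build a simulation $R_1$ of $\U_0=\U/_\bisimulates$ by $\U$ from the bisimulation classes (using that non-null nodes are bisimilar only to themselves, so the value and kind conditions are automatic, and that a quotient edge $(\eta_C,p,\eta_{C'})$ lifts to an edge of $\U$ which bisimilarity transfers to any member of $C$), and then compose with the simulation $R_2$ of $\U$ by $J$ supplied by Lemma~\ref{lem:universal-simulated-by-solution}. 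The closure of simulations under relational composition is not stated in the paper but is immediate from the three defining conditions, as you indicate. Unwinding your $R_1\circ R_2$ yields essentially the same relation the paper writes down, but your decomposition avoids redoing the path induction, isolates the reusable fact that a bisimulation quotient is simulated by (indeed bisimilar to) the original graph, and correctly disposes of the only delicate point, namely the boundary at the frontier, by noting that $J_0$-nodes are non-null and hence sit in singleton classes equal to their own representatives.
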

\begin{proof}
	The proof is similar to Lemma~\ref{lem:universal-simulated-by-solution}. We construct a relation as follows:
	\begin{align*}
	R=\{(n,m)\in {J_0}\times {J_0}\mid n=m \}\cup\{(C,m)\mid \exists (n_0,p_0)\in \Fbb .\, \exists\pi=p_0\cdot p_1\cdot \ldots \cdot p_k .\,\\ \exists n\in C. \,n \in \pazocal{R}_\U(n_0,\pi) \land m\in \pazocal{R}_J(n_0,\pi) \}.
	\end{align*}
	The first case is the same as lemma~\ref{lem:universal-simulated-by-solution}. The second case is proven when $C \in \nodes(G_\Sbold)/_\bisimulates$. Since $C\in \nodes(G_\Sbold)/_\bisimulates$ and $(C,m)\in R$, then there is $n\in C$ such that $n=\pazocal{R}_\U^*(n_0,\pi)$ where $\pi=p_1\cdot \ldots \cdot p_k$ and $(n_0,p_1)\in \Fbb$ and $m\in \pazocal{R}_J(n_0,\pi)$. Now, we take $p\in \Iri ,C'\in \nodes(G_\Sbold)/_\bisimulates$ such that $(\eta_C,p,\eta_C')\in \U_0$, i.e., there are $n\in C$ and $n'\in C'$ such that $(n,p,n')\in G_\Sbold$. From this fact, we have that $n'\in \pazocal{R}_{\U_0}^*(n_0,\pi\cdot p)$ and $\pi\cdot p$ is valid in $\U_0$. By lemma~\ref{lem:same-path-in-uni-sol}, $\pi\cdot p$ is valid in $J$, i.e., there is a node $m'\in \pazocal{R}_J(n_0,\pi\cdot p)$. As a consequence, we conclude that $(C',m')\in R$ yielding that $\U_0$ is a universal simulation solution.
\end{proof}
Next, we claim.
\begin{lemma}
	\label{lem:universak-great-size}
	For any universal simulation solution $\U$ it holds, $|\U|\geq|\U_0|$ 
\end{lemma}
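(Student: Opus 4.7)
The plan is to construct an injective, edge-preserving map $\iota:\nodes(\U_0)\to\nodes(\U)$. Injectivity on nodes plus preservation of edges (the latter being free since $\U_0$ is simulated by $\U$, which transports each triple of $\U_0$ to a corresponding triple on the images in $\U$) will give $|\U_0|\le|\U|$. The construction proceeds in two stages.

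First, I would set $\iota$ to be the identity on $\nodes(J_0)\subseteq\nodes(\U_0)$. This is well-defined because every universal simulation solution is in particular a solution and therefore contains the core pre-solution $J_0$; moreover the nodes of $J_0$ are all non-null, and the first clause in the definition of simulation forces a non-null node to be related only to itself in any simulation. Second, for a bisimilarity class $C\in\nodes(G_\Sbold/_\bisimulates)$ I would use a path witness: by the iterative definition of $G_\Sbold$ there is a frontier pair $(n_0,p_0)\in\Fbb$ and a word $\pi=p_0p_1\cdots p_k$ with $\eta_C\in\pazocal{R}^*_{\U_0}(n_0,\pi)$. Since $\U_0$ is a universal simulation solution and $\U$ is a solution, $\U_0$ is simulated by $\U$; since $n_0$ is non-null, the pair $(n_0,n_0)$ sits in the maximal such simulation, and walking $\pi$ step by step using the simulation clause yields some $m\in\pazocal{R}^*_\U(n_0,\pi)$ with $\eta_C$ simulated by $m$. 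Set $\iota(C)=m$, choosing one such $m$ arbitrarily.

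The hard part will be proving injectivity. Suppose $\iota(C_1)=\iota(C_2)=m$ with $C_1\neq C_2$; then $m$ must be null, otherwise the identity stage forces $C_1$ and $C_2$ to coincide. Using the hypothesis that $\U$ is \emph{also} a universal simulation solution, $\U$ is simulated by the solution $\U_0$, so $m$ is simulated in turn by some $\eta_{C'}$ of $\U_0$. My plan is to read the paths $\pi_1,\pi_2$ witnessing $\iota(C_1)=\iota(C_2)=m$ forward from $\U_0$ into $\U$ via the first simulation and then back into $\U_0$ via the second, arriving at $\eta_{C'}$ in both cases. Composing the two simulations along every continuation of $\pi_i$ shows that $\eta_{C_i}$ and $\eta_{C'}$ match each other's outgoing edges; the converse matching is forced by the deterministic, tree-like structure of the $N_i$-construction (each type-set $X$ has a unique outgoing $p$-edge, leading to $\Delta(X,p)$), so the relation we obtain is in fact a bisimulation inside $\U_0$. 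Transitivity of $\bisimulates$ then yields $\eta_{C_1}\bisimulates\eta_{C_2}$, contradicting that $C_1,C_2$ are distinct equivalence classes of $G_\Sbold/_\bisimulates$. The main obstacle is precisely this step, namely upgrading the pair of one-way simulations into a genuine bisimulation on $\U_0$; I expect that it hinges entirely on the determinism of $\Delta(\cdot,\cdot)$ and on the fact that bisimulation quotienting was already applied to $G_\Sbold$, so that no two classes in $\nodes(G_\Sbold/_\bisimulates)$ can pass this test.
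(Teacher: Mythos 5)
Your proposal follows essentially the same route as the paper's own proof: the identity embedding on $J_0$, the path-witness assignment for the classes of $G_\Sbold/_\bisimulates$, and the injectivity argument that derives $\eta_{C_1}\bisimulates\eta_{C_2}$ from the mutual simulation between $\U_0$ and $\U$ together with transitivity of bisimulation, contradicting the quotient construction. The step you flag as the main obstacle --- upgrading the two one-way simulations into a genuine bisimulation --- is asserted with no more detail in the paper's proof than in yours, so the two arguments match in both structure and level of rigor.
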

\begin{proof}
We take any universal simulation solution $\U$ and create an injective
mapping from the nodes of $\U_0$ to the nodes of $\U$. The
mapping is an identity on $J_0$ which is contained in any universal simulation
solution. Now for a node $n$ of $G_\Sbold/\bisimulates$ we observe that there
must be at least one path $\pi$ from a frontier node $n_0$ to $n$, and because
$\U$ is simulated in $\U_0$, there exists at least one node $m$
in $\U$ that is reachable from $n_0$ by path $\pi$. Consequently, we map
$n$ to an arbitrary such $m$. Now, suppose that two different nodes $n_1$ and
$n_2$ of $G_\Sbold/\bisimulates$ are mapped to the same node $m$. Because
$\U_0$ is a bisimulation quotient and the nodes $n_1$ and $n_2$ are
different, they are not bisimilar. However, since $\U_0$ is simulated by
$\U$, and vice versa, and $n_1$ is reachable is reachable with the same
path in $\U_0$ as $m$ in $\U$ and $n_1$ is reachable with the
same path in $\U_0$ as $m$ in $\U$, $n_1$ is bisimilar to $m$
and $m$ is bisimilar to $n_2$. By transitivity of bisimulation, we get that
$n_1\bisimulates n_2$, a contradiction.	
\end{proof}
Finally, we claim.
\begin{lemma}
	\label{lem:size-universal-simulation}
	There is a polynomial formula such that for any $n,m\in \mathbb{N}$, there exists a data exchange setting $\E$ and instance $I$ of $\Rbold$ such that the size of $\U_0$ is asymptotic to $\exp(m)$ and it holds $|\E|+|I|\leq\mathit{poly}(n)$ where $\mathit{poly}(n)$ is the polynomial formula.
\end{lemma}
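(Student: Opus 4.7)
The plan is to exhibit an explicit family of constructive data exchange settings $(\E_m, I_m)$, parameterised by $m\in\mathbb{N}$, whose input size is polynomial in $m$ but whose minimal universal simulation solution $\U_0$ has size exponential in $m$. The idea, as hinted, is to encode a counter modulo $P=p_1 p_2 \cdots p_m$ (the primorial of the first $m$ primes) inside the iterated $\Delta$-construction: by the Prime Number Theorem $P=e^{(1+o(1))m\ln m}$, which dominates $\exp(m)$, while the schema itself needs only $\sum_i p_i=O(m^2\log m)$ types. The Chinese Remainder Theorem will guarantee that the $P$ reachable type-sets are pairwise distinct, and a short argument using \emph{marker} predicates will guarantee that the bisimulation quotient does not collapse them.

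Concretely, I would take $\Rcal=\{R\}$ unary with no functional dependencies, $I=\{R(c)\}$, and a single unary IRI constructor $f$. The target schema is $\Sbold=(\Tcal,\delta)$ with $\Tcal=\{T_i^j\mid 1\le i\le m,\ 0\le j<p_i\}$ and predicates $\mathrm{next}$ and $\mathrm{mark}_1,\ldots,\mathrm{mark}_m$. The shape constraints are $\delta(T_i^j,\mathrm{next})=(T_i^{(j+1)\bmod p_i})^{\ONE}$ for all $i,j$, and $\delta(T_i^0,\mathrm{mark}_i)=\Literal^{\ONE}$ for all $i$, with all other entries of $\delta$ undefined. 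The st-tgds are $R(x)\Rightarrow T_i^0(f(x))$ for $i=1,\ldots,m$. Then the core pre-solution $J_0$ contains the single non-literal node $n_0=f(c)$ with $\types_{J_0}(n_0)=X_0:=\{T_1^0,\ldots,T_m^0\}$ and no outgoing edges, so $(n_0,\mathrm{next})$ and every $(n_0,\mathrm{mark}_i)$ belong to $\Fbb$.

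The analysis reduces to tracking the iteration $N=\bigcup_i N_i$. Letting $X_k:=\{T_i^{k\bmod p_i}\mid 1\le i\le m\}$, one checks directly that $\Delta(X_k,\mathrm{next})=X_{k+1}$; that $\mathrm{next}\in\Req(X_k)$ always; and that $\mathrm{mark}_i\in\Req(X_k)$ iff $p_i\mid k$, in which case $\Delta(X_k,\mathrm{mark}_i)=\{\Literal\}$. An easy induction then gives $N=\{X_0,X_1,\ldots,X_{P-1}\}\cup\{\{\Literal\}\}$. By CRT the sets $X_k$ are pairwise distinct, so $|N|=P+1$. Next I verify that the bisimulation quotient does not collapse the $X_k$'s in $G_\Sbold$: given $k\not\equiv k'\pmod P$, pick $i$ with $k\not\equiv k'\pmod{p_i}$ and let $s\in\{0,\ldots,p_i-1\}$ be the unique shift with $k+s\equiv 0\pmod{p_i}$; then $k'+s\not\equiv 0\pmod{p_i}$, so after $s$ applications of $\mathrm{next}$ one of the two states exposes a $\mathrm{mark}_i$-edge while the other does not, witnessing non-bisimilarity. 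Hence $|\U_0|=\Theta(P)$.

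Finally, a coarse count of symbols gives $|\E_m|+|I_m|=O(\sum_i p_i+m)=O(m^2\log m)$ by the Prime Number Theorem, so taking the polynomial $\mathrm{poly}(n)=C\,n^2\log n$ (for a suitable absolute constant $C$) and setting $n=m$ yields the desired statement, since $|\U_0|\ge P\gg\exp(m)$. The delicate step is the bisimulation argument of the previous paragraph: without the $\mathrm{mark}_i$ predicates each $X_k$ would have a unique $\mathrm{next}$-successor and nothing else, making all $X_k$ bisimilar and collapsing the quotient to a single node; the markers make the phase of each cycle \emph{observable}, and CRT converts this local observability into $P$ pairwise-distinguishable equivalence classes. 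I expect this combinatorial distinguishability verification, rather than the arithmetic size count, to be the main obstacle.
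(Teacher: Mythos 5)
Your construction is essentially the paper's: cycles of pairwise coprime (prime) lengths attached to a single constructed node, the Chinese Remainder Theorem to show the iterated $\Delta$-construction cycles with period equal to the primorial, and the Prime Number Theorem to bound the schema size by $\sum_i p_i = O(m^2\log m)$ while the solution has size $\Theta(p_1\cdots p_m)$. Your write-up is, if anything, tighter on two points where the paper's sketch is loose: (i) you keep the schema deterministic by giving each prime its own family of types $T_i^j$ and starting the node in all of $T_1^0,\ldots,T_m^0$ via $m$ st-tgds, whereas the paper's figure attaches all cycles to one type $T$ with several $o$-successors; and (ii) you make explicit, via the $\mathrm{mark}_i$ predicates and the shift argument, why the bisimulation quotient does not collapse the $P$ type-sets --- the paper relies on the extra $b,c,d$ edges in its figure for the same purpose but does not spell out the distinguishability argument. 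The only cosmetic mismatch is that your lower bound $e^{(1+o(1))m\ln m}$ overshoots the paper's stated $\exp(m)=2^{2m/3}$, but this only strengthens the claim.
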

\begin{proof}
	Let $I=\{ R(1)\}$ and $\Sigma_\st$ contains only $R(x)\Rightarrow T(f(x))$ and $\Sbold$ be as in Figure~\ref{fig:shex-graph-size} that contain cycles of length $2,3,5,\ldots,$ prime numbers with one shape type name different such as $T_{23}, T_{34},$ and $T_{56}$. Let $P_m$ stands for the $m$-th prime number for $m\in \mathbb{N}$. When constructing the universal simulation solution $\U_0$ we can observe that  $|\U_0|\equiv 1(\mod 2)$ and $|\U_0|\equiv 1 (\mod 3)$ and so on. Then, we can apply the chinese reminder theorem such that $|\U_0|\equiv 1 (\mod k)$ such that $k=2*3\ldots*P_m$. The product of $m$ prime numbers is approximately $2*3*\ldots*P_m\leq 2^{2m}$.
		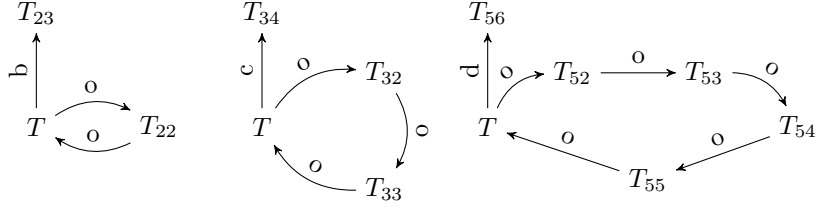
\begin{figure}
			\begin{tikzpicture}[->,>=stealth']
			\node (21) at (0,0) {$T$};
			\node[right=of 21](22)  {$T_{22}$};
			\node[above= of 21] (23){$T_{23}$};
			\node (31) at (3,0) {$T$};
			\node [above right=2mm and 10mm of 31](32) {$T_{32}$};
			\node [below=of 32](33) {$T_{33}$};
			\node[above= of 31] (34){$T_{34}$};
			\node (51) at (6,0) {$T$};
			\node [above right=2mm and 5mm of 51](52) {$T_{52}$};
			\node [right=of 52](53) {$T_{53}$};
			\node [right=3.5cm of 51](54) {$T_{54}$};
			\node [below right=2mm and 1.5cm of 51](55) {$T_{55}$};
			\node[above= of 51] (56){$T_{56}$};
			\path (21) edge[bend left] node[above,sloped]{o}  (22)
			(22) edge[bend left] node[above,sloped]{o}  (21);
			\path (31) edge[bend left] node[above,sloped]{o}  (32)
			(32) edge[bend left] node[above,sloped]{o}  (33)
			(33) edge[bend left] node[above,sloped]{o}  (31);
			\path (51) edge[bend left] node[above,sloped]{o}  (52)
			(52) edge[ left] node[above,sloped]{o}  (53)
			(53) edge[bend left] node[above,sloped]{o}  (54)
			(54) edge[ left] node[above,sloped]{o}  (55)
			(55) edge[ left] node[above,sloped]{o}  (51);
			\path (21) edge[left] node[above, sloped]{b} (23);
			\path (31) edge[left] node[above, sloped]{c} (34);
			\path (51) edge[left] node[above, sloped]{d} (56);
			\end{tikzpicture}
			\caption{Shape Schema Graph}
			\label{fig:shex-graph-size}
		\end{figure}
	We compute the size of the universal simulation solution $\U$ using the prime number counting function, denoted by $\pi(m)$ that  counts the number of primes less or equal to $m\in \mathbb{N}$. It follows from the prime number theorem that for all $m\in \mathbb{N},\pi(m) \sim m/\log(m)$, i.e.,$\lim\limits_{m \to \infty} (\pi(m)\times\log(m)/m)=1$. The prime number theorem guarantees that the set of all natural numbers up to a fixed size asymptotically contains an exponential number of prime number. By the prime number theorem, $P_m$ is asymptotic to $m*\log m$ as $m\to \infty$. The sum of $m$ prime numbers is 
	\begin{align*}
	2+3+\ldots+P_m &\leq m*P_m\\
	&\leq m*m*\log m\\
	&\leq m^3
	\end{align*}
	Let $n=|I|$. Since the application of $\Sigma_\st$ is founding an homomorphism in every tuple of $I$, in the worst case we can have that the size of the core pre-solution $|J_0|\leq n^2$. Let $\mathit{poly}(n)=n^2+1$. Finally, we get for $m,n\in \mathbb{N}$, the size of $|\U_0|\leq n^2+2^{2m/3}$. Let $\exp(m)=2^{2m/3}$. Thus, $\U_0$ is asymptotic to $\exp(m)$ and $|\E|+|I|\leq\mathit{poly}(n)$.	
\end{proof}
By lemma~\ref{lem:size-universal-simulation}, the size of $\U_0$ is bounded by a
polynomial in the size of $I$ and an exponential function in the size of
$\Sbold$. Since $\U_0$ exists, then we can construct a size-minimal universal simulation solution.

\subsubsection{Proof of Theorem~\ref{thm:complexity}}

Let $\E$ be a constructive data exchange setting with fixed IRI constructors, and $Q$ a regular acyclic pattern. Take any instance $I$ of $\Rbold$. 

Following the semantics of nSPARQL \cite{PEREZ2010255}, we find equivalences to NRE$^\rightarrow$ as follows:
\begin{align*}
&
\begin{aligned}
\llbracket \epsilon \rrbracket_G &{}=\llbracket \mathsf{self} \rrbracket_G ,
&
\llbracket [E] \rrbracket _G&{}= \llbracket \mathsf{self}\dbl[exp] \rrbracket _G,
\\
\llbracket p \rrbracket_G&{}=\llbracket \mathsf{next}\dbl a \rrbracket_G,
&
\llbracket E_1+E_2 \rrbracket_G &{}= \llbracket exp_1|exp_2 \rrbracket_G,
\\
\llbracket \Box \rrbracket_G&{}=\llbracket \mathsf{next} \rrbracket_G,
&
\llbracket E_1\cdot E_2 \rrbracket _G& {}=\llbracket exp_1/ exp_2 \rrbracket _G,
\\
\llbracket \langle \ell \rangle \rrbracket _G&{}=\llbracket \mathsf{self}\dbl a \rrbracket _G,
&
\llbracket E^* \rrbracket _G&=\llbracket exp^* \rrbracket _G,
\\
\end{aligned}
\end{align*}
where $\mathsf{next},\mathsf{self}$ are navigational axes that nSPARQL uses and $exp$ is an expression in nSPARQL where the axis of expression $\in \{\mathsf{next},\mathsf{self}\} $.

We use the polynomial decision algorithm presented in section 3.1 of \cite{PEREZ2010255} for evaluating $Q$ in $I$ w.r.t. $\E$. 
It is known in \cite{PEREZ2010255} that the evaluation of a query in a graph is $O(|G|\cdot |exp|)$. By theorem~\ref{thm:quotient-solution-exp}, we construct a size-minimal universal simulation solution for $I$ to $\E$ such that its size bounded by a polynomial in the size of $I$ and an exponential function in the size of $\Sbold$. Since $\Sbold$ is fixed and because the universal simulation solution is bounded, the data complexity is $O(|\U_0|\cdot |E|)$, which is $O((n^2+2^{c*m})\cdot |E|)$ where $n$ is the size of $I$ and $m$ is the number of shape types.


\subsection{Proof of Proposition~\ref{prop:full-nre-intractable}}
The proof is by reduction of intersection non-emptiness of $n$ regular
expressions $E_1,\ldots,E_n$ over $\Sigma$. Indeed, we only need a simple schema
$\delta(T,a)=T^\PLUS$ for $a\in\Sigma$, a single st-tgd
$R(x)\Rightarrow T(f(x))$, and a instance $I=\{R(0)\}$. If we let $\#=f(0)$,
then \emph{true} is the consistent answer to
$Q=\#\cdot \Box^* \cdot [E_1^-\cdot\#^-]\cdot \ldots\cdot [E_n^-\cdot\#^-]$ if
and only if $E_1\cap\ldots\cap E_n$ is nonempty.




\end{document}